\newcommand   \Ie{{\sl I.e.\ }}
\newcommand   \ie{{\sl i.e.\ }}
\newcommand   \eg{{\sl e.g.\ }}
\newcommand   \resp{{\sl resp.\ }}
\newcommand\textbfit[1]{{\bf\em #1}}
\newcommand{\concept}[1]{\textbfit{\boldmath #1}}
\newcommand{\defin}[1]{\concept{#1}}
        \newcommand\makeset[1]{\{\,#1\,\}}
\newcommand\anglebra[1]{\langle\, #1 \,\rangle}
\renewcommand\anglebra[1]{(#1)}
        \newcommand\domain[1]{{\sf dom}(#1)}
        \newcommand\mng[1]{{\mathopen{[\![}\,#1\,\mathclose{]\!]}}}
\newcommand\dom[1]{{\sf Dom}(#1)}
\newcommand\var{\mathit{Var}}
\newcommand\arity[1]{{\it ar}(#1)}
\newcommand{\rscheme}{\mathcal{S}}
\newcommand\Op[2]{\mathit{Op}_{#1}^{#2}}
\newcommand\topone[1]{{\mathit top}_1\,{#1}}
\newcommand\pushone[1]{{\mathit push}_1^{#1}}
\newcommand\popone{{\mathit pop}_1}
\newcommand\collapse{{\mathit collapse}}
\newcommand\pushlk[2]{{\mathit push}_{1}^{#2,#1} }
\newcommand\popn[1]{{\mathit pop}_{#1}}
\newcommand\pushn[1]{{\mathit push}_{#1}}
\newcommand\topn[1]{{\mathit top}_{#1}}
\newcommand{\toprew}[1]{\mathit rew_1^{#1}}
\newcommand\mksk[1]{\hbox{\tt{[}} #1 \hbox{\tt{]}}}
\newcommand\order[1]{{\mathit ord}(#1)}
\newcommand\id{\mathit{id}}
\newcommand\term[1]{{\mathcal{T}}(#1)}
\newcommand\tree[1]{{\mathcal{T}}^\infty(#1)}
\newcommand\stack{s}
\newcommand\varstack{t}
\renewcommand{\epsilon}{\varepsilon}
\renewcommand{\phi}{\varphi}
\newcommand{\Eloise}{\'Elo\"ise\xspace}
\newcommand{\Abelard}{Abelard\xspace}
\renewcommand{\phi}{\varphi}
\newcommand{\VE}{V_\Ei}
\newcommand{\VA}{V_\Ai}
\newcommand{\Ei}{\mathrm{E}}
\newcommand{\Ai}{\mathrm{A}}
\newcommand{\strat}{\phi}
\newcommand{\arena}{\mathcal{G}}
\newcommand{\game}{\mathbb{G}}
\newcommand{\pgame}{\mathbb{G}}
\newcommand{\play}{\lambda}
\newcommand{\WC}{\Omega}
\newcommand{\col}{\rho}
\newcommand{\colors}{C}
\newcommand{\exptime}{\textsc{ExpTime}}
\newcommand{\eat}[1]{}
\newcommand\mklksk[2]{\hbox{\tt{[}} #2 \nd(#1){\hbox{\tt{]}}}}
\newcommand{\transgraph}[1]{\mathrm{Graph}(#1)}
\renewcommand{\path}{\pplay}
\newcommand{\langweps}[1]{L(#1)}
\newcommand{\silent}{\mathfrak{e}}
\newcommand{\racine}{\mathfrak{r}}
\newcommand{\era}[1]{\overset{#1}{\longrightarrow}}
\newcommand{\eRa}[1]{\overset{#1}{\Longrightarrow}}
\newcommand{\unfold}[1]{\mathrm{Tree}(#1)}
\newcommand{\truth}[2]{|\!|#1|\!|_{#2}}
\renewcommand{\truth}[2]{[\![#2]\!]_{#1}}
\newcommand{\marked}[1]{\underline{#1}}
\newcommand{\treemarkednode}[2]{#1_{#2}}
\newcommand{\Acc}{\mathrm{Acc}}
\newcommand{\arobase}{{\symbol{64}}}
\definecolor{LimeGreen}{rgb}{0.2, 0.8, 0.2}
\definecolor{Crimson}{rgb}{0.86, 0.08, 0.24}
\definecolor{Orange}{rgb}{0.98, 0.6, 0.01}
\definecolor{Blue}{rgb}{0, 0.5, 1}
\begin{document}

\theoremstyle{acmdefinition}
\newtheorem{remark}[theorem]{Remark}
\newtheorem{property}[theorem]{Property}

\title{Higher-Order Recursion Schemes and Collapsible Pushdown Automata: Logical Properties}

\author{Christopher~H. Broadbent}
\affiliation{%
  \institution{Department of Computer Science, University of Oxford}
  \streetaddress{Wolfson Building, Parks Road}
  \city{Oxford}
  \country{UK}
}
\email{chbroadbent@gmail.com}

\author{Arnaud Carayol}
\affiliation{%
  \institution{LIGM, Univ Gustave Eiffel, CNRS}
  \streetaddress{5 boulevard Descartes — Champs sur Marne}
  \city{Marne-la-Vallée Cedex 2}
  \postcode{77454}
  \country{France}
}
\email{Arnaud.Carayol@univ-mlv.fr}

\author{C.-H. Luke Ong}
\affiliation{%
  \institution{Department of Computer Science, University of Oxford}
  \streetaddress{Wolfson Building, Parks Road}
  \city{Oxford}
  \country{UK}
}
\email{Luke.Ong@cs.ox.ac.uk}

\author{Olivier Serre}
\orcid{0000-0001-5936-240X}
\affiliation{%
  \institution{Université de Paris, IRIF, CNRS}
  \streetaddress{Bâtiment Sophie Germain, Case courrier 7014, 
8 Place Aurélie Nemours}
  \city{Paris Cedex 13}
  \postcode{75205}
  \country{France}}
\email{Olivier.Serre@cnrs.fr}

\begin{abstract}
This paper studies the logical properties of a very general class of infinite ranked trees, namely those generated by higher-order recursion schemes. We consider, for both monadic second-order logic and modal $\mu$-calculus, three main problems: model-checking, logical reflection (aka global model-checking, that asks for a finite description of the set of elements for which a formula holds) and selection (that asks, if exists, for some finite description of a set of elements for which an MSO formula with a second-order free variable holds). For each of these problems we provide an effective solution. This is obtained thanks to a known connection between higher-order recursion schemes and collapsible pushdown automata and on previous work regarding parity games played on transition graphs of collapsible pushdown automata.
\end{abstract}

\acmSubmissionID{}

%
%

\begin{CCSXML}
<ccs2012>
   <concept>
       <concept_id>10003752.10003766.10003770</concept_id>
       <concept_desc>Theory of computation~Automata over infinite objects</concept_desc>
       <concept_significance>500</concept_significance>
       </concept>
   <concept>
       <concept_id>10003752.10003766.10003771</concept_id>
       <concept_desc>Theory of computation~Grammars and context-free languages</concept_desc>
       <concept_significance>500</concept_significance>
       </concept>
   <concept>
       <concept_id>10003752.10003766.10003773</concept_id>
       <concept_desc>Theory of computation~Automata extensions</concept_desc>
       <concept_significance>500</concept_significance>
       </concept>
   <concept>
       <concept_id>10003752.10003790.10011192</concept_id>
       <concept_desc>Theory of computation~Verification by model checking</concept_desc>
       <concept_significance>500</concept_significance>
       </concept>
   <concept>
       <concept_id>10003752.10010124.10010125.10010129</concept_id>
       <concept_desc>Theory of computation~Program schemes</concept_desc>
       <concept_significance>500</concept_significance>
       </concept>
 </ccs2012>
\end{CCSXML}

\ccsdesc[500]{Theory of computation~Automata over infinite objects}
\ccsdesc[500]{Theory of computation~Grammars and context-free languages}
\ccsdesc[500]{Theory of computation~Automata extensions}
\ccsdesc[500]{Theory of computation~Verification by model checking}
\ccsdesc[500]{Theory of computation~Program schemes}
%
%

\keywords{Higher-Order Recursion Schemes, Higher-Order (Collapsible) Pushdown Automata, Monadic Second-Order Logic, Modal $\mu$-calculus, Model-Checking, Reflection, Selection, Two-Player Perfect Information Parity Games}

\maketitle

\renewcommand{\shortauthors}{Broadbent et al.}

\section{Introduction}

In this paper we study the logical properties of a very general class of infinite ranked trees, namely those generated by higher-order recursion schemes (equivalently by collapsible pushdown automata). We consider three main problems —~model-checking, logical refection (aka global model-checking) and selection~— for both monadic second-order logic and modal $\mu$-calculus.

\subsection*{Infinite Trees with a Decidable MSO Theory}

A fundamental result of Rabin states that, for any formula expressible in monadic second-order (MSO) logic, one can decide whether it holds in the infinite complete binary tree~\cite{Rabin69}: in other words, the MSO model-checking problem is decidable. Since then, extending this result has been an important field of research. There are three main possible directions for that: the first one is to enrich MSO logic while preserving decidability; the second one is to look for structures richer than the infinite complete binary tree with a decidable MSO theory; the third one is to consider questions subsuming the model-checking problem. In this paper we are following the second and third directions.


Recursion schemes, an old model of computation, were originally designed as a canonical programming calculus for studying program transformation and control structures. In recent years, \emph{higher-order recursion schemes} have received much attention as a method of constructing rich and robust classes of possibly infinite ranked trees with strong algorithmic properties; theses are essentially finite typed deterministic term rewriting systems that generate when one applies the rewriting rules \emph{ad infinitum} an infinite tree. The interest was sparked by the discovery of Knapik, Niwi\'nski and Urzyczyn~\cite{KNU02} that recursion schemes which satisfy a syntactic constraint called \emph{safety} generate the same class of trees as higher-order pushdown automata. Remarkably these trees have decidable monadic second-order theories, subsuming earlier well-known MSO decidability results for regular (or order-0) trees \cite{Rabin69} and algebraic (or order-1) trees \cite{Courcelle95}. 

An alternative approach was developed by Caucal who introduced in \cite{Caucal02} two infinite hierarchies, one made of infinite trees and the other made of infinite graphs, defined by means of two simple transformations: {unfolding, which goes from graphs to trees, and inverse rational mapping ({or MSO-interpretation \cite{Carayol03}}), which goes from trees to graphs}. He showed that the tree hierarchy coincides with the trees generated by safe schemes, and as both unfolding and MSO-interpretation preserve MSO decidability, it follows that structures in those hierarchies have MSO decidable theories.

A major step was obtained by Ong who proved in ~\cite{Ong06a} that the modal $\mu$-calculus (local) model checking problem for trees generated by \emph{arbitrary} order-$n$ recursion schemes is $n$-\exptime-complete (hence these trees have decidable MSO theories). Note that this result was obtained using tools from innocent game semantics (in the sense of Hyland and Ong \cite{HO00}) and in particular does not rely on an equivalent automata model for generating trees.

Finding a class of automata that characterises the expressivity of higher-order recursion schemes was left open. Indeed, the results of Damm and Goerdt~\cite{DG86}, as well as those of Knapik \emph{et
  al.}~\cite{KNU01,KNU02} may only be viewed as attempts to answer the question as they have both had to impose the same syntactic constraints on recursion schemes, called of \emph{derived types} and \emph{safety} respectively, in order to establish their results. 
A partial answer was later obtained by Knapik, Niwi\'nski, Urzyczyn and Walukiewicz 
who proved that order-2 homogeneously-typed (but not necessarily safe) recursion
schemes are equi-expressive with a variant class of order-2 pushdown
automata called \emph{panic automata}~\cite{KNUW05}. 
Finally, Hague, Murawski, Ong and Serre gave a complete answer to the question in \cite{HMOS08,HMOS17} (also see \cite{CS08}). They introduced a new kind of
higher-order pushdown automata, which generalises \emph{pushdown automata with links} \cite{AdMO05a}, or equivalently panic automata, to all finite orders, called \emph{collapsible pushdown automata} (CPDA), in which every symbol in the stack has a link to a
(necessarily lower-ordered) stack situated somewhere below it.
A major result of their paper is that for every $n \geq 0$, order-$n$ recursion schemes and order-$n$ CPDA
are equi-expressive as generators of trees.

\subsection*{Main Results}

The equi-expressivity of higher-order recursion schemes and collapsible pushdown automata, as well as the connection between logic and two-player perfect-information parity games (see \eg \cite{Thomas97,Wilke2001,Walukiewicz04}), provide a roadmap to study logical properties of trees generated by recursion schemes: study collapsible pushdown games (\ie parity games played on transition graphs of CPDA) and derive logical consequences on trees generated by recursion schemes.
The companion paper~\cite{BCHMOS20} gives an in-depth study of collapsible pushdown parity games (following a series of papers~\cite{HMOS08,BCOS10,CS12} by the authors) on top of which we build in the present paper. 

Our first straightforward contribution is to note that the decidability of the model-checking problem for MSO (equivalently $\mu$-calculus) against trees generated by recursion schemes is an immediate consequence of the decidability of collapsible pushdown parity games and the equi-expressivity theorem

We then turn to the global version of the model-checking problem. Let $\mathcal{T}$ be a class of finitely-presentable infinite structures (such as trees or graphs) and $\mathcal{L}$ be a logical language for describing correctness properties of these structures. The \emph{global model checking problem} asks, given $t \in \mathcal{T}$ and $\phi \in \mathcal L$, whether 
the set $\truth{t}{\phi}$ of nodes defined by $\phi$ and $t$ is finitely describable,
and if so, whether it is effective.

An innovation of our work is a new approach to global model checking, by “internalising” the semantics $\truth{t}{\phi}$. Let $\phi \in \mathcal L$, and $\mathcal{S}$ be a recursion scheme over a ranked alphabet $\Sigma$ (i.e.~the node labels of $\mng{\mathcal{S}}$, the tree generated by $\mathcal{S}$, are elements of the ranked alphabet $\Sigma$). We say that $\mathcal{S}_\phi$, which is a recursion scheme over $\Sigma \cup \marked{\Sigma}$ (where $\marked{\Sigma}$ consists of a marked copy of each $\Sigma$-symbol), is a \emph{$\phi$-reflection}\footnote{In programming languages, \emph{reflection} is the process by which a computer program can observe and dynamically modify its own structure and behaviour.}  of $\mathcal{S}$ just if $\mathcal{S}$ and $\mathcal{S}_\phi$ generate the same underlying tree; further, suppose a node $u$ of $\mng{\mathcal{S}}$ has label $f$, then the label of the node $u$ of $\mng{\mathcal{S}_\phi}$ is $\marked{f}$ if $u$ in $\mng{\mathcal{S}}$ satisfies $\phi$, and it is $f$ otherwise. Equivalently we can think of $\mng{\mathcal{S}_\phi}$ as the tree that is obtained from $\mng{\mathcal{S}}$ by distinguishing the nodes that satisfy $\phi$. Our second contribution is the result that higher-order recursion schemes are (constructively) \emph{reflective} with respect to~the modal $\mu$-calculus (Theorem~\ref{theo:reflection-mu-calculus}). \Ie we give an algorithm that, given a modal $\mu$-calculus formula $\phi$, transforms a recursion scheme to its $\phi$-reflection.

While modal $\mu$-calculus and MSO are equivalent for expressing properties of a tree at its root, it is no longer true at other nodes (see \eg \cite{JW96}). Hence, it is natural to ask whether higher-order recursion schemes are reflective with respect to MSO logic. We answer positively (Theorem~\ref{theo:reflection-MSO}) this question by relying on the previous result for $\mu$-calculus.

We derive two consequences of the MSO reflection. The first one (Corollary~\ref{cor:divergent}) is to show how MSO reflection can be used to construct, starting from a scheme that may have non-productive rules, an equivalent one that does not have such divergent computations. The second application consists in proving (Theorem~\ref{theorem:ALaCaucal}) that the class of trees generated by recursion schemes is closed under the operation of MSO interpretation followed by tree unfolding hence, providing a result in the same flavour as the one obtained by Caucal for safe schemes in~\cite{Caucal02}.

Our third main contribution is to consider a more general problem than (both local and global) model-checking, namely the \emph{MSO selection property}. More precisely, we prove (Theorem~\ref{theo:selection}) that if $\mathcal{S}$ is a recursion scheme generating a tree $t$ satisfying a formula of the form $\exists X\phi(X)$ (where $X$ is a second-order free variable ranging over sets of nodes) then one can build another scheme that generates the tree $t$ where a set of nodes $U$ satisfying $\phi(X)$ is marked. This result is in fact quite surprising as it is known from~\cite{GurevichS83a,CarayolL07,CarayolHDR} that there exists a tree generated by an order-$3$ (safe) recursion scheme for which no MSO choice function exists, and that the selection property is closely connected to choice functions.

Note that most of the above mentioned results where previously presented by the authors in two papers at the LiCS conference~\cite{BCOS10,CS12} and that the current paper gives a unify and complete presentation of their proofs.

\subsection*{Related Work}

We already discussed the previous work on MSO model-checking again regular trees~\cite{Rabin69}, algebraic trees~\cite{Courcelle95}, trees generated by safe recursion schemes~\cite{KNU02,Caucal02}, trees generated by possibly unsafe order-2 recursion schemes~\cite{KNUW05} and general recursion schemes~\cite{Ong06a}. The proof presented in the present paper (\ie going through the connection with collapsible pushdown games) was first presented in \cite{HMOS08}. Following initial ideas in \cite{Aehlig06} and \cite{Kobayashi09}, Kobayashi and Ong gave yet another proof of Ong's decidability result: their proof \cite{KO09} consists in showing that, given a recursion scheme and an MSO formula, one can construct an intersection type system such that the scheme is typable in the type system if and only if the property is satisfied by the scheme; typability is then reduced to solving a parity game.

Piterman and Vardi \cite{PV04} studied the global model checking problem for regular trees and prefix-recognisable graphs using two-way alternating parity tree automata. Extending their results, Carayol et al.~\cite{CHMOS08} showed that the winning regions of parity games played over the transition graphs of higher-order pushdown automata (a strict subclass of CPDA) are regular. Later, using game semantics, Broadbent and Ong \cite{BO09} showed that for every order-$n$ recursion scheme $\mathcal{S}$, the set of nodes in $\mng{\mathcal{S}}$ that are definable by a given modal $\mu$-calculus formula is recognisable by an order-$n$ (non-deterministic) collapsible pushdown word automaton. The result we prove here (previously presented in~\cite{BCOS10}) is stronger as $\mu$-calculus reflection implies that the nodes are recognisable by a \emph{deterministic} CPDA.

The MSO selection property was first established in~\cite{CS12}. Alternative proofs were later given by Haddad~\cite{Haddad13,HaddadPHD}, and by Grellois and Melliès in~\cite{GrelloisM15}. 
Both proofs are very different from the one we give here. Indeed, our proof uses the equi-expressivity theorem to restate the problem as a question on CPDA, and a drawback of this approach is that once the answer is given on the CPDA side one needs to go back to the scheme side, which is not complicated but yields a scheme that is very different from the original one. The advantage of the approaches in \cite{HaddadPHD} (built on top of the intersection types approach by Kobayashi and Ong \cite{KO09}) and in \cite{GrelloisM15} (based on purely denotational arguments and connections with linear logic) is to work directly on the recursion scheme and to succeed to provide as a selector a scheme obtained from the original one by adding duplicated and annotated versions of the terminals.

In a recent work~\cite{Parys18}, Parys considered the logic WMSO+U, an extension of weak monadic second-order logic (\ie MSO logic where second-order quantification is limited to range on \emph{finite} sets) by the unbounding quantifier, expressing the fact that there exist arbitrarily large finite sets satisfying a given property its extension. This logic is incomparable with MSO logic. He showed that model-checking is decidable and that both reflection and selection hold for trees generated by recursion schemes.

\subsection*{Structure of This Paper}

The article is organised as follows. Section~\ref{section:Preliminaries} introduces the main concepts and some classical results. In Section~\ref{section:schemes} we introduce higher-order recursion schemes and in Section~\ref{section:CPDA} collapsible pushdown automata; we then give in Section~\ref{section:KnownResult} few known results that we build on in the rest of the paper. 
Section~\ref{section:modelChecking} briefly discusses the (local) model-checking problem. The global model-checking and the consequences the refection properties are respectively studied in Section~\ref{section:globalMC} and Section~\ref{section:globalMCConsequences}. Finally the selection problem is solved in Section~\ref{section:selection}.

\section{Preliminaries}\label{section:Preliminaries}

\subsection{Basic Notations}

When $f$ is a (partial) mapping, we let $\domain{f}$ denote its domain.

\subsection{Words}
An \defin{alphabet} $\Sigma$ is a (possibly infinite) set of letters. In the sequel $\Sigma^*$ denotes the set of \defin{finite words} over $\Sigma$, and $\Sigma^\omega$ the set of \defin{infinite words} over $\Sigma$. The empty word is written $\epsilon$ and the length of a word $u$ is denoted by $|u|$. Let $u$ be a finite word and $v$ be a (possibly infinite) word. Then $u\cdot v$ (or simply $uv$) denotes the concatenation of $u$ and $v$; the word $u$ is a prefix of $v$ iff there exists a word $w$ such that $v=u\cdot w$. A subset $X\subseteq \Sigma^*$ is \defin{prefix-closed} if, for every $v\in X$ one has $u\in X$ for any prefix $u$ of $v$.

\subsection{Trees}

Let $D$ be a finite set of \emph{directions}. A \concept{$D$-tree} is just a prefix-closed subset $T$ of $D^*$ whose elements are called \emph{nodes}. For a node $u\in T$, an element of the form $u\cdot d$ for some $d\in D$ is called the $d$-child (or simply a child if $d$ does not matter) of $u$. A node with no child is called a \emph{leaf} while the node $\epsilon$ is the \emph{root} of $T$.

Let $\Sigma$ be a finite alphabet. A \concept{$\Sigma$-labelled tree} is a function $t : \dom{t} \rightarrow \Sigma$ such that $\dom{t}$ is a $D$-tree for some set of directions $D$; for a node node $u \in \dom{t}$, we refer to $t(u)$ as the \emph{label} of $u$ in $t$.

If $\Sigma$ is a \concept{ranked alphabet} \ie each $\Sigma$-symbol $a$ has an arity $\arity{a} \geq 0$, a \concept{$\Sigma$-labelled ranked and ordered} tree (or simply a $\Sigma$-labelled tree if the context is clear) $t: \dom{t} \rightarrow \Sigma$ is a $\Sigma$-labelled tree such that the following holds (meaning that the label of a node determines its number of children):
\begin{itemize}
	\item $\dom{t}$ is a $\{1,\dots,m\}$-tree where $m=\max\{\arity{a}\mid a \in A\}$;
	\item for every node $u\in \dom{t}$, $\{i\mid 1\leq i\leq m \text{ and } u\cdot i\in \dom{t}\} = \{1,\dots,\arity{t(u)}\}$.
\end{itemize}

We write $\tree{\Sigma}$ for the set of (finite and infinite) $\Sigma$-labelled trees.

\subsection{Graphs}\label{subsection:graphs}

Let $A$ be a finite alphabet containing a distinguished symbol $\silent$ standing for silent transition; we let $A_{\silent}=A\setminus\{\silent\}$. An \defin{$A$-labelled graph} is a pair $G=(V,E)$ where $V$ is a set of vertices and $E\subseteq V\times A\times V$ is a set of  edges. For any $(u,a,v)\in E$ we write $u \era{a} v$ and we refer to it as an $a$-edge (\resp silent edge if $a=\silent$) with source $u$ and target $v$. Moreover, we require that for all $u\in V$, if $u$ is the source of a silent transition then $u$ is not the source of any $a$-transition with $a\neq \silent$. 

For a word $w = a_1 \cdots a_n \in A^*$, we define a binary relation $\era{w}$ on $V$ by letting $u \era{w} v$ if there exists a sequence $v_0, \ldots, v_n$ of elements in $V$ such that $v_0=u$, $v_n=v$, and for all $i \in [1,n]$, $v_{i-1} \era{a_{i}} v_{i}$. These definitions are extended to languages over $A$ by taking, for all $L \subseteq A^*$, the relation $\era{L}$ to be the union of all $\era{w}$ for $w \in L$.

For a word $w =a_1 \cdots a_k\in A_{\silent}^*$, we denote by $\eRa{w}$ the relation $\era{L_w}$ where $L_w = \silent^* a_1 \silent^* \cdots \silent^* a_k\silent^* $ is the set of words over $A$ obtained by inserting arbitrarily many occurrences of $\silent$ in $w$. 
  
The graph $G$ is said to be \defin{deterministic} if for all $u,v_1$ and $v_2$ in $V$ and all $a$ in $A$, if $u \era{a} v_1$ and $u \era{a} v_2$ then $v_1=v_2$. From now on we always assume that the graphs are deterministic.

Consider a deterministic $A$-labelled graph $G=(V,E)$ together with a distinguished vertex $r\in V$ called its \emph{root}. We associate with it a tree, denoted $\unfold{G}$, with directions in $A_{\silent}$, reflecting the possible behaviours in $G$ starting from the root. For this we let $$\unfold{G} = \{ w \in A_{\silent}^* \mid \exists v \in V,\, r \eRa{w}v \},$$ \ie
$\unfold{G}$ is obtained by unfolding $G$ from its root and contracting all $\silent$-transitions. 
Figure~\ref{fig:lts-trees} presents a deterministic $\{1,2,\silent\}$-labelled graph $G$ together with its associated tree.

In case $G=(V,E)$ is equipped with a vertex-labelling function $\rho:V\rightarrow \Sigma$ where $\Sigma$ is a finite alphabet, one can define a $\Sigma$-labelled tree from a pair $(G,r)$ by considering the tree $t:\unfold{G}\rightarrow \Sigma$ where $t(w)=\rho(v_w)$ where $v_w$ is the unique vertex in $G$ such that $r\eRa{w} v_w$ and that is not the source of an $\silent$-labelled edge{ other than an $\silent$-labelled loop}{ (recall that we assumed that a vertex that is the source of a silent transition cannot be the source of any $a$-transition with $a\neq \silent$)}. Note that, if $\Sigma$ is ranked and if $A_\silent=\{1,\cdots m\}$ with $m=\max\{\arity{a}\mid a\in \Sigma\}$, if we have $\{i\mid i\neq \silent \text{ and }\exists v',\ v\era{i}v'\}=\{1,\dots,\arity{\rho(v)}\}$ for every $v\in V$, then the tree $t$ is a $\Sigma$-labelled ranked and ordered tree. An example is given in Figure~\ref{fig:lts-trees}.

\begin{figure}
\begin{center}
\begin{tikzpicture}[scale=1,transform shape,level distance = 24pt,sibling distance = 14pt]
\tikzset{>=stealth}
\tikzset{edge from parent/.style={draw,->}}
\Tree [.\node(r){\textcolor{teal}{$r$}};
	\edge node[auto=right]{\small $1$};
	[.\node(s){\textcolor{teal}{$s$}}; 
	] 
	\edge node[auto=left]{\small $2$};
	[.{\textcolor{teal}{$t$}} \edge node[auto=right]{$\silent$}; 
		[.\node(u){\textcolor{teal}{$u$}};
			\edge[draw=none] ;[.\node{}; \edge[draw=none] ;[.\node{};]]
		] 
	]
	] 
]
\draw[->] (u) .. controls +(east:1) and +(north east:1.5) .. (r) 
node[pos=0.5,right]{\small $1$};
\path[->] (s) edge [loop below] node {$\silent$} (s);
\end{tikzpicture}
\hspace{1cm}
\begin{tikzpicture}[scale=1,transform shape,level distance = 25pt,sibling distance = 18pt]
\tikzset{>=stealth}
\tikzset{edge from parent/.style={draw,-}}
\Tree [.{$\epsilon$} 
	\edge;
	[.{$1$} 
	] 
	\edge;
	[.{$2$} 
		\edge;
		[.{$21$} 
			\edge;
			[.{$211$} ]		
			\edge;
			[.{$212$} 				
				\edge[dotted,-] ;
			[.{\phantom{$0$}} ] 
]		
		]
	] 
]
\end{tikzpicture}
\hspace{1cm}
\begin{tikzpicture}[scale=1,transform shape,level distance = 25pt,sibling distance = 18pt]
\tikzset{>=stealth}
\tikzset{edge from parent/.style={draw,-}}
\Tree [.{$\epsilon$ , \textcolor{red}{$a$}} 
	\edge;
	[.{$1$ , \textcolor{red}{$c$}} 
	] 
	\edge;
	[.{$2$ , \textcolor{red}{$b$}} 
		\edge;
		[.{$21$ , \textcolor{red}{$a$}} 
			\edge;
			[.{$211$ , \textcolor{red}{$c$}} ]		
			\edge;
			[.{$212$ , \textcolor{red}{$b$}} 				
				\edge[dotted,-] ;
			[.{\phantom{$0$}} ] 
]		
		]
	] 
]
\end{tikzpicture}
\end{center}
\caption{\label{fig:lts-trees} A deterministic $\{1,2,\silent\}$-labelled graph $G$ with root $r$ (on the left) together with its associated tree $\unfold{G}$ (in the middle) and its associated $\{a,b,c\}$-labelled ranked and order tree when one lets $\arity{a}=2$, $\arity{b}=1$, $\arity{c}=0$, $\rho(r)=a$, $\rho(s)=c$ and {$\rho(u)=b$} (on the right; node labels are written in red).}
\end{figure}
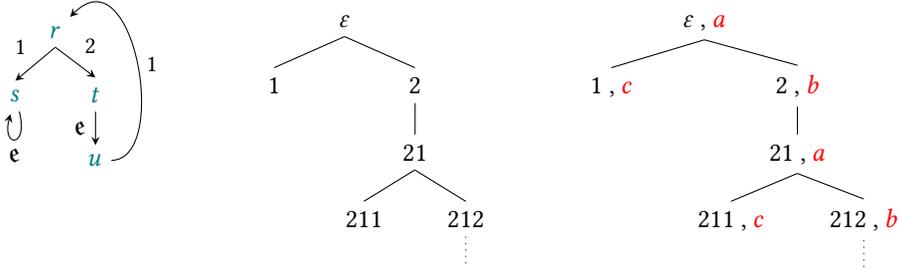

\subsection{Types}

\concept{Types} are generated by the grammar $\tau ::= o \; | \; \tau
\rightarrow \tau$.  Every type $\tau \not= o$ can be written uniquely as
$\tau_1 \rightarrow (\tau_2\rightarrow\cdots \rightarrow (\tau_n \rightarrow o)\cdots )$, for some $n \geq 1$ which is called its
\emph{arity}; the \emph{ground} type $o$ has arity 0.  We follow the convention that arrows associate to the right, and simply write $\tau_1 \rightarrow \tau_2\rightarrow\cdots \rightarrow \tau_n \rightarrow o$, which we sometimes abbreviate to $(\tau_1, \ldots, \tau_n, o)$.  
{The \concept{order} of a type measures the nesting depth on the left of $\to$. We define $\order{o} = 0$ and $\order{\tau_1 \rightarrow \tau_2} = \max (\order{\tau_1} + 1, \order{\tau_2})$. Thus $\order{\tau_1 \to \ldots \to \tau_n \to o} = 1 + \max\{\order{\tau_i}\mid 1\leq i\leq n\}$. For example, $\order{o \to o \to o \to o} = 1$ and $\order{((o \to o) \to o) \to o} = 3$.}

\subsection{Terms}

Let $\Upsilon$ be a set of typed symbols. {Let $f \in \Upsilon$ and $\tau$ be a type, we write $f : \tau$ to mean that $f$ has type $\tau$.} 

The set of \concept{(applicative) terms of type ${\tau}$ generated from ${\Upsilon}$}, written $\mathcal{T}_\tau(\Upsilon)$,
is defined by induction over the following rules. If $f : A$ is an
element of $\Upsilon$ then $f \in \mathcal{T}_\tau(\Upsilon)$; if $s \in \mathcal{T}_{\tau_1
  \rightarrow \tau_2}(\Upsilon)$ and $t \in \mathcal{T}_{\tau_1}(\Upsilon)$ then $s \; t \in {\mathcal{T}}_{\tau_2}(\Upsilon)$. For simplicity we write $\mathcal{T}(\Upsilon)$ to mean ${\mathcal{T}}_o(\Upsilon)$, the set of {terms of ground type}. {Let $t$ be a term, we write $t : \tau$ to mean that $t$ is an term of type $\tau$.} 
 
A {ranked alphabet} $\Sigma$ can be seen as a set of typed symbols: for each $\Sigma$-symbol $a$ its arity
$\arity{a} \geq 0$ determines its type $\underbrace{o
  \rightarrow \cdots \rightarrow o \rightarrow}_{\arity{a}}
o$; in particular every symbol has an order-0 or order-1 type. In this setting, terms in $\mathcal{T}(\Sigma)$ can be identified with $\Sigma$-labelled ranked and ordered trees.

\subsection{Logic}

We now give some brief background on logic (mainly for trees and graphs). More thorough introductions to the topic can be found in many textbooks and survey, \eg in \cite{EFT84,Thomas97}.

A \defin{relational structure} $\frak{A}=(D,R_1,\dots,R_k)$ is given by a (possibly \emph{infinite}) set $D$, called the \defin{domain} of $\frak{A}$, and \defin{relations} $R_i$ (if we let $r_i$ be the arity of relation $R_i$, then $R_i\subseteq D^{r_i}$). 

Many classical objects can be represented as relational structures. For instance, a $\Sigma$-labelled tree $t:\dom{t}\rightarrow \Sigma$ over a ranked alphabet $\Sigma$ whose symbols have arity at most $m$ corresponds to the relational structure $\frak{t}=(\dom{t},(S_i)_{1\leq i\leq m},\sqsubset,(Q_a)_{a\in \Sigma})$ where $S_i$ is the $i$-child relation defined by $S_i=\{(u,ui)\mid u,ui\in\dom{t}\}$, $\sqsubset$ is the strict prefix ordering, and $Q_a = t^{-1}(a)$ for each label $a\in \Sigma$.

In a similar manner a vertex- and edge-labelled graph can be represented as a relational structure: the domain coincides with the set of vertices and one has a binary relation for each edge label and a unary relation for each vertex label.

Properties of relational structures (here trees or graphs) can be expressed thanks to logical formalisms. We start with \defin{first-order logic (FO)}. First-order formulas on the relational structure $\frak{A}$ may use variables $x,y,\dots$ ranging over elements in the domain and are built up from atomic formulas of the form 
\begin{itemize}
\item $x=y$ where both $x$ and $y$ are variables, and 
\item $R(x_1,\dots,x_k)$ where $R$ is any relation in $\frak{A}$
\end{itemize}
by means of the usual Boolean connectives $\neg$, $\vee$, $\wedge$, $\Rightarrow$, $\Leftrightarrow$  and the quantifiers $\exists$ and $\forall$. The semantics is as expected and we do not give it here (see \eg \cite{EFT84}). 
A formula can contain \emph{free} variables, \ie variables that are not under the scope of a quantifier. In that case the semantics of the formula should be understood relatively to some interpretation of the free variables. 

\defin{Monadic second-order logic (MSO)} extends first-order logic by allowing second-order variables $X,Y,\dots$ which range over \emph{sets} of elements of the domain (\eg sets of nodes in trees or set of vertices in graphs). In the syntax of MSO formulas one can use the atomic formulas $x\in X$, where $x$ is a first-order variable and $X$ is a second-order variable, whose meaning is that element $x$ belongs to set $X$. 

We will also consider a fixpoint modal logic called \defin{(modal) $\mu$-calculus}. As we will write only few formulas from $\mu$-calculus and mostly rely on the fact that there exists a strong connection between $\mu$-calculus and parity games, we do not give any definition regarding to this logic and refer the reader to \cite{AN01,Wilke2001}.

For a given structure $\frak{A}$ and a formula $\phi$, one writes {$\frak{A}\models \phi$} to mean that $\phi$ is true in $\frak{A}$, and {$(\frak{A},p_1\dots,p_k)\models \phi(x_1,\dots,x_k)$} to mean that $\phi$ is true in $\frak{A}$ when one interprets free variables $x_i$ as $p_i$ for each $i=1,\dots,k$.

The \defin{local model-checking problem} is to decide for a given structure $\frak{A}$ and a formula $\phi$ without free variable, whether $\frak{A}\models \phi$ holds.

The \defin{global model-checking} is, for a given structure $\frak{A}$ and a formula $\phi(x)$ with a first-order free variable $x$, to provide a finite description of the set $\truth{\frak{A}}{\phi}=\{p\in D\mid (\frak{A},p)\models\phi(x)\}$. In case of $\mu$-calculus (that is interpreted in pointed relational structures, \ie relational structure together with  a distinguished root element), the set  $\truth{\frak{A}}{\phi}$ is intended to be the set of roots that make the formula $\phi$ true.

\subsection{Games}\label{def:paritygames}

An \defin{arena} is a triple $\arena=(G,\VE,\VA)$ where $G=(V,E)$ is a graph and $V=\VE\uplus\VA$ is a partition of the vertices among two players, \Eloise and \Abelard. For simplicity in the definitions, we assume that $G$ has no dead-end.

\Eloise and \Abelard play in $\arena$ by moving a pebble along edges. A \defin{play} from an initial vertex $v_0$ proceeds as follows: the player owning $v_0$ (\ie \Eloise if $v_0\in \VE$, \Abelard otherwise) moves the pebble to a vertex $v_1\in E(v_0)$. Then the player owning $v_1$ chooses a successor $v_2\in E(v_1)$ and so on. As we assumed that there is no dead-end, a play is an infinite word $v_0v_1v_2\cdots \in V^\omega$ such that for all $0\leq i$ one has $v_{i+1}\in E(v_i)$. A \defin{partial play} is a prefix of a play, \ie it is a finite word $v_0v_1\cdots v_\ell \in V^*$  such that for all $0\leq i<\ell$ one has $v_{i+1}\in E(v_i)$.

A \defin{strategy} for \Eloise is a function $\strat_\Ei:V^*V_\Ei\rightarrow V$ assigning, to every partial play ending in some vertex $v\in \VE$, a vertex $v'\in E(v)$. Strategies of \Abelard are defined likewise, and usually denoted $\strat_\Ai$.
In a given play $\play=v_0v_1\cdots$ we say that \Eloise (\resp \Abelard) \defin{respects a strategy} $\strat_\Ei$ (\resp $\strat_\Ai$) if whenever $v_i\in V_\Ei$ (\resp $v_i\in V_\Ai$) one has $v_{i+1} = \strat_\Ei(v_0\cdots v_i)$ (\resp $v_{i+1} = \strat_\Ai(v_0\cdots v_i)$).

A \defin{winning condition} is a subset $\WC\subseteq V^\omega$ and a (two-player perfect information) \defin{game} is a pair $\game=(\arena,\WC)$ consisting of an arena and a winning condition. A game is finite if it is played on a finite arena.

A play $\lambda$ is \defin{won} by \Eloise if and only if $\lambda\in \WC$; otherwise $\lambda$ is won by \Abelard. A strategy $\strat_\Ei$ is \defin{winning} for \Eloise in $\game$ from a vertex $v_0$ if any play starting from $v_0$ where \Eloise respects $\strat_\Ei$ is won by her. Finally a vertex $v_0$ is \defin{winning} for \Eloise in $\game$ if she has a winning strategy $\strat_\Ei$ from $v_0$. Winning strategies and winning vertices for \Abelard are defined likewise.

A \defin{parity winning condition} is defined by a \defin{colouring function} $\col$ that is a mapping $\col: V \rightarrow \colors \subset \mathbb{N}$ where $\colors$ is a \emph{finite} set of \defin{colours}. The parity winning condition associated with $\col$ is the set $\WC_\col = \{v_0 v_1 \cdots \in V^\omega \mid \liminf (\col(v_i))_{i \geq 0} \text{ is even}\}$, \ie a play is winning if and only if the smallest colour infinitely often visited is even.
A \defin{parity game} is a game of the form $\game=(\arena,\WC_\col)$ for some colouring function.

\subsection{Tree Automata}\label{section:treeAutomata}

We now introduce the usual model of automata to recognise languages of (possibly infinite) ranked trees.

Let $\Sigma$ be a ranked alphabet. A \defin{tree automaton} is a tuple $\mathcal{A}=(Q,\Sigma,q_0,\Delta,\col,\Acc)$ where $Q$ is a finite set of control states, $q_0\in Q$ is the initial state, $\Delta\subseteq \bigcup_{a\in\Sigma} Q\times \{a\}\times Q^{\arity{a}}$ is a transition relation, $\col:Q\rightarrow\colors\subset \mathbb{N}$ is a colouring function and $\Acc\subseteq Q\times \Sigma$ is an acceptance condition to handle leaves.

Let $t:\dom{t}\rightarrow\Sigma$ be a $\Sigma$-labelled tree. A \defin{run} of $\mathcal{A}$ over $t$ is a tree $r:\dom{t}\rightarrow Q$ such that the following holds:
\begin{itemize}
\item $r(\epsilon)=q_0$, \ie the root is labelled by the initial state;
\item for every node $u\in\dom{t}$ one has $(r(u),t(u),r(u\cdot 1),\cdots r(u\cdot\arity{t(u)}))\in\Delta$, \ie the local constraints imposed by the transition relation are respected.
\end{itemize}
The run $r$ is accepting if and only if the following two conditions are satisfied:
\begin{itemize}
	\item for every leaf $u\in\dom{t}$ one has $(r(u),t(u))\in\Acc$;
	\item for every infinite branch, the smallest colour of a state appearing infinitely often along it is even; formally if $u_0,u_1,u_2,\dots$ denotes the infinite sequence of nodes read along a branch one has that $\liminf (\col(r(u_i))_{i\geq 0}$ is even.
\end{itemize}

Finally, a tree is \defin{accepted} by $\mathcal{A}$ if and only if there is an accepting run over it, and we refer to the set of accepted trees as the language recognised by $\mathcal{A}$.

There are tight connections between model-checking MSO logic against ranked trees, solving parity games, and checking whether the language recognised by a tree automaton is empty. We refer the reader to~\cite{Thomas97} for details on those connections and we only recall here the ones we use in the present paper.


{
Let $\Sigma$ be a ranked alphabet, let $t$ be a $\Sigma$-labelled ranked tree and let $u_1,\cdots,u_k$ be nodes in $t$ for some $k\geq 0$. Then we write $t_{u_1,\cdots,u_k}$ for the $\Sigma\times\{0,1\}^k$-labelled ranked tree such that $\dom{t_{u_1,\cdots,u_k}}=\dom{t}$ and for every $u\in\dom{t}$,  $t_{u_1,\cdots,u_k}(u)=(t(u),(\iota_1(u),\dots,\iota_k(u)))$ where, for $i=1,\cdots,k$, one let $\iota_i(u)=1$ if $u=u_i$ and $\iota_i(u)=0$ otherwise. In other words $t_{u_1,\cdots,u_k}$ is obtained from $t$ by marking nodes $u_1,\dots,u_k$.

The first connection is the famous result by Rabin~\cite{Rabin69}. 

\begin{theorem}\label{th:Rabin69}
Let $\Sigma$ be a ranked alphabet. The following holds.\begin{itemize}
	\item For every MSO formula $\phi(x_1,\dots,x_k)$ with possibly first-order variables over $\Sigma$-labelled ranked trees, one can build an automaton $\mathcal{A}_{\phi(x_1,\dots,x_k)}$ such that the trees accepted by $\mathcal{A}_{\phi(x_1,\dots,x_k)}$ are exactly those trees $t_{u_1,\dots,u_k}$ such that $(t,u_1,\dots,u_k) \models \phi(x_1,\dots,x_k)$.
	\item For every tree automaton $\mathcal{A}$ one can build an MSO-formula $\phi_\mathcal{A}$ such that the trees accepted by $\mathcal{A}$ are exactly those where $\phi_\mathcal{A}$ holds. 
	\end{itemize}
\end{theorem}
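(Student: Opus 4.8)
The statement is the classical theorem of Rabin, so the plan is to reprove both implications, the converse (automaton to MSO) being routine and the forward direction (MSO to automaton) carrying all the difficulty. Throughout I identify a $\Sigma$-labelled ranked tree $t$ with the relational structure $\frak{t}=(\dom{t},(S_i)_i,\sqsubset,(Q_a)_{a\in\Sigma})$ introduced above, and I reduce first-order variables to monadic ones by treating each element as a singleton set (adding, for the variables that encode singletons, a constraint asserting that exactly one node carries the corresponding bit). It therefore suffices to deal with MSO formulas whose free variables are all second-order.

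For the converse direction, given a tree automaton $\mathcal{A}=(Q,\Sigma,q_0,\Delta,\col,\Acc)$ I would write $\phi_\mathcal{A}$ as an existential second-order sentence asserting the existence of an accepting run. I introduce one set variable $X_q$ per state $q\in Q$, the intended meaning of $X_q$ being the set of nodes the run labels by $q$. The formula then conjoins: \textbf{(i)} that the $X_q$ partition $\dom{t}$; \textbf{(ii)} that the root lies in $X_{q_0}$; \textbf{(iii)} a local-consistency clause stating that whenever $u\in X_q$ with children $u\cdot 1,\dots,u\cdot\arity{t(u)}$ lying respectively in $X_{q_1},\dots,X_{q_{\arity{t(u)}}}$, the tuple $(q,t(u),q_1,\dots,q_{\arity{t(u)}})$ belongs to $\Delta$; \textbf{(iv)} the leaf clause, that every leaf $u\in X_q$ satisfies $(q,t(u))\in\Acc$; and \textbf{(v)} the parity acceptance condition. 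The only subtle clause is \textbf{(v)}: quantifying over the maximal $\sqsubset$-chains that form infinite branches, I express that on each such branch the least colour occurring infinitely often is even. This is MSO-definable in the standard way (for some even $c$, colour $c$ occurs infinitely often while every $c'<c$ occurs only finitely often; both ``infinitely often'' and ``finitely often'' along a chain are expressible modulo the set variables). This yields $\frak{t}\models\phi_\mathcal{A}$ iff $\mathcal{A}$ accepts $t$.

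For the forward direction I would prove the stronger statement by induction on $\phi(X_1,\dots,X_k)$: there is a tree automaton over the product alphabet $\Sigma\times\{0,1\}^k$ accepting exactly the trees encoding a model of $\phi$ together with a valuation of $X_1,\dots,X_k$ (the $i$-th bit of a node's label recording its membership in $X_i$). The atomic cases $x\in X$, $S_i(x,y)$, $\sqsubset$ and $Q_a(x)$ are each recognised by a small automaton that inspects the extra bits and the $\Sigma$-component locally (for $\sqsubset$, via a simple reachability along the branch). Disjunction and conjunction follow from closure of tree automata under union and intersection, routine product constructions preserving the parity condition. Existential quantification $\exists X_k\,\phi$ is handled by projection, erasing the last bit of the alphabet and letting the automaton guess it nondeterministically, which preserves recognisability. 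Negation is handled by closure under complementation.

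The one hard step, and the deep content of Rabin's theorem, is closure under complementation; I would obtain it in the modern game-theoretic way. First pass from nondeterministic to alternating parity tree automata and describe acceptance through the associated two-player \defin{acceptance game}, in which \Eloise defends membership and \Abelard attacks it, the winning condition being the parity condition induced by $\col$. Complementation of an alternating automaton is then immediate by duality: swap the roles of \Eloise and \Abelard and replace $\col$ by $\col+1$, the positional determinacy of parity games guaranteeing that the dual automaton accepts precisely the rejected trees. Finally, to return to the nondeterministic model required by the projection step, I invoke the Simulation Theorem, that every alternating parity tree automaton is equivalent to a nondeterministic one; its proof rests on a Safra-style determinisation of the underlying transition structure and is where the real combinatorial work lies. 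With complementation in hand the induction closes and produces $\mathcal{A}_\phi$.
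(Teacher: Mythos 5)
There is no proof in the paper to compare yours against: Theorem~\ref{th:Rabin69} is Rabin's classical theorem, cited from \cite{Rabin69} and used as a black box (alongside Theorem~\ref{th:GurevichH82}) in everything that follows. Judged on its own, your proposal is the standard modern proof and its outline is correct. The automaton-to-MSO direction by existentially quantifying the run via set variables $X_q$ is right, including the delicate clause \textbf{(v)}: identifying infinite branches with maximal $\sqsubset$-chains, the statement that the least colour seen infinitely often is even is MSO-expressible exactly as you say. The MSO-to-automaton direction by induction over the product alphabet $\Sigma\times\{0,1\}^k$, with nondeterministic projection for $\exists$ and complementation for $\neg$, is the canonical route, and obtaining complementation via alternating parity automata, dualization, and the Simulation Theorem is the accepted modern replacement for Rabin's original combinatorial argument. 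You should, however, be explicit about what your proof delegates rather than proves: the soundness of dualization (the dual automaton accepts precisely the rejected trees) rests on the determinacy of parity games, and the Simulation Theorem --- which you correctly identify as carrying the real combinatorial weight --- is a result of essentially the same depth as Rabin's complementation lemma itself. Your argument is therefore a correct reduction of the theorem to these two known results, which is legitimate and standard, and in spirit parallels the paper's choice to simply invoke \cite{Rabin69}. Two small repairs: your induction also needs the atomic case $x=y$ (trivial for singleton encodings); and ``conjunction follows from a routine product preserving the parity condition'' is not literally true, since the product of two parity automata yields a conjunction of parity conditions on each branch (a Streett-type condition) --- either compose with a deterministic index-appearance-record word automaton along branches, or simply derive intersection from union and complementation by De Morgan, which you have anyway.
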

}

The second connection used in this paper is the game-approach to the acceptance problem for tree automata. Let $\mathcal{A}=(Q,\Sigma,q_0,\Delta,\col,\Acc)$ be a tree automaton and let $t$ be a $\Sigma$-labelled tree. Consider the following (informal) parity game $\game_{\mathcal{A},t}$. The main vertices in game $\game_{\mathcal{A},t}$ are pairs $(u,q)$ made of a node $u$ in $t$ and a state $q$ in $Q$. In a node $(u,q)$, \Eloise picks a transition $(q,t(u),q_1,\dots,q_{\arity{t(u)}})$ and goes to an intermediate vertex where \Abelard picks some $i$ such that $1\leq i\leq \arity{t(u)}$, and the play proceeds then from $(u\cdot i,q_i)$. In case $\arity{t(u)}=0$ the play loops forever in $(u,q)$. The colouring function in $\game_{\mathcal{A},t}$ assigns colour $\col(q)$ to vertex $(u,q)$ if $\arity{t(u)}\neq 0$ and otherwise it assigns $0$ if $(q,t(u))\in\Acc$ and $1$ otherwise. Intermediate vertices controlled by \Abelard gets the maximal colour used in $\mathcal{A}$ (hence, have no impact on who wins a play). The following result is due to Gurevich and Harrington~\cite{GurevichH82}.

\begin{theorem}\label{th:GurevichH82}
	\Eloise has a winning strategy in the parity game $\game_{\mathcal{A},t}$ from $(\epsilon,q_0)$ if and only if the tree $t$ is accepted by $\mathcal{A}$.
\end{theorem}

\section{Recursion Schemes}\label{section:schemes}

For each type $\tau$, we assume an infinite set $\var_\tau$ of variables of
type $\tau$, such that $\var_{\tau_1}$ and $\var_{\tau_2}$ are disjoint whenever $\tau_1
\neq \tau_2$; and we write $\var$ for the union of $\var_\tau$ as $\tau$ ranges
over types. We use letters $x, y, \phi, \psi, \chi, \xi$ etc.~to range
over variables.

A (deterministic) \concept{recursion scheme} is a quadruple $\rscheme =
\anglebra{\Sigma, \mathcal{N}, \mathcal{R}, I}$ where
\begin{itemize}
\item $\Sigma$ is a ranked alphabet of \concept{terminals} (including a
  distinguished symbol $\bot : o$ that we shall omit when desciribing $\Sigma$)
\item $\mathcal{N}$ is a finite set of typed \concept{non-terminals}; we use
upper-case letters $F, H$, etc.~to range over non-terminals
\item $I \in \mathcal{N}$ is a distinguished \concept{initial symbol} of type $o$
\item $\mathcal{R}$ is a finite set of \concept{rewrite rules}, \emph{one for each}
non-terminal $F:(\tau_1, \ldots, \tau_n, o)$, of the form
\[ F \, \xi_1 \, \cdots \, \xi_n \; \rightarrow \; e \] where each
$\xi_i$ is a variable of type $\tau_i$, and $e$ is a term in
$\term{\Sigma \cup \mathcal{N} \cup \makeset{\xi_1, \cdots,
    \xi_n}}$. Note that the expressions on either side of the arrow
are terms of ground type.
\end{itemize}
The \concept{order} of a recursion scheme is defined to be the highest
order of (the types of) its non-terminals.

In this paper we use recursion schemes as generators of
$\Sigma$-labelled trees. Informally the \emph{value tree} {$\mng{\rscheme}$} of (or the tree \emph{generated} by) a recursion
scheme $\rscheme$ is a possibly infinite term (of
ground type), {constructed from the terminals in $\Sigma$}, that
is obtained, {starting from the initial symbol $I$,}
by unfolding the rewrite rules of ${\rscheme}$ \emph{ad infinitum},
replacing formal by actual parameters each time.

To define $\mng{\rscheme}$, we first introduce a map 
{$(\cdot)^\bot : \bigcup\limits_{A} \mathcal{T}_A({\Sigma \cup \mathcal{N}}) \longrightarrow  \bigcup\limits_{A: \order{A} \leq 1} \mathcal{T}_A({\Sigma})$}
that takes
a term and replaces each non-terminal, together with its
arguments, by $\bot$. We define $(\cdot)^\bot$ by structural recursion
as follows: we let $f$ range over $\Sigma$-symbols, and $F$ over
non-terminals in $\mathcal{N}$
\[\begin{array}{rll}
f^\bot & = & f \\ F^\bot & = & \bot \\ (st)^\bot & = & \left\{
\begin{array}{ll}
\bot & \hbox{if $s^\bot = \bot$}\\
(s^\bot t^\bot) & \hbox{otherwise.}
\end{array}
\right.
\end{array}\]

Clearly if $s \in \term{\Sigma \cup \mathcal{N}}$ is of ground type, so is
$s^\bot \in \term{\Sigma}$.

Next we define a {one-step reduction relation} $\rightarrow_{\rscheme}$
which is a binary relation over terms in $\term{\Sigma \cup \mathcal{N}}$.  Informally, $t \rightarrow_{\rscheme} t'$ just if $t'$ is obtained
from $t$ by replacing some occurrence of a non-terminal $F$ by the
right-hand side of its rewrite rule in which all formal parameters are
in turn replaced by their respective actual parameters, subject to the
proviso that the $F$ must occur at the head of a subterm of ground
type. Formally $\rightarrow_{\rscheme}$ is defined by induction over the
following rules:
\begin{itemize}
\item (\emph{Substitution}). $F t_1 \cdots t_n \rightarrow_{\rscheme} e[t_1 /
  \xi_1, \cdots, t_n / \xi_n]$ where $F \xi_1 \cdots \xi_n \rightarrow
  e$ is a rewrite rule of $G$.

\item (\emph{Context}). If $t \rightarrow_{\rscheme} t'$ then $(st)
  \rightarrow_{\rscheme} (st')$ and $(ts) \rightarrow_{\rscheme} (t's)$.
\end{itemize}

Note that $\tree{\Sigma}$ is a complete partial order with respect
to the approximation ordering $\sqsubseteq$ defined by: $t
\sqsubseteq t'$ just if $\dom{t} \subseteq \dom{t'}$ and for all 
$w \in \dom{t}$, we have $t(w) = \bot$ or $t(w) = t'(w)$. \Ie $t'$ is
obtained from $t$ by replacing some $\bot$-labelled nodes by
$\Sigma$-labelled trees. 
{If one views ${\rscheme}$ as a rewrite system, it is a consequence of the Church-Rosser
property~\cite{ChurchR1936}  that the set $\makeset{t^\bot \in \tree{\Sigma} :  \text{there is a finite reduction sequence  }S = t_0 \rightarrow_{\rscheme} \;
\cdots \; \rightarrow_{\rscheme} t_n = t}$ is directed. 
Hence, we can finally define the $\Sigma$-labelled
ranked tree $\mng{{\rscheme}}$, called the \concept{value tree} of (or
the tree \concept{generated} by) ${\rscheme}$:
\[\mng{{\rscheme}} \; = \; \sup \makeset{t^\bot \in \tree{\Sigma} :  \text{there is a finite reduction sequence  }S = t_0 \rightarrow_{\rscheme} \;
\cdots \; \rightarrow_{\rscheme} t_n = t}. \] }

\begin{example}\label{ex:Scheme:AnBnCnD}
Consider the order-2 recursion scheme $\mathcal{S}$ with non-terminals $I:o,\, F:(o\rightarrow o) \rightarrow (o\rightarrow o) \rightarrow o,\, C_p:(o\rightarrow o) \rightarrow (o\rightarrow o) \rightarrow o \rightarrow o$, variables $x: o,\, \phi,\psi: o\rightarrow o$, terminals $a,b,c,d$ of arity $2$, $1$, $1$ and $0$ respectively, and the following rewrite rules:

\[\left\{\begin{array}{rll}
I & \rightarrow & F \, b \, c\\ 
F \, \phi \, \psi & \rightarrow & a \, (F \, (C_p \, b \, \phi) \, (C_p \, c \, \psi)) \, (\phi \, (\psi \, d))\\
C_p \, \phi \, \psi \, x & \rightarrow & \phi \, (\psi \, x)
\end{array}\right.\]

The non-terminal $C_p$ is to be understood as a mechanism to compose its two first arguments and apply the result to the third argument. 

The first steps of rewriting of $\mathcal{S}$ are given in Figure~\ref{fig:ex-rewrittingAnBnCnD}.

\begin{figure}
\begin{center}
\begin{tikzpicture}[level distance = 30pt,sibling distance = 10pt,scale=0.8,transform shape]
\tikzstyle{every node}=[font=\normalsize]
\tikzset{>=stealth}
\tikzset{edge from parent/.style={draw,-}}
\node (n0) at (0,0) {$I$};
\node (n1) at (1.5,0){
	\Tree [.{ $F$} 
		[.{$b$} ] 
		[.{$c$} ] 
	]
};
\draw[->] (n0) -- (n1);
\node (n2) at (5,0){
	\Tree [.{ $a$} 
		[.{$F$} 			
			[.{$C_p$} [.{$b$} ] [.{$b$} ]] 
			[.{$C_p$} [.{$c$} ] [.{$c$} ]] 
		]
		[.{$b$} [.{$c$} [.{$d$} ]]]  
	]
};
\draw[->] (n1) -- (n2);

\node (n3) at (13,0){
	\Tree [.{ $a$} 
		[.{$a$} 
			[.{$F$} 			
				[.{$C_p$} [.{$b$} ] [.{$C_p$} [.{$b$} ] [.{$b$} ]]] 
				[.{$C_p$} [.{$c$} ] [.{$C_p$} [.{$c$} ] [.{$c$} ]]] 
			]
			[.{$C_p$} [.{$b$} ] [.{$b$} ] [.{$C_p$} [.{$c$} ] [.{$c$} ] [.{$d$} ]]] 
		]
		[.{$b$} [.{$c$} [.{$d$} ]]]  
	]
};
\draw[->] (n2) -- (n3);

\node (n4) at (5,-7){
	\Tree [.{ $a$} 
		[.{$a$} 
			[.{$F$} 			
				[.{$C_p$} [.{$b$} ] [.{$C_p$} [.{$b$} ] [.{$b$} ]]] 
				[.{$C_p$} [.{$c$} ] [.{$C_p$} [.{$c$} ] [.{$c$} ]]] 
			]
			[.{$b$} [.{$b$} [.{$C_p$} [.{$c$} ] [.{$c$} ] [.{$d$} ]]]] 
		]
		[.{$b$} [.{$c$} [.{$d$} ]]]  
	]
};

\node (n3b) at (0,-7){};

\draw[->] (n3b) -- (n4);

\node (n5) at (14,-7){
	\Tree [.{ $a$} 
		[.{$a$} 
			[.{$F$} 			
				[.{$C_p$} [.{$b$} ] [.{$C_p$} [.{$b$} ] [.{$b$} ]]] 
				[.{$C_p$} [.{$c$} ] [.{$C_p$} [.{$c$} ] [.{$c$} ]]] 
			]
			[.{$b$} [.{$b$} [.{$c$} [.{$c$} [.{$d$} ]]]]] 
		]
		[.{$b$} [.{$c$} [.{$d$} ]]]  
	]
};
\draw[->] (n4) -- (n5);
\end{tikzpicture}
\end{center}
\caption{First steps of rewriting of the recursion scheme from Example~\ref{ex:Scheme:AnBnCnD}.}\label{fig:ex-rewrittingAnBnCnD}
\end{figure}
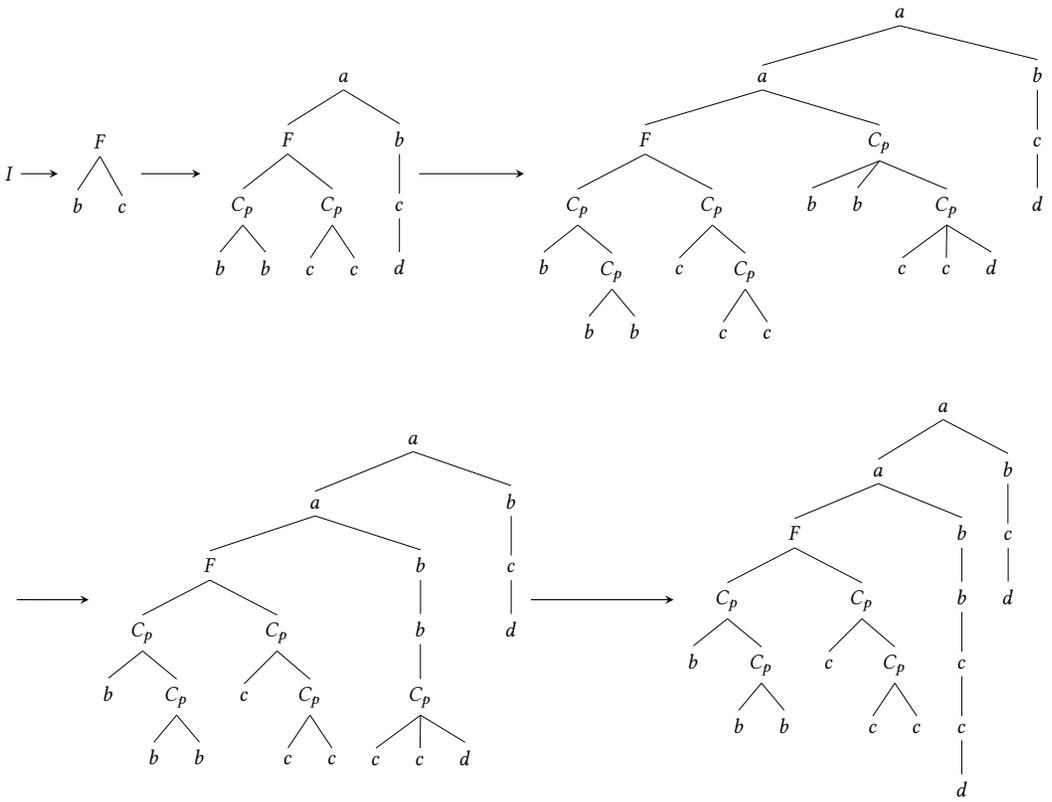

The value tree $t=\mng{\mathcal{S}}$ (depicted in Figure~\ref{fig:ex-rewrittingAnBnCnDTree}) has domain $\{0^k1^h\mid k\geq 0\text{ and } h\leq {2k+3}\}$, and is defined, for every $k\geq 0$ by $t(0^k)=a$, $t(0^k1^h) = b$ if $h\leq k+1$, $t(0^k1^h) = c$ if $k+2\leq h\leq 2k+2$ and $t(0^k1^h)=d$ if $h=2k+3$.

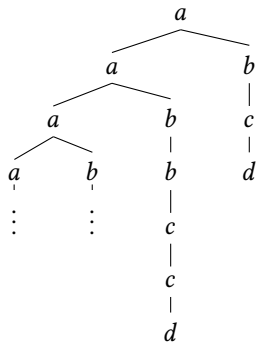
\begin{figure}
\begin{tikzpicture}[scale=1,transform shape,level distance = 20pt,sibling distance = 18pt]
\node (n) at (0,0){
	\Tree [.{ $a$} 
		[.{$a$} [.{$a$} [.{$a$} [.{$\vdots$} ]] [.{$b$} [.{$\vdots$} ]]] 
		[.{$b$} [.{$b$} [.{$c$} [.{$c$} [.{$d$} ]]] 			
		] ]]
		[.{$b$} [.{$c$} [.{$d$} ] 			
		] ]  
	]
};	
\end{tikzpicture}
\caption{The tree generated by the recursion scheme from Example~\ref{ex:Scheme:AnBnCnD} and by the CPDA from Example~\ref{Example:CPDAAnBnCndTree}.}\label{fig:ex-rewrittingAnBnCnDTree}
\end{figure}

\end{example}

\section{Collapsible Pushdown Automata}\label{section:CPDA}

\subsection{Stacks with Links and Their Operations}

Fix an alphabet $\Gamma$ of \concept{stack symbols} and a distinguished
\concept{bottom-of-stack symbol} $\bot \in \Gamma$. An \concept{order-$0$
  stack} (or simply \concept{$0$-stack}) is just a stack symbol. An
\concept{order-${(n+1)}$ stack} (or simply \concept{${(n+1)}$-stack}) $\stack$ is
a non-null sequence, written $\mksk{\stack_1 \cdots \stack_l}$, of $n$-stacks
such that every non-$\bot$ $\Gamma$-symbol $\gamma$ that occurs in $\stack$ has
a \emph{link} to a stack of some order $e$ (say, where $0 \leq e \leq n$)
situated below it in $\stack$; we call the link an \concept{${(e+1)}$-link}. The
\concept{order} of a stack $\stack$ is written $\order{\stack}$. 
The \concept{height} of a stack $\mksk{\stack_1 \cdots \stack_{l}}$ is defined as $l$.

{
\begin{remark}
One way to give a formal semantics of the stack operations is to work with appropriate numeric representations of the links as explained in \cite[Section~3.2]{HMOS17}. We believe that the informal presentation should be sufficient for this work and hence refer the reader to \cite{HMOS17} for a formal definition of stacks.
\end{remark}
}

As usual, the bottom-of-stack symbol $\bot$ cannot be popped from or
pushed onto a stack. Thus we require an \concept{order-1 stack} to be a
non-null sequence $\mksk{\gamma_1 \cdots \gamma_l}$ of elements of $\Gamma$ such
that for all $1 \leq i \leq l$, $\gamma_i = \bot$ iff $i = 1$. We inductively define
$\bot_k$, the \concept{empty ${k}$-stack}, as follows: $\bot_0 = \bot$
and $\bot_{k+1} = \mksk{\bot_k}$.

We first define the operations $\popn{i}$ and $\topn{i}$ with $i \geq
1$: $\topn{i}(\stack)$ returns the top $(i-1)$-stack of $\stack$, and
$\popn{i}(\stack)$ returns $\stack$ with its top $(i-1)$-stack
removed. Precisely let $\stack = \mksk{\stack_1 \cdots \stack_{l+1}}$ be a stack with
$1 \leq i \leq \order{\stack}$:
\[\begin{array}{rll}
\topn{i}(\underbrace{\mksk{\stack_1 \cdots \stack_{l+1}}}_{\hbox{$\stack$}}) & = &
\left\{\begin{array}{ll} \stack_{l+1} & \hbox{if $i = \order{\stack}$}\\
    \topn{i} (\stack_{l+1}) \quad & \hbox{if $i < \order{\stack}$}
\end{array}\right.\\
\popn{i}(\underbrace{\mksk{\stack_1 \cdots \stack_{l+1}}}_{\hbox{$\stack$}}) & = &  
\left\{\begin{array}{ll}
\mksk{\stack_1 \cdots \stack_l} & \hbox{if $i = \order{\stack}$ and $l \geq 1$}\\
\mksk{\stack_1 \cdots \stack_l \,
\popn{i}(\stack_{l+1})} \quad & \hbox{if $i < \order{\stack}$}
\end{array}\right.\\
\end{array}\]
By abuse of notation, we set $\topn{\order{\stack}+1}(\stack) = \stack$. Note that
$\popn{i}(\stack)$ is undefined if $\topn{i+1}(\stack)$ is a one-element
$i$-stack. For example $\popn{2}(\mksk{\mksk{\bot \, \alpha \, \beta}})$ and $\popn{1}(\mksk{\mksk{\bot \, \alpha \, \beta}\mksk{\bot}})$ are both undefined.

There are two kinds of $\mathit{push}$ operations. {We start with the
\emph{order-$1$} $\mathit{push}$}. Let $\gamma$ be a non-$\bot$ stack symbol and
$1 \leq e \leq \order{\stack}$, we define a new stack operation
$\pushlk{e}{\gamma}$ that, when applied to $\stack$, first attaches a link from
$\gamma$ to the $(e-1)$-stack \emph{immediately} below the top
$(e-1)$-stack of $\stack$, then pushes $\gamma$ (with its link) onto the top
1-stack of $\stack$. Formally for $1 \leq e \leq \order{\stack}$ and $\gamma \in
(\Gamma \setminus \makeset{\bot})$, we define
\[ \pushlk{e}{\gamma}( 
\underbrace{\mksk{\stack_1 \cdots \stack_{l+1}}}_{\hbox{$\stack$}}) = 
\left\{
\begin{array}{ll}
\mksk{\stack_1 \cdots \stack_l \, \pushlk{e}{\gamma}(\stack_{l+1})} \quad &
\hbox{if $e < \order{\stack}$}\\
\mksk{\stack_1 \cdots \stack_l \, \stack_{l+1} \, \gamma^\dag} & \hbox{if
$e = \order{\stack} = 1$}\\
\mksk{\stack_1 \cdots \stack_l \, \pushone{{\widehat{\gamma}}}(\stack_{l+1})} & \hbox{if
$e = \order{\stack} \geq 2$ and $l \geq 1$}\\
\end{array} 
\right.  \]
where 
\begin{itemize}
\item $\gamma^\dag$ denotes the symbol $\gamma$ with a link to the
0-stack $\stack_{l+1}$
\item $\widehat{\gamma}$ denotes the symbol $\gamma$ with a link to the
  $(e-1)$-stack $\stack_l$; and we define
\[\pushone{\widehat{\gamma}}(\underbrace{\mksk{\varstack_1 \cdots \varstack_{r+1}}}_{\hbox{$\varstack$}}) = 
\left\{
\begin{array}{ll}
\mksk{\varstack_1 \cdots \varstack_r \,\pushone{\widehat{\gamma}}(\varstack_{r+1})} \quad &
\hbox{if $\order{\varstack} > 1$}\\
\mksk{\varstack_1 \cdots \varstack_{r+1} \, \widehat{\gamma} } & \hbox{otherwise \ie~$\order{\varstack} = 1$}\\
\end{array}
\right.\] 
\end{itemize}

The higher-order $\pushn{j}$, where $j \geq 2$, simply duplicates the
top $(j-1)$-stack of $\stack$. Precisely, let $\stack = \mksk{\stack_1 \cdots
  \stack_{l+1}}$ be a stack with $2 \leq j \leq \order{\stack}$:
\[\begin{array}{lll}
  \pushn{j}(\underbrace{\mksk{\stack_1 \cdots \stack_{l+1}}}_{\hbox{$\stack$}}) & = & \left\{\begin{array}{ll}
      \mksk{\stack_1 \cdots \stack_{l+1} \, \stack_{l+1}} & \hbox{if $j = \order{\stack}$}\\
      \mksk{\stack_1 \cdots \stack_l \, \pushn{j} (\stack_{l+1})} \quad & \hbox{if $j < \order{\stack}$}
\end{array}\right.\\
\end{array}\]
In case $j = \order{\stack}$ above, the link structure of $\stack_{l+1}$ 
is preserved by the copy that is pushed on top by $\pushn{j}$.

We also define, for any stack symbol $\gamma$ an operation on stacks that rewrites the topmost stack symbol \emph{without modifying} its link. 
Formally: 
\[\begin{array}{lll}
  \toprew{\gamma} \, \underbrace{\mksk{s_1 \cdots s_{l+1}}}_{\hbox{$s$}} & = & \left\{\begin{array}{ll}
      \mksk{s_1 \cdots s_l \, \toprew{\gamma} s_{l+1}} \quad & \hbox{if $\order{s}>1$}\\
      \mksk{s_1 \cdots s_{l} \, \widehat{\gamma}} & \hbox{if $\order{s}=1$ and $l\geq 1$}
\end{array}\right.\\
\end{array}\]
where $\widehat{\gamma}$ denotes the symbol $\gamma$ with a link to the same target as the link from $s_{l+1}$. Note that $\toprew{\gamma}(\stack)$ is undefined if either $\topn{2}(s)$ or $s$ is the empty $1$-stack.

Finally there is an important operation called $\collapse$. We say
that the $n$-stack $\stack_0$ is a \concept{prefix} of an $n$-stack $\stack$,
written $\stack_0 \leq \stack$, just in case $\stack_0$ can be obtained from $\stack$ by a
sequence of (possibly higher-order) ${\mathit pop}$ operations.
Take an $n$-stack $\stack$ where $\stack_0 \leq \stack$, for some $n$-stack $\stack_0$, and $\topone(\stack)$ has a link to $\topn{e}(\stack_0)$. Then $\collapse \; \stack$ is defined to be $\stack_0$.

 \begin{example}\rm\label{eg:3stack}
To avoid clutter, when displaying $n$-stacks in examples, we shall omit 1-links (indeed by construction they can only point to the symbol directly below), writing e.g.~$\mksk{\mksk{\bot} \mksk{\bot \alpha \, \beta}}$ instead of 
$\pstr[.1cm][1pt]{
\mksk{
\mksk{\bot}
\mksk{
\nd(n1){\bot}\,\,
\;
\nd(n2-n1,50){\alpha}\,\,
\;
\nd(n3-n2,50){\beta}
}
}
}
$.

    Take the 3-stack $\stack = \mksk{\mksk{\mksk{ \, \bot \, \alpha}} \;
      \mksk{\mksk{ \, \bot} \mksk{ \, \bot \, \alpha}}}$. We have
    \[
       \begin{array}{rll}
         \pushlk{2}{\gamma}(s)  & = &  \pstr[.1cm]{\mksk{\mksk{\mksk{ \, \bot \, \alpha}} \;
                                            \mksk{\mklksk{n1}{ \, \bot}
                                                  \mksk{ \, \bot \, \alpha \, \nd(n2-n1){\gamma}}}}} \\

\collapse \, (\pushlk{2}{\gamma}(s)) & = & \mksk{ \mksk{\mksk{ \, \bot\, \alpha}} \; \mksk{\mksk{ \, \bot}}}
\\
         \underbrace{ \pushlk{3}{\gamma}(\toprew{\beta} ( \pushlk{2}{\gamma}(s)))}_\theta
                              &= & \pstr[.75cm]{\mksk{\mklksk{n1}{\mksk{ \, \bot \, \alpha}} \;
                                             \mksk{\mklksk{n2}{ \, \bot}
                                                   \mksk{ \, \bot \, \alpha \, \nd(n3-n2){\beta}\, \nd(n4-n1){\gamma}}}}}.
       \end{array}
    \]
    Then $\pushn{2} (\theta)$ and $\toprew{\alpha}(\pushn{3}(\theta))$ are respectively
    \[
        \begin{array}{c}
          \pstr[.6cm]{\mksk{\mklksk{n1}{\mksk{ \, \bot \, \alpha}} \;
                            \mksk{\mklksk{n2}{ \, \bot}
                                  \mksk{ \, \bot \, \alpha \, \nd(n3-n2){\beta} \, \nd(n4-n1){\gamma}}
                                  \mksk{ \, \bot \, \alpha \, \nd(n5-n2){\beta} \, \nd(n6-n1){\gamma}}}}}
          \;\hbox{and} \\

          \pstr[1cm]{\mksk{\mklksk{n1}{\mksk{ \, \bot \, \alpha}} \;
                     \mksk{\mklksk{n2}{ \, \bot}
                           \mksk{ \, \bot \, \alpha \, \nd(n3-n2,40){\beta} \, \nd(n4-n1,39){\gamma}}} \;
                     \mksk{\mklksk{n5}{}
                           \mksk{ \, \bot \, \alpha \, \nd(n6-n5,40){\beta} \, \nd(n7-n1,37){\alpha}}}}}.
        \end{array}
    \]

    We have $\collapse \, (\pushn{2}( \theta)) = \collapse \, (\toprew{\alpha}(\pushn{3}(\theta))) = \collapse( \theta)  = \mksk{\mksk{\mksk{ \, \bot \, \alpha}}}$.
  \end{example}

The set $\Op{n}{\Gamma}$ of order-$n$ CPDA \concept{stack operations} over stack alphabet $\Gamma$ (or simply $\Op{n}{}$ if $\Gamma$ is clear from the context) comprises six types of operations:

\begin{enumerate}
\item $\popn{k}$ for each $1 \leq k \leq n$,
\item $\pushn{j}$ for each $2 \leq j \leq n$, 
\item $\pushlk{e}{\gamma}$ for each $1 \leq e \leq n$ and each $\gamma \in (\Gamma
  \setminus \makeset{\bot})$,
\item $\toprew{\gamma}$ for each $\gamma \in (\Gamma
  \setminus \makeset{\bot})$,
\item $\collapse$, and
\item $\id$ for the identity operation (\ie $id(\stack)=\stack$ for all stack $\stack$).
\end{enumerate}

\subsection{Collapsible Pushdown Automata (CPDA)}

An \defin{order-$n$ (deterministic) collapsible pushdown automaton with input} ($n$-CPDA) is a 6-tuple 
$\anglebra{A, \Gamma, Q,\delta, q_0,F}$ where 
$A$ is an input alphabet containing a distinguished symbol $\silent$ standing for silent transition,
$\Gamma$ is a stack alphabet, 
$Q$ is a finite set of states, 
$q_0\in Q$ is the initial state, 
$F\subseteq Q$ is the set of final states
and $\delta \, : \, Q \times \Gamma \times A\,  \rightarrow \, Q \times \mathit{Op}_n$ is a transition (partial) function such that, for all $q\in Q$ and $\gamma \in \Gamma$,  if $\delta(q,\gamma,\silent)$ is defined then for all $a\in A$, $\delta(q,\gamma,a)$ is undefined (i.e.~if some silent transition can be taken, then no other transition is possible).

In the special case where $\delta(q,\gamma,\silent)$ is undefined for all $q\in Q$ and $\gamma\in\Gamma$ we refer to $\mathcal{A}$ as an $\silent$-free $n$-CPDA. In the special case where $\delta$ never performs a $\collapse$ (\ie links can safely be forgotten) we obtain the (weaker) model of \defin{higher-order pushdown automata}.

\defin{Configurations} of an $n$-CPDA are pairs of the form $(q, s)$
where $q \in Q$ and $s$ is an $n$-stack over $\Gamma$; the \defin{initial configuration} is $(q_0, \bot_n)$ and 
\defin{final configurations} are those whose control state belongs to $F$. Note that in some context (\eg when generating a tree or a game), final configurations are useless but we still assume a set $F$ for homogeneity in definitions.

An $n$-CPDA $\mathcal{A}=\anglebra{A, \Gamma, Q,\delta, q_0,F}$ naturally defines an $A$-labelled \emph{deterministic} (transition) graph $\transgraph{\mathcal{A}}=(V,E)$ whose
vertices $V$ are the configurations of $\mathcal{A}$ and whose edge relation $E\subseteq V\times A\times V$ is given by: 
$((q,s),a,(q',s'))\in E$ iff $\delta(q,\topn{1}(s),a)=(q',op)$ and $s'=op(s)$. Such a graph is called an \defin{$n$-CPDA graph}. 

\begin{example}\label{Example:CPDAAnBnCnd}
Consider the order-$2$ CPDA  $\mathcal{A}=\anglebra{\{1,2,\silent\}, \{\bot,\alpha\}, \{q_a,q_b,q_c,q_d,\widetilde{q}_a,\widetilde{q}_b,\widetilde{q}_c\},\delta, \widetilde{q}_a,q_d}$ with $\delta$ as follows:
\begin{itemize}
\item $\delta(\widetilde{q}_a,\bot,\silent)=\delta({q}_a,\alpha,1)=(q_a,\pushone{\alpha})$;
\item $\delta({q}_a,\alpha,2)=(\widetilde{q}_b,\pushn{2})$;
\item $\delta(\widetilde{q}_b,\alpha,\silent)=\delta({q}_b,\alpha,2)=(q_b,\popn{1})$;
\item $\delta({q}_b,\bot,2)=(\widetilde{q}_c,\popn{2})$;
\item $\delta(\widetilde{q}_c,\alpha,\silent)=\delta({q}_c,\alpha,2)=(q_c,\popn{1})$;
\item $\delta({q}_c,\bot,2)=({q}_d,id)$;
\end{itemize}

Its transition graph is depicted in Figure~\ref{Fig:Example:CPDAAnBnCnd}.
\end{example}

\begin{figure}
\begin{tikzpicture}[scale=1,transform shape]
\tikzstyle{every node}=[font=\normalsize]
\tikzset{>=stealth}
\tikzset{edge from parent/.style={draw,-}}
\node (n11) at (3,1.5) {$(\widetilde{q}_a,[[\bot]])$};
\node (n12) at (3,0) {$({q}_a,[[\bot\alpha]])$};
\node (n13) at (7,0) {$({q}_a,[[\bot\alpha\alpha]])$};
\node (n14) at (11,0) {$({q}_a,[[\bot\alpha\alpha\alpha]])$};
\node (n15) at (13,0){};
\draw[->] (n11) -- (n12) node[pos=0.5,left]{\small $\silent$};;
\draw[->] (n12) -- (n13) node[pos=0.5,above]{\small $1$};;
\draw[->] (n13) -- (n14) node[pos=0.5,above]{\small $1$};;
\draw[->,dotted] (n14) -- (n15) node[pos=0.5,above]{\small $1$};;

\node (n22) at (3,-1.5) {$(\widetilde{q}_b,[[\bot\alpha][\bot\alpha]])$};
\node (n23) at (7,-1.5) {$(\widetilde{q}_b,[[\bot\alpha\alpha][\bot\alpha\alpha]])$};
\node (n24) at (11,-1.5) {$(\widetilde{q}_b,[[\bot\alpha\alpha\alpha][\bot\alpha\alpha\alpha]])$};

\draw[->] (n12) -- (n22) node[pos=0.5,left]{\small $2$};;
\draw[->] (n13) -- (n23) node[pos=0.5,left]{\small $2$};;
\draw[->] (n14) -- (n24) node[pos=0.5,left]{\small $2$};;

\node (n32) at (3,-3) {$({q}_b,[[\bot\alpha][\bot]])$};
\node (n33) at (7,-3) {$({q}_b,[[\bot\alpha\alpha][\bot\alpha]])$};
\node (n34) at (11,-3) {$({q}_b,[[\bot\alpha\alpha\alpha][\bot\alpha\alpha]])$};

\draw[->] (n22) -- (n32) node[pos=0.5,left]{\small $\silent$};;
\draw[->] (n23) -- (n33) node[pos=0.5,left]{\small $\silent$};;
\draw[->] (n24) -- (n34) node[pos=0.5,left]{\small $\silent$};;

\node (n42) at (3,-4.5) {$(\widetilde{q}_c,[[\bot\alpha]])$};
\node (n43) at (7,-4.5) {$({q}_b,[[\bot\alpha\alpha][\bot]])$};
\node (n44) at (11,-4.5) {$({q}_b,[[\bot\alpha\alpha\alpha][\bot\alpha]])$};

\draw[->] (n32) -- (n42) node[pos=0.5,left]{\small $2$};;
\draw[->] (n33) -- (n43) node[pos=0.5,left]{\small $2$};;
\draw[->] (n34) -- (n44) node[pos=0.5,left]{\small $2$};;

\node (n52) at (3,-6) {$({q}_c,[[\bot]])$};
\node (n53) at (7,-6) {$(\widetilde{q}_c,[[\bot\alpha\alpha]])$};
\node (n54) at (11,-6) {$({q}_b,[[\bot\alpha\alpha\alpha][\bot]])$};

\draw[->] (n42) -- (n52) node[pos=0.5,left]{\small $\silent$};;
\draw[->] (n43) -- (n53) node[pos=0.5,left]{\small $2$};;
\draw[->] (n44) -- (n54) node[pos=0.5,left]{\small $2$};;

\node (n62) at (3,-7.5) {$({q}_d,[[\bot]])$};
\node (n63) at (7,-7.5) {$({q}_c,[[\bot\alpha]])$};
\node (n64) at (11,-7.5) {$(\widetilde{q}_c,[[\bot\alpha\alpha\alpha]])$};

\draw[->] (n52) -- (n62) node[pos=0.5,left]{\small $2$};;
\draw[->] (n53) -- (n63) node[pos=0.5,left]{\small $\silent$};;
\draw[->] (n54) -- (n64) node[pos=0.5,left]{\small $2$};;
\draw[->] (n63) -- (n52) node[pos=0.5,below left]{\small $2$};;

\node (n74) at (11,-9) {$({q}_c,[[\bot\alpha\alpha]])$};

\draw[->] (n64) -- (n74) node[pos=0.5, left]{\small $\silent$};
\draw[->] (n74) -- (n63) node[pos=0.5,below left]{\small $2$};;

\end{tikzpicture}
\caption{The transition graph of the order-$2$ CPDA from Example~\ref{Example:CPDAAnBnCnd}}\label{Fig:Example:CPDAAnBnCnd}
\end{figure}
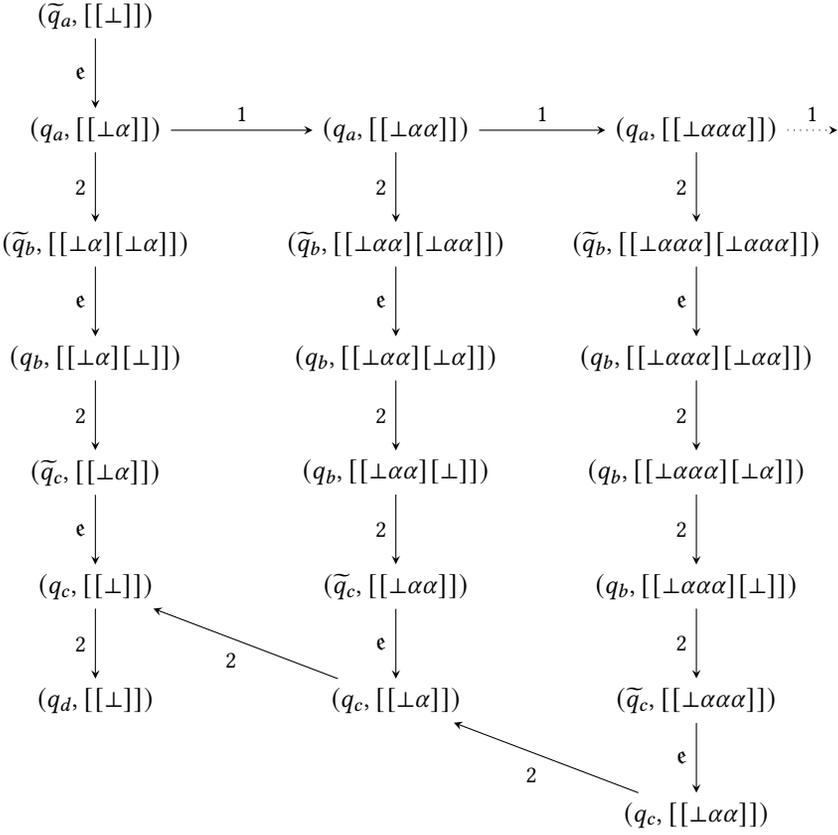

In this paper we will use $n$-CPDA for three different purposes: as words acceptors, as generators for infinite trees and as generators of parity game.

\subsection{Using an $n$-CPDA as a Words Acceptor}

A order-$n$ CPDA $\mathcal{A}=\anglebra{A, \Gamma, Q,\delta, q_0,F}$ accepts the set of words $w  \in (A\setminus\{\silent\})^*$ labelling a run from the initial configuration to a final configuration (interpreting $\silent$ as a silent move). We write
$\langweps{\mathcal{A}}$ {for} the accepted language.

Following the notations from Section~\ref{subsection:graphs}, and letting, for a word $w =a_1 \cdots a_k\in A_{\silent}^*$, $L_w = \silent^* a_1 \silent^* \cdots \silent^* a_k\silent^*$ one has $$\langweps{\mathcal{A}}=\{w\in A_{\silent}^* \mid (q_0,\bot_{n})\era{L_w} (q_f,\stack) \text{ with } q_f\in F\} $$

\begin{example}
	Consider again the order-$2$ CPDA from Example~\ref{Example:CPDAAnBnCnd}. Then its accepted language is $\{1^k 2^{2k+3}\mid k\geq 0\}$.
\end{example}

\subsection{Using an $n$-CPDA as an Infinite Tree Generator}

We now explain how to generate a $\Sigma$-labelled ranked and ordered trees using an $n$-CPDA. For this, let $\Sigma$ be a ranked alphabet, let $m=\max\{\arity{a}\mid a\in \Sigma\}$ and consider an $n$-CPDA $\mathcal{A}=\anglebra{A, \Gamma, Q,\delta, q_0,F}$ where $A=\{1,\dots,m\}\cup\{\silent\}$ together with a function $\rho:Q\rightarrow \Sigma$ that we extend as a function from configurations of $\mathcal{A}$ by letting $\rho((q,\stack))=\rho(q)$. Moreover, assume that, for all $q\in Q$ and $\gamma\in \Gamma$, $\{a\mid a \neq \silent \text{ and } (q,\gamma,a)\in\dom{\delta}\}=\{1,\dots,\arity{\rho(q)}\}$.

Then, let $G=\transgraph{\mathcal{A}}$ be the $A$-labelled deterministic transition graph associated with $\mathcal{A}$ and following Section~\ref{subsection:graphs}, consider the tree $t_\mathcal{A}: \unfold{G}\rightarrow\Sigma$ obtained by taking as domain the tree obtained by unfolding $G$ from the initial configuration $(q_0,\bot_n)$, contracting all $\silent$-transitions and labelling its nodes thanks to function $\rho$. 

\begin{example}\label{Example:CPDAAnBnCndTree}
	Consider again the order-$2$ CPDA from Example~\ref{Example:CPDAAnBnCnd} and the ranked alphabet $\Sigma=\{a,b,c,d\}$ where $\arity{a}=2$, $\arity{b}=\arity{c}=1$, and $\arity{d}=0$. Define $\rho:Q\rightarrow \Sigma$ by letting $\rho(q_a)=\rho(\widetilde{q}_a)=a$, $\rho(q_b)=\rho(\widetilde{q}_b)=b$, $\rho(q_c)=\rho(\widetilde{q}_c)=c$ and $\rho(q_d)=d$. Then, the tree $t_{\mathcal{A}}$ generated by $\mathcal{A}$ with the labelling function $\rho$ is the one depicted in Figure~\ref{fig:ex-rewrittingAnBnCnDTree}.
\end{example}

\subsection{Using an $n$-CPDA to Define a Parity Game}

Let $\mathcal{A}=\anglebra{A, \Gamma, Q,\delta, q_0,F}$ be an order-$n$ CPDA and let
$(V,E)$ be the graph obtained from $\transgraph{\mathcal{A}}$ by removing edge-labels. To stick to the definition in Section~\ref{def:paritygames} we assume that $\transgraph{\mathcal{A}}$ has no dead-end.
Let $Q_\Ei\uplus Q_\Ai$ be a partition of $Q$ and let $\col:Q \rightarrow \colors\subset\mathbb{N}$ be a colouring function (over states). Altogether they
define a partition $V_\Ei\uplus V_\Ai$ of $V$ whereby a vertex belongs
to $V_\Ei$ iff its control state belongs to $Q_\Ei$, and a colouring
function $\col:V \rightarrow \colors$ where a vertex is assigned the colour
of its control state. The structure
$\mathcal{G}=(\transgraph{\mathcal{A}},V_\Ei,V_\Ai)$ defines an arena and the pair $\mathbb{G}=(\mathcal{G},\col)$ defines a parity game (that we call an 
\defin{$n$-CPDA parity game}). 

In this context, one can also use a CPDA to define a strategy. Indeed, fix an order-$n$ CPDA $\mathcal{A}=\anglebra{A,\Gamma, Q,\delta, q_0,F}$ defining a $n$-CPDA parity game $\pgame$. 

{Consider a partial play $v_0v_1\cdots v_\ell$ in $\pgame$ where $v_0 = (q_0,\bot_n)$, together with the sequence of labels $\play\in A^*$ of the corresponding path. As $\mathcal{A}$ is deterministic, one can represent a strategy as a (partial) function $\strat:A^*\rightarrow A$.}

Now let $\mathcal{A}'=\anglebra{A,\Gamma',Q',\delta',q_0',F'}$ be an order-$n$ CPDA together with a function $\tau:Q'\times\Gamma'\rightarrow A$. Then $\mathcal{A'}$ defines a strategy $\strat_{\mathcal{A}'}$ for \Eloise in $\pgame$ by letting $\strat_{\mathcal{A}'}(\play)=\tau((q',\topn{1}(\stack')))$ where $(q',\stack')$ is the (unique) configuration reached by $\mathcal{A}'$ from its initial configuration by reading $\lambda$ (seen as an element in $A^*$ as explained above).

We say that $\mathcal{A}$ and $\mathcal{A'}$ are \concept{synchronised} if, for all $u\in A^*$ and $a\in A$, if $(q,\stack)$ and $(q',\stack')$ denote the respective configurations reached by $\mathcal{A}$ and $\mathcal{A}'$ when reading $u$ from their initial configuration, then $\delta(q,\topn{1}(\stack),a)$ and $\delta'(q',\topn{1}(\stack'),a)$ {yields the same stack operation up to a renaming of $\gamma$ in $\pushlk{e}{\gamma}$ and $\toprew{\gamma}$}. In particular it implies that the stacks $\stack$ and $\stack'$ have the same shape if one defines the \emph{shape} of a stack as the stack obtained by replacing all symbols appearing in $\stack$ by a fresh symbol $\sharp$ (but keeping the links).

\section{Known results}\label{section:KnownResult}

We give now a few known results that we will build on in the following. Complete proofs of these results can be found in the companion papers~\cite{HMOS17} and \cite{BCHMOS20}.

\subsection{The Equi-Expressivity Theorem}

In \cite{HMOS08,HMOS17} the following equi-expressivity result was proved (see also \cite{CS12} for an alternative proof).

\begin{theorem}[Equi-expressivity]\label{theo:equi} Order-$n$ recursion schemes and order-$n$ collapsible pushdown automata are equi-expressive for generating trees. That is, we have the following.
\begin{enumerate}
\item Let $\rscheme$ be an order-$n$ recursion scheme over $\Sigma$ and let $t$ be its value tree. There is an order-$n$ CPDA $\mathcal{A}=\anglebra{A, \Gamma, Q,\delta, q_0,F}$, and a function $\rho: Q \rightarrow \Sigma$ such that $t$ is the tree generated by $\mathcal{A}$ and $\rho$.
\item Let $\mathcal{A}=\anglebra{A, \Gamma, Q,\delta, q_0,F}$ be an order-$n$ CPDA, and let $t$ be the $\Sigma$-labelled tree generated by $\mathcal A$ and a function $\rho: Q \rightarrow \Sigma$. There is an order-$n$ recursion scheme over $\Sigma$ whose value tree is $t$.
\end{enumerate}
Moreover the inter-translations between schemes and CPDA are polytime computable.
\end{theorem}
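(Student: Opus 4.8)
The plan is to establish the two inclusions by separate constructions, since each direction calls for rather different machinery, while tracking the size of every object produced so that the concluding polynomial-time statement follows at once.

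For direction (1), from a scheme $\mathcal{S}$ to an $n$-CPDA, I would not rewrite $\mathcal{S}$ directly but work with its \emph{computation tree}: the infinite tree obtained by unfolding the rewrite rules while keeping $\lambda$-abstractions and application nodes explicit and performing no $\beta$-reduction, i.e. the B\"ohm-tree-style unfolding of $\mathcal{S}$ read as a simply-typed term (in $\eta$-long form). Over this tree I would define \emph{traversals} in the sense of game semantics: finite node sequences carrying justification pointers and generated by a fixed set of local rules. The crucial intermediate result, a Correspondence Theorem, asserts that the maximal paths of the value tree $\mng{\mathcal{S}}$ are recovered as the $\Sigma$-labelled projections of traversals of the computation tree. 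Granting this, I would build an $n$-CPDA whose configurations encode the current traversal, with each justification pointer to a $\lambda$-node represented by a stack link, so that following a pointer is carried out by a single $\collapse$; one then checks that each traversal step is simulated by a bounded sequence of operations from $\Op{n}{\Gamma}$, that the labelling $\rho$ reads off the $\Sigma$-symbol at the current node, and that the order of the CPDA equals the maximal $\order{\cdot}$ of the non-terminals of $\mathcal{S}$.

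For direction (2), from an $n$-CPDA to a scheme, I would instead give a direct syntactic encoding of stacks as terms: an order-$k$ stack is represented by an order-$k$ term over an auxiliary typed alphabet, where a symbol carrying an $(e{+}1)$-link is encoded together with an extra argument holding the term for the $(e{-}1)$-stack it points to --- a ``collapse continuation.'' Each stack operation ($\pushn{j}$, $\popn{k}$, the order-$1$ pushes, $\toprew{\gamma}$, $\collapse$, $\id$) is then realised by a non-terminal of suitable type acting on these encodings, while the control state and top symbol select the operation fired and, via the terminal $\rho(q)$, the node label emitted together with its children $1,\dots,\arity{\rho(q)}$. A routine induction on run length shows that the value tree of the resulting scheme equals the tree generated by the CPDA, and the type discipline keeps its order at $n$. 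In both constructions the output is of size polynomial in the input and is produced in polynomial time, which yields the final ``moreover'' clause.

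I expect the main obstacle to be the handling of $\collapse$ and of the links it depends on. In direction (1) this difficulty is concentrated in the Correspondence Theorem and in the invariant that pins the shape of the stack to the current traversal: one must arrange that the game-semantic pointers coincide with stack links so that a lone $\collapse$ realises exactly the jump back to the justifier. In direction (2) the delicate point is encoding a link faithfully inside a \emph{typed} term without pushing the order above $n$, since the na\"ive continuation-passing encoding of a collapse target tends to inflate type order; getting the types right is what makes the order bound tight.
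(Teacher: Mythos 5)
This theorem is not actually proved in the paper you were given: it appears in Section~5 (``Known results'') as an import from the companion papers~\cite{HMOS08,HMOS17}, with an alternative proof in~\cite{CS12}, so there is no in-paper argument to compare your attempt against. That said, your sketch reconstructs essentially the strategy of the cited proofs: direction~(1) via the computation tree of the scheme, game-semantic traversals, a path--traversal Correspondence Theorem, and a CPDA in which justification pointers are materialised as stack links so that a single $\collapse$ realises a pointer jump; direction~(2) via a typed continuation-passing encoding of stacks in which each linked symbol carries the encoding of its collapse target as an extra argument, with the type discipline keeping the scheme at order~$n$. Be aware, however, that as written your proposal is a plan rather than a proof: the Correspondence Theorem, the invariant tying the stack shape to the current traversal, and the type-level bookkeeping that keeps the order bound tight are precisely the deep technical content (they occupy the bulk of~\cite{HMOS17}), and your text asserts them rather than establishes them; the same goes for the polynomial size bounds underlying the ``polytime'' clause, which you state but do not verify against the constructions.
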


\subsection{Collapsible Pushdown Parity Games And $\mu$-Calculus Definable Sets}

We refer the reader to~\cite{BCHMOS20} for a unified and self content presentation of the following results.

Collapsible pushdown parity games were first studied in~\cite{HMOS08} where it was established that one could decide the winner from a given initial vertex. 

\begin{theorem}\label{theorem:winning}
Let $\mathcal{A}=\anglebra{A,\Gamma, Q,\delta, q_0,F}$ be the $n$-CPDA and let $\pgame$ be an $n$-CPDA parity game defined from $\mathcal{A}$. Then deciding whether $(q_0,\bot_n)$ is winning for \Eloise is an $n$-\exptime~complete problem.
\end{theorem}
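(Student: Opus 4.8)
The plan is to establish the two bounds separately, the $n$-\exptime membership being by far the substantial direction. For the upper bound I would argue by induction on the order $n$, showing that any $n$-CPDA parity game $\pgame$ defined from $\mathcal{A}=\anglebra{A,\Gamma,Q,\delta,q_0,F}$ can be effectively transformed into an \emph{equivalent} $(n-1)$-CPDA parity game $\pgame'$ — equivalent in the sense that \Eloise wins $\pgame$ from $(q_0,\bot_n)$ if and only if she wins $\pgame'$ from its initial configuration — at the cost of an exponential blow-up in the number of control states and of colours. Iterating this reduction $n$ times produces a \emph{finite} parity game whose size is an $n$-fold exponential $\expn{n}$ in the size of $\mathcal{A}$; since a finite parity game is solvable in time polynomial in its arena for a bounded number of priorities, the dominant cost is the $n$-fold exponential arena, and the whole procedure runs in $n$-\exptime. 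The base case $n=0$ is immediate, the arena being finite and the game solved directly.

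For the single order-reduction step I would simulate the topmost order, i.e.\ the sequence of top $(n-1)$-stacks manipulated by $\pushn{n}$ and $\popn{n}$, inside the finite control rather than on the stack. Following the classical rank construction (Walukiewicz for order $1$, generalised in~\cite{HMOS08}), the key idea is that instead of storing the current top $(n-1)$-stack, one records a \emph{guess} $\ColRk$ summarising, for each state, the smallest colour with which play could return \emph{after} this top $(n-1)$-stack is eventually popped. \Eloise proposes such guesses and \Abelard is allowed to challenge them, a challenge being resolved by branching the play into a verification subgame; when a $\pushn{n}$ duplicates the top $(n-1)$-stack the associated guess is stored, and when a $\popn{n}$ removes it the guess is consumed and checked. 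Correctness reduces to the standard fact that an infinite play visits a minimal colour determined by the interplay of these locally-guessed colours.

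The main obstacle is the faithful treatment of links and of the $\collapse$ operation, which is exactly what separates collapsible from ordinary higher-order pushdown games. A symbol pushed by $\pushlk{e}{\gamma}$ carries a link to a lower-order stack, and a single $\collapse$ can delete everything above that target; moreover $\pushn{n}$ duplicates links, so the guesses attached to copied stacks must stay consistent. I would handle this by enriching the stored information with a \emph{link rank} $\LinkRk$ that attaches to the target of each link the same kind of colour/threat summary, so that a $\collapse$ can be simulated purely by consulting this recorded summary rather than by reconstructing the target stack. Proving that the $\ColRk$ and $\LinkRk$ data remain coherent through $\pushn{n}$ duplications, $\popn{n}$, and $\collapse$ is the delicate, bookkeeping-heavy core of the argument; I would lean on the detailed construction and its correctness proof in~\cite{HMOS08,HMOS17,BCHMOS20} rather than reproduce it.

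For the matching lower bound I would show $n$-\exptime-hardness by reduction. Since a CPDA whose transition function never performs a $\collapse$ is exactly an order-$n$ higher-order pushdown automaton, and the winner problem for order-$n$ higher-order pushdown parity games is already $n$-\exptime-hard, the hardness is inherited directly by the (more general) collapsible model. Concretely such games encode the computation of an alternating Turing machine with an $\expn{n}$ resource bound: the two players resolve the machine's existential and universal alternation, while the order-$n$ stack maintains counters and tape contents of tower-of-exponentials size, so that the parity condition expresses acceptance. This yields $n$-\exptime-hardness and, together with the membership above, completeness.
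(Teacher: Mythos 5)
Your proposal follows essentially the same route as the proof this theorem rests on (the paper imports it from~\cite{HMOS08}, with the full proof in the companion paper~\cite{BCHMOS20}): first make the CPDA rank-aware so that $\collapse$ can be resolved from finitely many recorded link summaries, then perform a Walukiewicz-style order-reduction in which \Eloise guesses return colours for the top $(n-1)$-stack, iterate this down to a finite parity game, and inherit $n$-\exptime-hardness from ordinary higher-order pushdown parity games, which are exactly the collapse-free special case of CPDA games. Both your decomposition and your two key devices (return-colour summaries and link-rank summaries for $\collapse$) coincide with that source, and deferring the bookkeeping correctness to the cited works is consistent with what the paper itself does.

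One point needs repair, and it matters because the theorem is a completeness claim, so the complexity accounting is part of the content. You assert that each order-reduction step costs ``an exponential blow-up in the number of control states and of colours'', but you then solve the final finite game ``in time polynomial in its arena for a bounded number of priorities''. These two statements are incompatible: if the colour set grew exponentially at every step, then after $n$ steps you would face a finite parity game whose number of priorities is itself $n$-fold exponential, and no algorithm solves such a game in time polynomial in the arena (even quasi-polynomial parity-game algorithms would then push the total cost beyond $n$-\exptime). The constructions you invoke avoid exactly this: the Walukiewicz-style reduction leaves the set of colours (essentially) unchanged --- only the state space explodes --- and the invariance of the priority set across all $n$ reduction steps is precisely what legitimises your final ``bounded number of priorities'' argument. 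With that correction, the upper-bound analysis is sound, and the lower bound as you state it is fine.
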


This was further extended in~\cite{BCOS10} where the computation of a (finite presentation) of the winning region was considered. In particular, from a game $\pgame$ one can build a new game that behaves the same but where the winning region is explicitly marked.

\begin{theorem}\label{theorem:marking}
Let $\mathcal{A}=\anglebra{A,\Gamma, Q,\delta, q_0,F}$ be an $n$-CPDA and let $\pgame$ be the $n$-CPDA parity game defined from $\mathcal{A}$. Then, one can build an order-$n$ CPDA $\mathcal{A}'=\anglebra{A,\Gamma', Q',\delta', q_0',F'}$ such that the following holds.
\begin{enumerate}
\item Restricted to the reachable configurations from their respective initial configuration, the transition graph of $\mathcal{A}$ and $\mathcal{A'}$ are isomorphic.
\item For every configuration $(q,\stack)$ of $\mathcal{A}$ that is reachable from the initial configuration, the corresponding configuration $(q',\stack')$ of $\mathcal{A'}$ is such that $(q,\stack)$ is winning for \Eloise in $\pgame$ if and only if $q'\in F$.
\end{enumerate}
\end{theorem}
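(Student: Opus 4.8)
The plan is to construct $\mathcal{A}'$ as a \emph{synchronised} refinement of $\mathcal{A}$, in the precise sense defined above: $\mathcal{A}'$ performs the same type of stack operation as $\mathcal{A}$ at every step, but carries inside an enlarged state set $Q'$ and an enlarged stack alphabet $\Gamma'$ some finite bookkeeping information that suffices to decide \emph{locally} whether the current configuration lies in \Eloise's winning region. The set $F'$ is then defined to consist exactly of those enlarged states that witness membership in the winning region. Because $\mathcal{A}$ and $\mathcal{A}'$ are synchronised, the shapes of their stacks coincide step by step, so the extra information is a deterministic function of the run; this makes $\mathcal{A}'$ deterministic and puts its reachable configurations in bijection with those of $\mathcal{A}$ in a way that respects edges, giving item~(1). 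Item~(2) then follows from the design invariant that the enlarged top-of-stack data together with the current control state determines winning.

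The crux is to identify the finite information to be stored and to argue it can be computed and maintained. Following the summary/prediction technique standard for pushdown parity games, I would annotate each symbol pushed onto the stack with a \emph{prediction}: a finite relation recording, for each control state $r$ and each colour $c$, whether \Eloise can guarantee that a play descending to the level of that symbol and eventually uncovering it returns in state $r$ with minimal colour $c$ along the excursion. Since $Q$ and the colour set are finite, these predictions are finite objects and enlarge $\Gamma$ to a finite $\Gamma'$; it is natural also to track the least colour seen so far in the current excursion within the state, so that the parity condition is reduced to reading off this rank-aware component. The effective solvability of these games (Theorem~\ref{theorem:winning}), together with the game-solving apparatus of the companion paper~\cite{BCHMOS20}, is what guarantees both that the \emph{correct} (optimal) prediction exists and that it can be packaged so as to be computable.

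Next I would fix the update of the annotation under each of the six stack operations, the key point being that the prediction attached to a new symbol must be the correct one and not a guessed one, so that determinism and hence the graph isomorphism of item~(1) are preserved: on a $\pushn{j}$ or a $\pushlk{e}{\gamma}$ the new symbol receives a prediction computed from the current top annotation and the symbol pushed; a $\popn{k}$ simply uncovers a symbol whose annotation is already in place; and $\toprew{\gamma}$ and $\id$ leave the relevant data untouched. The main obstacle, as always for collapsible pushdown games, is $\collapse$: a collapse link may jump to an arbitrary lower stack of some order $e$, so the prediction stored at the \emph{target} level must faithfully anticipate the effect of the collapse, a consistency requirement that has to hold across a push/collapse pair separated by arbitrarily much intervening stack activity. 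Establishing this consistency is precisely what forces the full higher-order summary machinery rather than the elementary order-$1$ argument, and it is where the construction and its correctness proof (carried out in detail in~\cite{BCHMOS20}) do the real work. Once the invariant relating the enlarged top-of-stack data to the winning region is in place, setting $F'$ to the states satisfying it and reading off items~(1) and~(2) is routine.
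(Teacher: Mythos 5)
There is a genuine gap, and it sits exactly where your proposal admits it does. Your construction hinges on the claim that, at each $\pushn{j}$ or $\pushlk{e}{\gamma}$, the automaton can attach to the newly pushed symbol the \emph{correct} prediction, ``computed from the current top annotation and the symbol pushed.'' At order $1$ this is sound: the set of (return state, minimal colour) pairs that \Eloise can guarantee when a pushed symbol $\gamma$ is eventually uncovered is a function of the current state and $\gamma$ alone, independent of the stack below, so the summaries form a finite precomputable table and the local update rule exists. At order $\geq 2$, and a fortiori in the presence of $\collapse$, this is precisely what fails: the return behaviour of a pushed order-$(j-1)$ stack depends on the entire (unbounded) content of that stack, and a collapse link can transfer control to a lower stack determined by arbitrarily old history, so the ``correct prediction'' is not a function of any finite amount of local data, and no finite $\Gamma'$ obtained by decorating symbols with relations over $Q$ and colours can be maintained by local updates. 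Your text concedes this (``it is where the construction and its correctness proof (carried out in detail in~\cite{BCHMOS20}) do the real work''), which means the proposal is not a proof but a restatement of the difficulty followed by a citation of the very result to be proved. Invoking Theorem~\ref{theorem:winning} does not fill the hole either: decidability of the winner from the initial configuration gives no uniform, automaton-computable description of winning over \emph{all} reachable configurations — that uniformity is the content of Theorem~\ref{theorem:marking} itself.

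For comparison, the paper does not prove this statement in-text at all: it is imported from~\cite{BCOS10}, with complete proofs in the companion paper~\cite{BCHMOS20}, and the argument there is structured quite differently from a local summary annotation. It proceeds by induction on the order $n$ through a chain of game reductions — first making the CPDA rank-aware, then eliminating collapse links, then reducing an order-$n$ game to an order-$(n-1)$ game — and the order-$n$ CPDA marking the winning region is obtained by lifting back, through these reductions, the marking automaton constructed inductively at order $n-1$. So the finite information that ends up in the enlarged states and stack alphabet is not a locally computed prediction but the residue of solving the whole tower of reduced games. If you want a self-contained proof, you would need to reproduce (or at least correctly outline) that inductive reduction machinery; as written, your sketch describes the order-$1$ intuition and defers the higher-order case to the literature.
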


Regarding $\mu$-calculus global model-checking against graphs generated by CPDA the following logical counterpart of Theorem~\ref{theorem:marking} was first proved in~\cite{BCOS10}.

\begin{theorem}\label{theorem:marking-mu-calculus}
Let $\mathcal{A}=\anglebra{\Gamma, Q,\delta, q_0,F}$ be an $n$-CPDA and let $\phi$ be a $\mu$-calculus formula defining a subset of vertices in the configuration graph of $\mathcal{A}$. Then, one can build an order-$n$ CPDA $\mathcal{A}'=\anglebra{A,\Gamma', Q',\delta', q_0',F'}$ and a mapping $\chi:Q'\rightarrow Q$ such that the following holds.
\begin{enumerate}
\item Restricted to the reachable configurations from their respective initial configuration, the transition graph of $\mathcal{A}$ and $\mathcal{A'}$ are isomorphic.
\item For every configuration $(q,\stack)$ of $\mathcal{A}$ that is reachable from the initial configuration, the corresponding configuration $(q',\stack')$ of $\mathcal{A'}$ is such that $q=\chi(q')$, and $\phi$ holds in $(q,\stack)$ if and only if $q'\in F$.
\end{enumerate}
\end{theorem}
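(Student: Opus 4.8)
The plan is to recast $\mu$-calculus model-checking over the configuration graph $\transgraph{\mathcal{A}}$ as the problem of computing the winning region of an auxiliary $n$-CPDA parity game, and then to feed that game to Theorem~\ref{theorem:marking}. First I would invoke the standard equivalence between the modal $\mu$-calculus and parity games (see e.g.~\cite{AN01,Wilke2001}) to fix the \emph{model-checking game} associated with $\phi$: its positions are pairs $(c,\psi)$ where $c$ ranges over configurations of $\mathcal{A}$ and $\psi$ over the finitely many subformulas of $\phi$; ownership of a position is dictated by the outermost connective of $\psi$ (disjunctions and diamonds to \Eloise, conjunctions and boxes to \Abelard, with fixpoint and variable positions being deterministic), and the colouring is read off from the alternation depth of the bound fixpoint variables so that the parity condition captures the semantics of the nested fixpoints. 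By construction \Eloise wins from $(c,\phi)$ if and only if $c\models\phi$.

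Next I would realise this game by a CPDA $\mathcal{P}$ that is \emph{synchronised} with $\mathcal{A}$ in the sense of Section~\ref{section:CPDA}. The control states of $\mathcal{P}$ carry the subformula component together with a tag distinguishing a \emph{simulation} mode from an \emph{evaluation} mode. In simulation mode $\mathcal{P}$ merely mimics $\mathcal{A}$ (one simulation state per state of $\mathcal{A}$, recorded by a projection $\chi$, and the very same stack operations), so that simulation configurations are in bijection with those of $\mathcal{A}$; in addition, from every simulation configuration over $c$ a single distinguished input letter performs $\id$ and enters evaluation mode at the position encoding $(c,\phi)$. In evaluation mode the Boolean and fixpoint steps are $\id$-moves that only rewrite the finite control, whereas the modal steps reuse \emph{verbatim} the stack operation that $\mathcal{A}$ performs on the corresponding input letter. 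Thus every stack-modifying move of $\mathcal{P}$ is one of $\mathcal{A}$'s, evaluation mode never feeds back into simulation mode, and the parity game $\pgame_\phi$ defined by $\mathcal{P}$ (with the ownership and colouring above) is an $n$-CPDA parity game in which the evaluation-entry position above $c$ is winning for \Eloise iff $c\models\phi$.

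I would then apply Theorem~\ref{theorem:marking} to $\mathcal{P}$ and $\pgame_\phi$, obtaining an $n$-CPDA $\mathcal{P}'$ with enriched stack alphabet whose transition graph is isomorphic, on reachable configurations, to that of $\mathcal{P}$, and whose control states record membership in \Eloise's winning region. The desired $\mathcal{A}'$ is read off from the simulation fragment of $\mathcal{P}'$: its states are the simulation-mode states of $\mathcal{P}'$, its transitions are the simulation-mode transitions with the single evaluation-entry letter deleted, and $\chi$ projects a state to its $\mathcal{A}$-component. Since simulation mode of $\mathcal{P}$ is isomorphic to $\mathcal{A}$ and is never re-entered from evaluation mode, the transition graph of $\mathcal{A}'$ restricted to reachable configurations is isomorphic to that of $\mathcal{A}$, giving clause~(1). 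For clause~(2), at a reachable simulation configuration with control state $q'$ and top stack symbol $\gamma$ the evaluation-entry transition $\delta'(q',\gamma,\cdot)=(q'',\id)$ is unique ($\mathcal{P}'$ is deterministic) and leaves the stack unchanged, so its target is winning exactly when $q''$ lies in the final set of $\mathcal{P}'$, which by Theorem~\ref{theorem:marking} happens iff $c\models\phi$. Recording this bit in the control state — using the same annotation of the topmost symbol that the marking construction already maintains through every $\popn{k}$ and $\collapse$ — makes membership in $F'$ a property of the control state alone, yielding $q'\in F'$ iff $\phi$ holds at $(q,s)=(\chi(q'),s)$.

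The main obstacle is Steps one and two: faithfully compiling $\phi$, and in particular its nested fixpoints, into a parity colouring carried by the finite control of a CPDA \emph{while keeping every stack-modifying move identical to one of $\mathcal{A}$'s}, so that $\mathcal{P}$ is genuinely synchronised with $\mathcal{A}$ and evaluation mode is a closed sub-game. Once this synchronised product is in place, the remainder is a bookkeeping application of Theorem~\ref{theorem:marking} together with the determinism of CPDA graphs, which lets $\mathcal{A}'$ decide $\phi$ at the current configuration by a single internal $\id$ look-ahead rather than by actually moving.
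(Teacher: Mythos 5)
First, a point of comparison: the paper does not actually prove Theorem~\ref{theorem:marking-mu-calculus}. It is stated in Section~\ref{section:KnownResult} as a known result, imported from \cite{BCOS10} with full proofs in the companion paper \cite{BCHMOS20}. Your high-level plan — form a product of $\mathcal{A}$ with the $\mu$-calculus model-checking game, arranged so that the evaluation-entry position above a configuration $c$ is winning for \Eloise iff $c\models\phi$, and then invoke a winning-region marking result — is indeed the route taken in the cited works, and your construction of the synchronised game CPDA $\mathcal{P}$ (simulation mode versus a closed evaluation sub-game, entry by a single $\id$-move) is sound.

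The genuine gap is in your treatment of clause~(2). After applying Theorem~\ref{theorem:marking} to $\mathcal{P}$, whether the entry successor of a reachable simulation configuration is winning is, exactly as you say, a function of the \emph{pair} (control state $q'$, top stack symbol $\gamma$), via $\delta'$ and the final states of $\mathcal{P}'$. But the theorem you must prove demands strictly more: that this bit be readable off the control state \emph{alone}. These are genuinely different requirements for a CPDA: after a $\popn{k}$ or a $\collapse$, the newly exposed top symbol is not determined by the data the transition function reads (current state, current top symbol, input letter), so two reachable configurations that agree on state and top symbol before the pop can disagree on the revealed symbol, hence on the bit; consequently no direct re-labelling of control states can maintain it. Your proposed fix — ``using the same annotation of the topmost symbol that the marking construction already maintains through every $\popn{k}$ and $\collapse$'' — is circular: Theorem~\ref{theorem:marking} is available only as a black box, and its statement provides no stack annotation, no access to the internal structure of the produced automaton, and not even a state projection $Q'\rightarrow Q$ (which you also need, silently, both to define $\chi$ and to isolate the ``simulation fragment'' of $\mathcal{P}'$: nothing in the statement prevents one $\mathcal{P}'$-state from occurring in configurations corresponding to different $\mathcal{P}$-states or to both modes). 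Making a top-symbol-dependent predicate visible in the control state through pops and collapses is precisely the technical content that Theorem~\ref{theorem:marking-mu-calculus} adds over Theorem~\ref{theorem:marking}; in \cite{BCOS10,BCHMOS20} it rests on the regularity of winning regions together with a separate construction internalising regular sets of configurations into CPDA states, so it cannot be appealed to as an already-maintained annotation. Nor can you dodge the issue by a second application of Theorem~\ref{theorem:marking}: a parity game ``defined from'' a CPDA necessarily takes \emph{all} transitions as moves and the theorem speaks only of reachable configurations, so with the simulation edges present as moves, any choice of ownership and colouring makes the status of a simulation position a fixpoint over all of its successors — yielding an ``exists a reachable configuration satisfying $\phi$'' or ``all reachable configurations satisfy $\phi$'' property rather than the local one — while deleting the simulation edges makes all but the initial configuration unreachable and thus outside the theorem's guarantee.
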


Finally, in~\cite{CS12}, the computation of a finite description of a winning strategy was studied. 

\begin{theorem}\label{theorem:strategies}
Let $\mathcal{A}$ be an $n$-CPDA defining an $n$-CPDA parity game $\pgame$. If the initial configuration is winning for \Eloise then one can effectively construct an $n$-CPDA $\mathcal{A}'$ that is  synchronised with $\mathcal{A}$ and realises a well-defined winning strategy for \Eloise in $\pgame$ from the initial configuration.
\end{theorem}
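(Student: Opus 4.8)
The plan is to synthesise the strategy alongside the winner-determination procedure that underlies Theorem~\ref{theorem:winning}, rather than merely recovering the winning region as in Theorem~\ref{theorem:marking}. The decision procedure of the companion paper~\cite{BCHMOS20} proceeds by an \emph{order-reducing reduction}, transforming the $n$-CPDA parity game $\pgame$ into an $(n-1)$-CPDA parity game $\pgame'$ such that \Eloise wins from the initial configuration of $\pgame$ iff she wins from that of $\pgame'$; iterating $n$ times yields a \emph{finite} parity game. My strategy is to carry \Eloise's strategy through each reduction step, not just the winner bit, and then realise the resulting object as a synchronised CPDA.

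First I would recall the mechanism of one reduction step. Whenever a move of $\pgame$ creates a fresh top $(n-1)$-stack — that is, a $\pushn{n}$ or an order-$n$ link push $\pushlk{n}{\gamma}$ — \Eloise must, in $\pgame'$, announce a \emph{prediction}: a vector indexed by colours giving, for each colour $\theta$, the set of control states in which play may legitimately return once that freshly created $(n-1)$-stack is eventually popped, $\theta$ being the least colour witnessed in the meantime. \Abelard then either \emph{accepts} the prediction (the reduced game simulates the pushed $(n-1)$-stack, the prediction being stored) or \emph{challenges} one predicted return (play jumps to the corresponding state, the challenged colour being recorded in the winning condition). That honest predictions are exactly those realisable by a winning \Eloise is the content of the reduction in~\cite{BCHMOS20}, which I would reuse verbatim. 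After $n$ such steps the game is finite, so by positional determinacy of parity games \Eloise has a memoryless winning strategy $\strat_0$, computable explicitly.

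Next I would \emph{lift} $\strat_0$ back through the $n$ reduction steps. The key observation is that all the extra data introduced by the reductions — the prediction vectors, the recorded colours, and the accept/challenge bookkeeping — is \emph{finite per stack symbol}: each prediction attached to a pushed $(n-1)$-stack is a bounded amount of information that can be stamped onto the stack symbol pushed at that point. Consequently the lifted strategy is realisable by an $n$-CPDA $\mathcal{A}'$ whose stack alphabet is $\Gamma$ enriched with these finite annotations, whose states carry the finite control of the reduced games, and whose stack operations are exactly those of $\mathcal{A}$ — the annotations never alter which of $\popn{k},\pushn{j},\pushlk{e}{\gamma},\toprew{\gamma},\collapse,\id$ is applied. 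Pairing $\mathcal{A}'$ with the obvious $\tau:Q'\times\Gamma'\rightarrow A$ that reads \Eloise's prescribed move off the enriched top symbol then realises the desired strategy; and because $\mathcal{A}'$ performs the same \emph{type} of operation as $\mathcal{A}$ on every input, the two are synchronised in the sense defined above, so the stacks keep the same shape throughout.

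The hard part will be the correct treatment of $\collapse$ and of the links. When the lower-order strategy, during the lift, must respond to a $\collapse$ that jumps from the top of the stack to an $e$-stack buried below, $\mathcal{A}'$ has to recover the prediction that was in force when that collapsed-to stack was created. This is precisely why the predictions must be stored \emph{inside} the stack symbols and why the link structure of $\mathcal{A}'$ must replicate that of $\mathcal{A}$: executing $\collapse$ in $\mathcal{A}'$ then exposes the stored prediction and the strategy continues coherently. Verifying that these stored predictions stay consistent with the ongoing play across pushes, across copies (a $\pushn{j}$ duplicating a stack together with its annotations and links) and across collapses — i.e.\ that the annotation discipline is a genuine invariant — is the technically delicate point, and is where the fine structure of the reduction in~\cite{BCHMOS20} (and its strategy-transfer analysis from~\cite{CS12}) is genuinely needed.
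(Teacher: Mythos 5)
Your proposal is correct in approach and matches, in essence, the proof this paper relies on: the paper does not prove Theorem~\ref{theorem:strategies} itself but imports it as a known result from~\cite{CS12} and the companion paper~\cite{BCHMOS20}, whose argument is exactly the one you sketch — iterated order-reducing reductions with colour-indexed predictions announced by \Eloise and accepted or challenged by \Abelard, a positional winning strategy in the resulting finite parity game, and a lift back through the reductions realised by an $n$-CPDA synchronised with $\mathcal{A}$ whose stack symbols carry the finite prediction annotations. The only compression worth noting is that the cited proof does not fold the treatment of links into the order-reduction step as your sketch suggests: it first makes the CPDA rank-aware and then eliminates order-$n$ links in separate preprocessing stages before reducing the order, and it is in those stages (and in their strategy-transfer analysis) that the $\collapse$ issues you correctly flag as ``the hard part'' are actually resolved.
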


\section{Local Model-Checking}\label{section:modelChecking}

Recall that, for a formula  $\phi$ (without free variables if one considers MSO logic; on the pointed tree if one considers $\mu$-calculus) and a tree $t$ the \emph{(local) model-checking} problem asks to decide whether $\phi$ holds in $t$. 

In the case of trees, MSO formulas without free variable and $\mu$-calculus formulas have the same expressive power (see \emph{e.g.}~\cite{JW96}). 
It is also a very standard result (obtained by combining the previous equivalence with Theorem~\ref{th:Rabin69} and Theorem~\ref{th:GurevichH82}; see also \cite{AN01,Wilke2001} for a direct construction) that $\mu$-calculus model-checking and deciding the winner in a parity game played on an arena obtained by considering a synchronised product of the tree to model-check together with a finite graph (describing the formula and its dynamics) are two equivalent problems. Hence, the equi-expressivity Theorem together with the fact that CPDA parity games are decidable (Theorem~\ref{theorem:winning}), directly imply the decidability of the MSO/$\mu$-calculus model-checking problem against trees generated by recursion schemes/CPDA. Note that historically this result was first established for trees generated by recursion schemes by Ong in~\cite{Ong06a} using tools from innocent game semantics (in the sense of Hyland and Ong \cite{HO00})

\begin{theorem}
	MSO (equivalently $\mu$-calculus) local model-checking problem is decidable for any tree generated by a recursion scheme (equivalently by a collapsible pushdown automaton). 
\end{theorem}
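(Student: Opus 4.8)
The plan is to reduce the problem, through the chain of equivalences recalled above, to deciding the winner of an $n$-CPDA parity game, which is decidable by Theorem~\ref{theorem:winning}. Fix a recursion scheme $\mathcal{S}$ generating the tree $t = \mng{\mathcal{S}}$ and an MSO sentence $\phi$; if instead $\phi$ is a $\mu$-calculus formula we first translate it into an equivalent MSO sentence, using that the two logics agree at the root of a tree (see \cite{JW96}). By symmetry the CPDA case is identical, since one starts directly from a CPDA.

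First I would apply Theorem~\ref{th:Rabin69} to build a tree automaton $\mathcal{A}_\phi$ accepting exactly the $\Sigma$-labelled trees satisfying $\phi$, so that $t \models \phi$ if and only if $\mathcal{A}_\phi$ accepts $t$. By Theorem~\ref{th:GurevichH82}, the latter is equivalent to \Eloise winning the acceptance parity game $\game_{\mathcal{A}_\phi,t}$ from $(\epsilon,q_0)$, where $q_0$ is the initial state of $\mathcal{A}_\phi$. Thus it suffices to decide this game. Using the equi-expressivity theorem (Theorem~\ref{theo:equi}), I would then replace $\mathcal{S}$ by an $n$-CPDA $\mathcal{B} = \anglebra{A,\Gamma,Q,\delta,q_0',F}$ together with a labelling $\rho : Q \rightarrow \Sigma$ generating the same tree $t$.

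The heart of the argument is to observe that $\game_{\mathcal{A}_\phi,t}$ is itself (isomorphic to) an $n$-CPDA parity game. Its positions are pairs consisting of a node of $t$ and a state of $\mathcal{A}_\phi$, and its moves depend only on the label $\rho(q)$ of the current node and on the local branching structure, both of which are computed by $\mathcal{B}$. I would therefore build a product CPDA whose control states carry both a state of $\mathcal{B}$ and a state of $\mathcal{A}_\phi$ (together with one bit encoding the intermediate, \Abelard-owned choice vertices), whose stack operations are inherited from $\mathcal{B}$, and whose colouring is inherited from $\mathcal{A}_\phi$; the ownership partition of the vertices is dictated by whose turn it is in $\game_{\mathcal{A}_\phi,t}$. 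This product is an $n$-CPDA parity game with the same winner from the initial position as $\game_{\mathcal{A}_\phi,t}$, and Theorem~\ref{theorem:winning} decides it; chaining the equivalences gives decidability of $t \models \phi$.

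The one genuinely technical point is the synchronisation between the tree automaton, which reads the actual nodes of $t$, and the CPDA $\mathcal{B}$, which produces $t$ only after contracting $\silent$-labelled transitions during unfolding. The product construction must let the $\mathcal{A}_\phi$-component remain idle while $\mathcal{B}$ performs $\silent$-moves, advancing it exactly when a genuine $\{1,\dots,m\}$-edge is taken; equivalently, one works with the relations $\eRa{i}$ rather than with single edges of $\transgraph{\mathcal{B}}$. This bookkeeping is routine — it enlarges only the finite control and does not increase the order of the CPDA — but it is where the care is needed. Everything else is an immediate assembly of Theorems~\ref{th:Rabin69}, \ref{th:GurevichH82}, \ref{theo:equi} and \ref{theorem:winning}.
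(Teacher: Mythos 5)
Your proposal is correct and follows essentially the same route as the paper: the paper's proof is exactly the chain MSO $\leftrightarrow$ $\mu$-calculus at the root, then Theorem~\ref{th:Rabin69} and Theorem~\ref{th:GurevichH82} to turn model-checking into a parity game on a synchronised product of the tree with a finite-state device, then Theorem~\ref{theo:equi} and Theorem~\ref{theorem:winning} to realise that product as a decidable $n$-CPDA parity game. Your write-up merely spells out the product construction and the $\silent$-transition bookkeeping that the paper leaves implicit as "a very standard result."
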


\section{Global Model-Checking}\label{section:globalMC}

Recall that, for a formula $\phi$ and a tree $t$ the \emph{global model-checking} asks (if there is one) for a description of the set $\truth{t}{\phi}$ of nodes of $t$ where $\phi$ holds. 

\subsection{Exogeneous \& Endogeneous Approaches to Global Model-Checking}

We present here two approaches to the global model-checking problem:  the exogenous one where the set is described by an external device (here a CPDA), and the endogenous one which is new and where the description is internalised by a recursion scheme with “polarized” labels.

\begin{itemize}
\item \emph{Exogeneous} approach: Given a $\Sigma$-labelled tree $t:\dom{t}\rightarrow \Sigma$
  and a formula $\phi$, output a description by means of a word
  acceptor device recognising $\truth{t}{\phi}\subseteq
  \dom{t}$.
\item \emph{Endogeneous} approach: Given a $\Sigma$-labelled tree
  $t:\dom{t}\rightarrow \Sigma$ and a formula $\phi$, output a finite description
  of the $(\Sigma\cup\marked{\Sigma})$-labelled tree $t_\phi:\dom{t}\rightarrow \Sigma\cup\marked{\Sigma}$, where 
  $\marked{\Sigma}=\{\marked{\sigma}\mid \sigma\in\Sigma\}$ is a
  marked copy of $\Sigma$, such that $t_\phi$ and $t$ have the same domain, and
  $t_\phi(u)=\marked{t(u)}$ if $u\in\truth{t}{\phi}$ and
  $t_\phi(u)=t(u)$ otherwise. 
\end{itemize}

\begin{example}\label{ex:GMC}
  Let $\mathcal{S}$ be the order-2 recursion scheme 
  with non-terminals {$I :o,\, F : ((o,o),o,o)$}
  variables $x : o,\, \zeta : (o,o)$, 
  terminals $f, g, a$ of arity $2, 1, 0$ respectively,
  and the following production rules:
  
 $$\left\{\begin{array}{rll}
  I & \rightarrow & F \, g\, a\\
  F\, \zeta \, x & \rightarrow & f \, (F\, \zeta \, (\zeta \, x))\, (\zeta \, x)\\
  \end{array}\right. 
  $$
 The value tree $t=\mng{\mathcal{S}}$
 is the following $\Sigma$-labelled tree:

\begin{center}
\begin{tikzpicture}[level distance=8mm,level/.style={sibling distance=18mm/#1}]
\node {$f$}
   child {node {$f$}
         child {node {$f$} 
             child {node {$\vdots$}}
             child {node {$g$} child { node {$g$} child {node {$g$} child {node {$a$}}}}}}
         child {node {$g$} child { node {$g$} child {node {$a$}}}}}
   child {node {$g$} child {node {$a$}}};
\end{tikzpicture} 
\end{center}

{Let $\phi$ be the
 $\mu$-calculus formula defining the nodes
which are labelled by $g$ such that the length of the (unique) path to an
 $a$-labelled node is odd. Formally (we refer to \cite{AN01,Wilke2001} for syntax and semantics of $\mu$-calculus), $\phi = p_g \wedge \mu X.(\diamond_1 p_a\vee \diamond_1\diamond_1
 X)$, where $p_g$ (resp. $p_a$) is a propositional variable {asserting} that the current node is labelled by $g$ (resp. $a$).}

An \emph{exogenous} approach to the global model-checking problem is to
output a description (\eg by means of a finite-state automaton) of the set
$\truth{t}{\phi}=\{1^n21^k\mid n+k \text{ is odd}\}$, which in this special case is a regular language.

An \emph{endogenous} approach to this problem is to output the following recursion scheme:
 $$\left\{\begin{array}{rll}
  I & \rightarrow & H \, \marked{g}\, a\\
  H\, z & \rightarrow & f \, (\marked{H}\, g \, z)\, z\\
  \marked{H}\, z & \rightarrow & f \, (H\, \marked{g} \, z)\, z\\
 \end{array}\right. 
  $$
  with non-terminals $I :o,\, H : (o,o)$ and a variable $z : o$. The value tree of this new scheme is as desired:

\begin{center}
\begin{tikzpicture}[level distance=8mm,level/.style={sibling distance=18mm/#1}]
\footnotesize
\node {$f$}
   child {node {$f$}
         child {node {$f$}  
             child {node {$\vdots$}}
             child {node {$\marked{g}$} child { node {$g$} child {node {$\marked{g}$} child {node {$a$}}}}}}
         child {node {$g$} child { node {$\marked{g}$} child {node {$a$}}}}}
   child {node {$\marked{g}$} child {node {$a$}}};
\end{tikzpicture} 
\end{center}
\end{example}

We now define a general concept, called \emph{reflection}, and which expresses the ability to perform the endogenous approach within a class of trees for a given logic.

\begin{definition}{(Reflection)}
Let $\mathcal{C}$ be a class of trees, and let $\mathcal{L}$ be some logical formalism. 
Let $t$ be a tree in $\mathcal{C}$ and let $\phi$ be an $\mathcal{L}$-formula. We say that a tree $t'\in \mathcal{C}$ is a \concept{$\phi$-reflection} of $t$ just if $t'=t_\phi$. We say that the class $\mathcal{C}$ is \concept{$\mathcal{L}$-reflective} in case for all $t\in\mathcal{C}$ and all $\phi\in\mathcal{L}$ one has $t_\phi\in\mathcal{C}$.

In case the class $\mathcal{C}$ is finitely presented (\ie each element of $\mathcal{C}$ comes with a finite presentation, \eg given by a recursion scheme or by a collapsible pushdown automaton), we say that $\mathcal{C}$ is \concept{$\mathcal{L}$-effectively-reflective} if $\mathcal{C}$ is $\mathcal{L}$-reflective and moreover one can effectively construct, for any (presentation of) $t\in\mathcal{C}$ and any $\phi\in\mathcal{L}$, (a presentation of)  $t_\phi$: \ie there is an algorithm that, given a formula $\phi\in\mathcal{L}$, transforms a presentation of an element in $\mathcal{C}$ into a presentation of its $\phi$-reflection.
\end{definition}

In the sequel, we prove that the class of trees generated by recursion schemes/CPDAs is $\mu$-calculus-effectively-reflective as well as MSO-effectively-reflective. 


\subsection{$\mu$-Calculus Reflection}

Regarding $\mu$-calculus, the following result providing an exogeneous and an endogeneous description of $\mu$-calculus definable sets is a simple consequence of the equi-expressivity theorem together with Theorem~\ref{theorem:marking-mu-calculus}.

\begin{theorem}\label{theo:reflection-mu-calculus}
Let $t$ be a $\Sigma$-labelled tree generated by an order-$n$ recursion scheme $\rscheme$ and $\phi$ be a $\mu$-calculus formula. 
\begin{enumerate}
\item\label{item:reflection-mu-calculus-1} There is an algorithm that takes $(\rscheme, \phi)$ as its input and outputs an order-$n$ CPDA $\mathcal{A}$ such that $L(\mathcal{A})=\truth{t}{\phi}$.
\item\label{item:reflection-mu-calculus-2} There is an algorithm that takes $(\rscheme, \phi)$ as its input and outputs an order-$n$ recursion scheme that generates $t_\phi$.
\end{enumerate}
\end{theorem}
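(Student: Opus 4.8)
The plan is to route everything through collapsible pushdown automata, using equi-expressivity in both directions and Theorem~\ref{theorem:marking-mu-calculus} as the engine. First I would apply Theorem~\ref{theo:equi}(1) to turn the scheme $\rscheme$ into an order-$n$ CPDA $\mathcal{B}=\anglebra{A,\Gamma,Q,\delta,q_0,F}$ together with a state-labelling $\rho:Q\to\Sigma$ generating $t=\mng{\rscheme}$, where $A=\{1,\dots,m\}\cup\{\silent\}$. The nodes of $t$ are then exactly the words $u\in(A\setminus\{\silent\})^*$ with $(q_0,\bot_n)\eRa{u}c$ for a \emph{stable} configuration $c$ (one that is not the source of a $\silent$-edge), the label $t(u)$ being $\rho$ applied to the control state of $c$.

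The one step with genuine content is translating the $\mu$-calculus formula $\phi$, which is interpreted on the tree $t$, into a $\mu$-calculus formula $\psi$ interpreted on the configuration graph $\transgraph{\mathcal{B}}$, so that for every stable reachable configuration $c$ one has $\transgraph{\mathcal{B}},c\models\psi$ if and only if $t,u\models\phi$ for the node $u$ coded by $c$. Conceptually this is an instance of bisimulation invariance of the $\mu$-calculus: $t$ is obtained from $\transgraph{\mathcal{B}}$ by unfolding and contracting $\silent$-edges, and the child relations of $t$ are precisely the relations $\eRa{i}$ restricted to stable configurations. The translation internalises this. Each atomic proposition $p_a$ becomes a disjunction, over the states $q$ with $\rho(q)=a$, of a predicate testing that the current control state is $q$; and each modality $\diamond_i\theta$ becomes "follow the forced $\silent$-path, then take an $i$-edge, and satisfy $\theta$", which a least fixpoint over the $\silent$-modality expresses, e.g.\ $\mu Y.(\diamond_{\silent}Y\vee\diamond_i\,\theta')$ where $\theta'$ is the translation of $\theta$ (determinism makes $\diamond$ and $\Box$ coincide, so there is no choice to manage).

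With $\psi$ in hand I would invoke Theorem~\ref{theorem:marking-mu-calculus} on $\mathcal{B}$ and $\psi$ to obtain an order-$n$ CPDA $\mathcal{B}'=\anglebra{A,\Gamma',Q',\delta',q_0',F'}$ and a map $\chi:Q'\to Q$ whose transition graph is isomorphic to that of $\mathcal{B}$ on reachable configurations and whose final states flag exactly the configurations where $\psi$ holds. For part~\ref{item:reflection-mu-calculus-1} I read $\mathcal{B}'$ as a word acceptor with accepting set $F'$: since $\eRa{u}$ absorbs trailing $\silent$-moves it lands on the stable configuration coding $u$, so $\langweps{\mathcal{B}'}=\{u\mid (q_0',\bot_n)\eRa{u}(q',\stack'),\ q'\in F'\}=\truth{t}{\phi}$. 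For part~\ref{item:reflection-mu-calculus-2} I keep $\mathcal{B}'$ but change its labelling to $\rho':Q'\to\Sigma\cup\marked{\Sigma}$ with $\rho'(q')=\marked{\rho(\chi(q'))}$ when $q'\in F'$ and $\rho'(q')=\rho(\chi(q'))$ otherwise; because the two transition graphs are isomorphic, the generated tree has the same domain as $t$ and carries label $\marked{t(u)}$ exactly at the nodes satisfying $\phi$, i.e.\ it is $t_\phi$. A final application of Theorem~\ref{theo:equi}(2) converts $(\mathcal{B}',\rho')$ back into an order-$n$ recursion scheme generating $t_\phi$, and every transformation above is effective.

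The main obstacle is precisely the translation $\phi\mapsto\psi$: one must be careful that $\phi$ speaks about the $\silent$-contracted tree whereas Theorem~\ref{theorem:marking-mu-calculus} marks nodes of the raw configuration graph, so the $\silent$-closure has to be folded into the modalities (and the node-labels recovered through $\rho$) without disturbing the fixpoint structure. Everything else is bookkeeping enabled by the graph isomorphism guaranteed by the marking theorem.
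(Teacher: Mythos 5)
Your proposal is correct and takes essentially the same route as the paper's own proof: convert the scheme to an order-$n$ CPDA by equi-expressivity, translate $\phi$ to the configuration graph by replacing each $p_a$ with $\bigvee_{\rho(q)=a}p_q$ and relativising each modality to the $\silent$-closure via a least fixpoint (the paper uses exactly $\mu X.(\diamond_a\psi\vee\diamond_\silent X)$), apply Theorem~\ref{theorem:marking-mu-calculus}, and then relabel marked states with $\marked{\Sigma}$-symbols before converting back via Theorem~\ref{theo:equi}. The only cosmetic difference is that you read part~(1) directly off the marked CPDA, whereas the paper derives (1) from (2); the mathematical content is identical.
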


\begin{proof}
First remark that (\ref{item:reflection-mu-calculus-2}) implies (\ref{item:reflection-mu-calculus-1}).  To see why this is so, assume that we can construct an order-$n$ recursion scheme generating $t_\phi$.
Thanks to Theorem~\ref{theo:equi}, we can construct 
an order-$n$ CPDA $\mathcal{A}$ which, together with a mapping $\rho: Q \mapsto \Sigma \cup \marked{\Sigma}$, generates $t_\phi$. Taking $\{ q \in Q \mid \rho(q) \in \marked{\Sigma} \}$ as
a set of final states, and using $\mathcal{A}$ as a finite words acceptor it immediately follows that $L(\mathcal{A})=\truth{t}{\phi}$.

We now concentrate on (\ref{item:reflection-mu-calculus-2}).  Fix an order-$n$ recursion scheme
  $\rscheme = \anglebra{\Sigma, \mathcal{N}, \mathcal{R}, I}$ and let $t$ be
  its value tree; let $m=\max\{\arity{a} \mid a\in\Sigma\}$. Let $\phi$ be a $\mu$-calculus formula. Using
  Theorem~\ref{theo:equi}, 
  we can construct an $n$-CPDA
  $\mathcal{A}=\anglebra{A, \Gamma, Q,\delta, q_0,F}$ with $A_\silent=\{1,\dots,m\}$
  and a mapping $\rho: Q \rightarrow \Sigma$ such that $t$
  is the tree generated by $\mathcal{A}$ and $\rho$. 
 
Let $G=(V,E)$ be the transition graph of $\mathcal{A}$. From $\transgraph{\mathcal{A}}$ we define a new (deterministic) graph $G'=(V,E')$ obtained by contracting the $\silent$-labelled edges in $G$, \ie $E'=\{(v_1,a,v_2)\mid a\in A_\silent \text{ and } v_1 \era{\silent^*a} v_2\}$. From the definitions it directly follows that $G'$ equipped with the labelling function induced by $\rho$ defines the same tree as $G$, namely $t$.

  Assume that we have, for every state $q$ of $\mathcal{A}$, a predicate
  $p_q$ that holds at a node $(q',\stack)\in V$ iff $q=q'$. Then the
  formula $\phi$ can be translated to a formula $\phi'$ on $G'$ as follows: for each $a \in \Sigma$, replace every
  occurrence of the predicate $p_a$ in $\phi$ by the
  disjunction $\bigvee_{q \in Q, \rho(q)=a} p_q$.
  Then, by definition of $\phi'$ and because $G'$ generates $t$, one has $$\truth{t}{\phi} = \{u\in A_\silent^*\mid (G',v'_u)\models \phi'\}$$ where we denote by $v'_u$ the (unique, if exists) vertex in $G'$ that is reachable from the initial configuration by a path labeled by $u$.

  In turn $\phi'$ can be translated to a formula $\phi_\silent$ on $G$ by  replacing in $\phi'$ every sub-formula of the form $\diamond_a \psi$ by $\mu X.(\diamond_a\psi\vee\diamond_\silent X)$,
  \ie we replace the assertion “take an $a$-edge to a vertex where $\psi$ holds” by the assertion “take a (possibly empty) finite sequence of $\silent$-edges to a vertex from which there is an $a$-edge to a vertex where $\psi$ holds”.
  Then, by definition of $\phi_\silent$ and from how $G'$ was defined from $G$, one has, for every $u\in A$, that $$(G,v_u)\models \phi_\silent \quad \text{iff} \quad (G',v'_{u_\silent})\models \phi'$$ where we denote by $v_u$ the (unique, if exists) vertex in $G$ that is reachable from the initial configuration by a path labeled by $u$ and where we denote by $v'_{u_\silent}$ the (unique, if exists) vertex in $G'$ that is reachable from the initial configuration by a path labeled by the word $u_\silent\in A_\silent^*$ obtained from $u$ by removing all occurrences of $\silent$.

Now use Theorem~\ref{theorem:marking-mu-calculus} for $\mathcal{A}$ and $\phi_\silent$, leading to a new order-$n$ CPDA $\mathcal{A'}=\anglebra{A,\Gamma', Q',\delta', q_0',F'}$ and a mapping $\chi:Q'\rightarrow Q$. As the transitions graphs (when restricted to reachable configurations from the initial configuration) of $\mathcal{A}$ and $\mathcal{A'}$ are isomorphic, it follows that $\mathcal{A'}$ and $\rho\circ \chi:Q'\rightarrow \Sigma$ generates $t$.

It follows at once that the tree $t_\phi$ is generated by $\mathcal{A}'$ and the mapping $\rho':Q'\rightarrow \Sigma\cup \marked{\Sigma}$ defined by 
$$
\rho'(q') = 
\begin{cases}
\rho(\chi(q')) & \text{if }	q' \not\in F\\
\marked{\rho(\chi(q'))} & \text{otherwise.}
\end{cases}
$$

{According to Theorem~\ref{theo:equi}, one can construct from $\mathcal{A}'$ an order-$n$ recursion scheme generating $t_\phi$.}
\end{proof}

\begin{remark}
There are natural questions concerning complexity. The first one concerns the algorithm in Theorem \ref{theo:reflection-mu-calculus}: it is $n$-time exponential in both the size of the scheme and the size of the formula. This is because we need to solve an order-$n$ CPDA parity game when invoking Theorem~\ref{theorem:marking-mu-calculus} (see \cite{BCHMOS20}) built by taking a product of an order-$n$ CPDA equi-expressive with $\rscheme$ (thanks to Theorem \ref{theo:equi} its size is polynomial in the one of $\rscheme$) with a finite transition system of polynomial size in that of $\phi$. 
The second issue concerning complexity is how the size of the new scheme (obtained in the second point of Theorem \ref{theo:reflection-mu-calculus}) relates to that of $\rscheme$ and $\phi$.
For similar reasons, it is $n$-time exponential in the size of $\rscheme$ and $\phi$.
The last one concerns the size of the order-$n$ CPDA in the first point of Theorem \ref{theo:reflection-mu-calculus}: because it is constructed from the new scheme  given by the second point, it is also $n$-time exponential in the size of $\rscheme$ and $\phi$.
\end{remark}

\subsection{MSO Reflection}

It is natural to ask if trees generated by recursion schemes are reflective with respect to MSO. Indeed, $\mu$-calculus and MSO are equivalent for expressing properties of a {deterministic} tree at the \emph{root}, but not other nodes; see \emph{e.g.}~\cite{JW96}. In fact one would need backwards modalities to express all of MSO in $\mu$-calculus. 

\begin{example}\label{example:reflectionMSO}
Consider the following property $\phi(x)$ (definable in MSO but not in $\mu$-calculus) on {nodes $x$ of a tree}: “$x$ is the right son of an $f$-labelled node, and there is a path from $x$ to an $a$-labelled node which contains an odd number of occurrences of $g$-labelled nodes”. 

Returning to the scheme of Example \ref{ex:GMC}, an \emph{exogenous} approach to the global model-checking problem is to output a description (\eg by means of a finite-state automaton) of the language $\truth{t}{\phi}=\{1^n2\mid n \text{ is even}\}$, which in this special case is a regular language.

An \emph{endogenous} approach to this problem is to output the following recursion scheme:
$$\left\{\begin{array}{rll}
 I & \rightarrow & \marked{F} \, g\, a\\
 \marked{F}\, \phi \, x & \rightarrow & f \, (F\, g \, (\phi \, x))\, (\marked{g} \, x)\\
 F\, \phi \, x & \rightarrow & f \, (\marked{F}\, g \, (\phi \, x))\, ({g} \, x)\\
\end{array}\right. 
 $$ 
  with non-terminals $I :o,\, F : ((o,o),o,o),\, \marked{F} : ((o,o),o,o)$ and variables $x : o,\, \phi : (o,o)$. The value tree of this new scheme is as desired:

\begin{center}
\begin{tikzpicture}[level distance=8mm,level/.style={sibling distance=18mm/#1}]
\footnotesize
\node {$f$}
   child {node {$f$}
         child {node {$f$}  
             child {node {$\vdots$}}
             child {node {$\marked{g}$} child { node {$g$} child {node {${g}$} child {node {$a$}}}}}}
         child {node {$g$} child { node {$g$} child {node {$a$}}}}}
   child {node {$\marked{g}$} child {node {$a$}}};
\end{tikzpicture} 
\end{center}
\end{example}

Relying on the $\mu$-calculus reflection, one can prove that the class of trees generated by recursion schemes is MSO-reflective.

\begin{theorem}\label{theo:reflection-MSO}
Let $t$ be a $\Sigma$-labelled tree generated by an order-$n$ recursion scheme $\rscheme$ and $\phi(x)$ be an MSO formula. 
\begin{enumerate}
\item There is an algorithm that takes $(\rscheme, \phi)$ as its input and outputs an order-$n$ CPDA $\mathcal{A}$ such that $L(\mathcal{A})=\truth{t}{\phi}$.
\item There is an algorithm that takes $(\rscheme, \phi)$ as its input and outputs an order-$n$ recursion scheme that generates $t_\phi$.
\end{enumerate}
\end{theorem}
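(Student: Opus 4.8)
The plan is to reduce the MSO reflection to the already-established $\mu$-calculus reflection (Theorem~\ref{theo:reflection-mu-calculus}) by means of the composition (Feferman--Vaught~/~Shelah) method for MSO on trees. Fix the quantifier rank $k$ of $\phi(x)$; there are only finitely many MSO $k$-types, and whether $(\frak{t},u)\models\phi(x)$ depends only on the $k$-type of the \emph{pointed} tree $(t,u)$. The crucial observation is that this pointed type decomposes into two pieces: the \emph{forward type} $\tau^{\downarrow}(u)$, namely the $k$-type of the subtree $t|_u$ rooted at $u$, and the \emph{backward (context) type} $\tau^{\uparrow}(u)$, namely the $k$-type of the one-hole context obtained by deleting $t|_u$ from $t$ and marking the position of the hole. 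By the composition theorem these two types together determine $\text{type}_k(t,u)$, so there is a computable set $S_\phi$ of pairs of types with $u\in\truth{t}{\phi(x)}$ iff $(\tau^{\downarrow}(u),\tau^{\uparrow}(u))\in S_\phi$. It therefore suffices to annotate every node $u$ of $t$ with the pair $(\tau^{\downarrow}(u),\tau^{\uparrow}(u))$ while staying within the class of trees generated by order-$n$ recursion schemes.

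First I would handle the forward types, which are exactly what $\mu$-calculus reflection is good for. Since $\mu$-calculus and MSO coincide for sentences evaluated at the root of a deterministic tree (see \cite{JW96}), the property ``$t|_u$ has $k$-type $\theta$'' --- a Boolean combination of rank-$k$ \emph{sentences} about the subtree at $u$ --- is expressible by a $\mu$-calculus formula evaluated \emph{at} $u$, because $\mu$-calculus only inspects the subtree below its evaluation point. Likewise ``the $i$-th child of $u$ has forward type $\theta$'' is $\mu$-calculus-definable at $u$, by prefixing the previous formula with $\diamond_i$. Iterating Theorem~\ref{theo:reflection-mu-calculus} over this finite family of formulas, I obtain an order-$n$ scheme, equivalently (Theorem~\ref{theo:equi}) an order-$n$ CPDA $\hat{\mathcal{A}}$, generating $t$ in which the label of every node records both its own forward type and the forward types of all of its children; in the CPDA this information is carried by the control state reached at each node.

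It remains to compute the backward types, and here $\mu$-calculus reflection is of no help since the context lies \emph{above} $u$. Instead I would compute $\tau^{\uparrow}$ by a deterministic \emph{top-down} finite automaton run in synchrony with $\hat{\mathcal{A}}$. The context type of the root is the fixed ``empty hole'' type, and by the composition theorem the context type of a child $v\cdot i$ is a fixed finite function of $\tau^{\uparrow}(v)$, the label of $v$, the index $i$, and the forward types of the siblings $(\tau^{\downarrow}(v\cdot j))_{j}$. All of these data are available in the control state of $\hat{\mathcal{A}}$ at $v$ --- this is precisely why the children's forward types were reflected into $v$'s label in the previous step, so that no look-ahead (which would be delicate in the presence of silent transitions) is needed. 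I would then take the product of $\hat{\mathcal{A}}$ with this finite top-down automaton, storing the current context type in the control state; the result is a deterministic order-$n$ CPDA $\mathcal{A}^{*}$ generating $t$ with every node annotated by the pair $(\tau^{\downarrow},\tau^{\uparrow})$. Relabelling according to $S_\phi$ (marking $u$ iff its pair lies in $S_\phi$) yields $t_\phi$; translating back with Theorem~\ref{theo:equi} gives the scheme of part~(2), and declaring the marked states final and reading $\mathcal{A}^{*}$ as a word acceptor gives the CPDA of part~(1) with $L(\mathcal{A}^{*})=\truth{t}{\phi(x)}$.

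The main obstacle is exactly the backward-looking content of MSO illustrated in Example~\ref{example:reflectionMSO}: a property of $x$ may constrain its ancestors and the subtrees hanging off the path back to the root, none of which $\mu$-calculus can observe from $x$. The composition method is what tames this, splitting the requirement into a downward part (absorbed by $\mu$-calculus reflection) and an upward part that is finite-state and can be propagated deterministically from the root. The one point needing genuine care is ensuring that the top-down pass never has to inspect siblings by exploring the CPDA graph --- possibly across chains of silent transitions --- which is avoided by pre-computing the forward types of all children into each node's state during the reflection step.
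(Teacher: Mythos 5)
Your proof is correct and follows essentially the same route as the paper: annotate the tree, via iterated $\mu$-calculus reflection, with downward information about each node and its children (your forward types; the paper equivalently uses the sets of states from which a Rabin tree automaton for $\phi$ has an accepting run on each child's subtree), then resolve the upward dependency by a deterministic finite-state top-down pass realised as a synchronised product with the CPDA (your context types; the paper's subset construction along the root-to-node path). The only difference is the formalism — MSO $k$-types and the composition theorem versus tree automata obtained from Theorem~\ref{th:Rabin69} — which are interchangeable here; even your key care point (recording children's forward types at the parent so the top-down pass needs no look-ahead across silent transitions) mirrors exactly the paper's annotations $\psi_{q,d}$ indexed by state--direction pairs.
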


\begin{proof}
We only concentrate on $(2)$ as it implies $(1)$ using the same argument as in the proof of Theorem~\ref{theo:reflection-mu-calculus}.

For any node $u$, we let $\treemarkednode{t}{u}$ denote the tree obtained from $t$ by marking the node $u$ (and no other node). {Recall (see definition on  page~\pageref{th:Rabin69}) that formally  $\treemarkednode{t}{u}$ is a $\Sigma \times \{0,1\}$-labelled tree. The second component of the alphabet is used to indicate the position of marked node. In addition, for a $\Sigma$-labelled tree $t$, we let $\tilde{t}$ denote the $\Sigma \times \{0,1\}$-labelled tree such that $\dom{t}=\dom{\tilde{t}}$ and for all $u \in \dom{\tilde{t}}$, $\tilde{t}(u)=(t(u),0)$ which corresponds to the tree $t$ in which no node is marked. }

Using Theorem~\ref{th:Rabin69}, one can construct a tree automaton $\mathcal{B}_{\phi(x)}$ that accepts $\treemarkednode{t}{u}$ iff $(t,u)\models \phi(x)$. We let $S$ denote the set of control states of $\mathcal{B}_{\phi(x)}$.

In order to construct $t_\phi$, we first annotate $t$ with
informations on the behaviour of $\mathcal{B}_{\phi(x)}$ on the subtrees of
{$\tilde{t}$}. 

We mark $t$ by $\mu$-calculus definable sets to obtain an
enriched tree denoted $\bar{t}$.  With each pair $(q,d)\in S\times \{1,\dots,m\}$, where $m=\max\{\arity{a}\mid a\in \Sigma\}$, we associate {a $\mu$-calculus formula} $\psi_{q,d}$ such that, for every node $u$, {$u \in \truth{t}{\psi_{q,d}}$} iff the $d$-child of $u$ exists and $\mathcal{B}_{\phi(x)}$ has an accepting run on $\tilde{t}[ud]$ starting from $q$, where {$\tilde{t}[ud]$} denotes the subtree of $\tilde{t}$ rooted at {$ud$}. {The existence of $\psi_{q,d}$ is due to fact that acceptance by parity tree automata is expressible in $\mu$-calculus \cite{StreettE89}.}
The tree $\bar{t}$ is the ${\Sigma'=\Sigma\times 2^{S \times\{1,\dots,m\}}}$-labelled ranked tree with domain $\dom{t}$ where for every $u\in\dom{t}$,
$$\bar{t}(u)=(t(u),\{(q,d)\mid\mathcal{B}_{\phi(x)}[q] \text{ accepts } {\tilde{t}[ud]\}})$$ where $\mathcal{B}_{\phi(x)}[q]$ denotes the automaton obtained from $\mathcal{B}_{\phi(x)}$ by changing the initial state to $q$.
Then, by (successive applications of) Theorem~\ref{theo:reflection-mu-calculus}, $\bar{t}$ can be generated by an order-$n$ collapsible automaton.

Using the annotations on $\bar{t}$, for any node $u$, one can decide, only considering the path from the root to $u$, whether $\mathcal{B}_{\phi(x)}$ accepts $\treemarkednode{t}{u}$. More precisely, there exists a regular words language $L$ over $\Sigma' \cup \{1,\dots,m\}$ such that a node $u$ of $t$ satisfies $\phi$ if and only if the word obtained by reading in $\bar{t}$ the labels and directions from the root to the node $u$ belongs to $L$. 

{To prove the existence of the regular language $L$, we introduce the notion of partial accepting run of $\mathcal{B}_{\phi(x)}$ on $\tilde{t}$ stopping at $u$ in state $p$ which is defined similarly to an accepting run except that it is required to assign the state $p$ to the node $u$ and that it is undefined for all nodes below $u$. Note that all infinite branches for which it is defined are required to satisfy the acceptance condition.

For a node $u$, we let $P(u)$ denote the set of states $p \in S$ for which $\mathcal{B}_{\phi(x)}$ has a partial accepting run on $\tilde{t}$ stopping at $u$ in state $p$. Remark that, by definition, $P(\varepsilon)$ is the set of initial states of $\mathcal{B}_{\phi(x)}$.

For two nodes $u$ and $ud$ in $\bar{t}$, the set $P(ud)$ can be computed from $P(u)$, the label of $u$ in $\tilde{t}$ and the direction $d$. Indeed, for a node $u$ labelled by $(\sigma,Q) \in \Sigma \times 2^{S \times\{1,\dots,m\}}$ in $\bar{t}$ and for all states $q \in S$, $q$ belongs to $P(ud)$ if and only if there exists a state $p \in P(u)$ and a transition $(p,(\sigma,0),p_1,\ldots,p_k)$ of $\mathcal{B}_{\phi(x)}$ such that for all $i \in [1,k]\setminus \{d\}$, $(p_i,i)$ belongs to $Q$ and $p_d$ is equal to $q$.

Using this property, one easily constructs a finite automaton over $\Sigma' \cup \{1,\ldots,m\}$ which after reading the path from the root to a node $u$ computes the set $P(u)$.

The final remark is that for a node $u$, we can decide whether $\mathcal{B}_{\phi(x)}$ accepts $\treemarkednode{t}{u}$ by only considering the set $P(u)$ and the label of $u$ in $\bar{t}$. Indeed, for a node $u$ labeled by $(\sigma,Q) \in \Sigma \times 2^{S \times\{1,\dots,m\}}$ in $\bar{t}$, $\treemarkednode{t}{u}$ is accepted by $\mathcal{B}_{\phi(x)}$ if and only if there exists a node $p \in P(u)$ and a transition  $(p,(\sigma,1),p_1,\ldots,p_k)$ such that for all $i \in [1,k]$, $(p_i,i)$ belongs to $Q$.

Hence, one easily constructs a finite automaton accepting the language $L$.}

Finally, an order-$n$ CPDA generating $t_\phi$ is obtained by taking a synchronised product between an order-$n$ CPDA generating $\bar{t}$ and a finite \emph{deterministic} automaton recognising $L$: a node in the generated tree is marked iff the associated control state in the automaton recognising $L$ is final. 
\end{proof}
\section{Some Consequences of MSO Reflection}\label{section:globalMCConsequences}

We derive two consequences of the MSO reflection. The first one (Corollary~\ref{cor:divergent}) is to show how MSO reflection can be used to construct, starting from a scheme that may have non-productive rules, an equivalent one that does not have such divergent computations. The second application consists in proving (Theorem~\ref{theorem:ALaCaucal}) that the class of trees generated by recursion schemes is closed under the operation of MSO interpretation followed by tree unfolding, hence providing a result in the same flavour as the one obtained by Caucal for safe schemes in~\cite{Caucal02}.

\subsection{Avoiding Divergent Computations}\label{sec:divergentComputations}

Let $\mathcal{S}$ be the order-2 recursion scheme with non-terminals $I :o,\, H:(o,o), \, F : ((o,o,o),o)$, variables $z : o,\, \phi : (o,o,o)$, terminals $\Sigma=\{f, a\}$ of arity $2$ and $0$ respectively, and the following production rules:
\[\begin{array}{rll}
I & \rightarrow & f (H\,a) (F\, f) \\
H\, z &\rightarrow & H\, (H\,z) \\
F\,\varphi & \rightarrow &\varphi\, a\,(F\,\varphi) \\
\end{array}\]

The second rule $H\, z \,\, \rightarrow \,\, H\, (H\,z) $ is \emph{divergent} (or \emph{non-productive}), meaning that it generates a node labelled $\bot$ in the value tree (see Figure~\ref{fig:ex-value-tree}).

\begin{figure}[htb]
\begin{center}
\begin{tikzpicture}[level distance = 24pt,sibling distance = 7pt,scale=1,transform shape]
\tikzstyle{every node}=[font=\normalsize]
\tikzset{>=stealth}
\tikzset{edge from parent/.style={draw,-}}
\node(limite) at (0,0) {
\Tree [.{ $f$} 
	[.{$\bot$} ] 
	[.{$f$} 
		[.{$a$} ]
		[.{$f$} 
			[.{$a$} ]
			[.{$f$} 
			[.{$a$} ]
			[.{$f$} 
				[.{$a$} ]
				\edge[dashed,-] ;
			[.{\null} ] 
			] 
			] 
		]	 
	] 
]
};
\node(egale) at (3,0) {\large$= \sup$};
\node(egale) at (4,0) {\Huge$\{$};
\tikzset{level distance = 26pt,sibling distance = 12pt}
\node(arbres) at (7.2,-0.8) {
\Tree [.{ $\bot$} ] ,
\Tree [.{ $f$}  [.{ $\bot$} ]  [.{ $\bot$} ]] ,
\Tree [.{ $f$}  [.{ $\bot$} ]  [.{ $f$} [.{$a$} ] [.{$\bot$} ] ]] , \ldots
};
\node(egale) at (10.5,0) {\Huge$\}$};

\end{tikzpicture}
\caption{The value tree of the recursion scheme $\mathcal{S}$}\label{fig:ex-value-tree}
\end{center}
\end{figure}

Consider now a scheme with production rules
\[\begin{array}{rll}
I & \rightarrow & f (\boxtimes) (F\, f) \\
F\,\varphi & \rightarrow &\varphi\, a\,(F\,\varphi) \\
\end{array}\]
Then it produces the same tree as the previous scheme up to relabelling nodes labelled by $\boxtimes$ by $\bot$. Note that this latter scheme does not lead any divergent computation.

Actually, MSO effective reflection leads to the following general result, showing that one can always design an “equivalent” non-divergent scheme. See also \cite{Haddad12,SW13,HaddadPHD} for alternative proofs of this statement.

\begin{corollary}\label{cor:divergent}
Let $\mathcal{S}$ be an order-$n$ recursion scheme with terminals $\Sigma$ and generating a tree $t$. Then one can construct another order-$n$ scheme $\mathcal{S}_\bot$ with terminals $\Sigma\cup\{\boxtimes\}$ where $\boxtimes:o$ is a fresh symbol of arity $0$ and such that 
\begin{enumerate}
\item The trees $t$ and $t_\bot$ have the same domain;
\item the tree $t_\bot$ generated by $\mathcal{S}_\bot$ does not contain any node labelled $\bot$;
\item for every node $u$, $t_\bot(u)=t(u)$ if $t(u)\neq\bot$ and $t_\bot(u)=\boxtimes$ otherwise.
\end{enumerate}
\end{corollary}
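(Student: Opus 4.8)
The plan is to observe that being labelled by $\bot$ is a trivial logical property of a node, invoke the reflection machinery to mark precisely the $\bot$-labelled nodes, and finally rename the resulting marked symbol to $\boxtimes$. First I would note that the property ``the current node is labelled by $\bot$'' is definable by the atomic $\mu$-calculus formula $p_\bot$ (equivalently by the MSO formula $Q_\bot(x)$, i.e.\ $x\in Q_\bot$), since $\bot\in\Sigma$ is an ordinary terminal of arity $0$. Hence I can apply the second item of the $\mu$-calculus reflection of Theorem~\ref{theo:reflection-mu-calculus} to $(\mathcal{S},p_\bot)$ (the full strength of MSO reflection, Theorem~\ref{theo:reflection-MSO}, is not required here). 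This yields an order-$n$ recursion scheme $\mathcal{S}'$ over $\Sigma\cup\marked{\Sigma}$ whose value tree is $t_{p_\bot}$: by definition of reflection, $t_{p_\bot}$ has the same domain as $t$, every node $u$ with $t(u)=\bot$ is relabelled $\marked{\bot}$, and every node $u$ with $t(u)=f\neq\bot$ keeps its label $f$. In particular $\marked{\bot}$ is the only marked symbol occurring in $t_{p_\bot}$.

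Next I would obtain $\mathcal{S}_\bot$ from $\mathcal{S}'$ by the purely syntactic operation of renaming the terminal $\marked{\bot}$ to the fresh arity-$0$ symbol $\boxtimes$, and renaming every other $\marked{f}$ back to $f$, throughout the rules; the terminal alphabet of $\mathcal{S}_\bot$ is then $\Sigma\cup\{\boxtimes\}$ and its value tree $t_\bot$ is obtained from $t_{p_\bot}$ by the same renaming. The three required properties are then immediate. First, $t_\bot$, $t_{p_\bot}$ and $t$ all have the same domain, since reflection preserves the domain and renaming a $0$-ary terminal does not change it. Second, for every node $u$ one has $t_\bot(u)=t(u)$ whenever $t(u)\neq\bot$, and $t_\bot(u)=\boxtimes$ otherwise. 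Third, $t_\bot$ contains no occurrence of $\bot$, because the only $\bot$-labelled nodes of $t$ have been turned into $\boxtimes$-labelled nodes and no other node carries $\bot$.

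The step that deserves care — and which I expect to be the only genuine obstacle — is the claim that $t_\bot$ really contains no $\bot$, i.e.\ that $\boxtimes$ is emitted \emph{productively} by $\mathcal{S}_\bot$ rather than arising again from divergence. This is precisely where reflection does the work: the conclusion of Theorem~\ref{theo:reflection-mu-calculus} is that the \emph{value tree} of $\mathcal{S}'$ equals $t_{p_\bot}$, in which the former divergent positions carry the bona fide terminal $\marked{\bot}$ and not the divergence symbol $\bot$. Since, by definition of the value tree, a node carries a genuine terminal exactly when the head reduction at that node is productive, the scheme $\mathcal{S}'$ necessarily terminates productively (emitting $\marked{\bot}$) at each position where $\mathcal{S}$ diverged. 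I would therefore take care to justify that reflection is legitimately applicable to the distinguished symbol $\bot$: concretely, the CPDA equi-expressive with $\mathcal{S}$ provided by Theorem~\ref{theo:equi} emits $\bot$ at a designated arity-$0$ leaf state, so that the marking step underlying Theorem~\ref{theorem:marking-mu-calculus} relabels that state to one emitting $\marked{\bot}$. It is this observation that turns the semantic phenomenon of divergence into a productive $\boxtimes$-leaf and makes the construction sound.
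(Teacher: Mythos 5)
There is a genuine gap, and it sits exactly where you located it: the claim that the CPDA produced by Theorem~\ref{theo:equi} ``emits $\bot$ at a designated arity-$0$ leaf state.'' This is false for the equi-expressivity construction. In that construction, a node $u$ with $t(u)=\bot$ does not correspond to any configuration of the CPDA at all: divergence of the scheme at $u$ is simulated by an \emph{infinite sequence of $\silent$-transitions}, so no configuration reachable along $u$ is ever free of outgoing $\silent$-edges, and no control state is mapped to $\bot$ by $\rho$. Consequently the reflection machinery cannot see, let alone mark, these nodes. Concretely, in the proof of Theorem~\ref{theo:reflection-mu-calculus} the predicate $p_a$ is translated on the configuration graph into $\bigvee_{q\in Q,\,\rho(q)=a} p_q$; for $a=\bot$ this disjunction is empty, i.e.\ false, so the marking step of Theorem~\ref{theorem:marking-mu-calculus} marks nothing, the CPDA $\mathcal{A}'$ still diverges by $\silent$-loops at the same positions, and the scheme you obtain still produces $\bot$ (by divergence) exactly where $\mathcal{S}$ did. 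The statement of Theorem~\ref{theo:reflection-mu-calculus}, read literally as applying to $\bot$-labelled nodes, is not what its proof delivers: the node/configuration correspondence on which that proof rests exists only for nodes carrying genuine, productively emitted terminals. Your argument is therefore circular: assuming that divergence has already been turned into a productive $\bot$-emitting leaf is precisely the content of Corollary~\ref{cor:divergent}, not a consequence of Theorem~\ref{theo:equi}.

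The missing idea is a device that makes \emph{every} rewrite step observable before reflection is invoked. The paper does this by passing from $\mathcal{S}$ to $\mathcal{S}_\arobase$, where a fresh arity-$1$ terminal $\arobase$ is prepended to the right-hand side of every rule. The scheme $\mathcal{S}_\arobase$ is productive everywhere, its value tree $t^\arobase$ contains no $\bot$, and the former divergence positions of $t$ now appear as \emph{infinite branches of $\arobase$-labelled nodes} — real nodes, generated by real configurations. Reflection is then applied to $t^\arobase$ with the formula ``this node is labelled $\arobase$, starts an infinite $\arobase$-branch, and its parent is not labelled $\arobase$'' (note this needs MSO reflection, Theorem~\ref{theo:reflection-MSO}, since speaking of the parent requires a backwards modality unavailable in $\mu$-calculus), and a final syntactic pass on the rules replaces marked subterms $\marked{\arobase}\,s$ by $\boxtimes$ and contracts unmarked ones ${\arobase}\,s$ to $s$. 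If you want to repair your proof, you must supply some analogue of this productivity transformation; without it, no formula evaluated on the tree or on the CPDA graph can separate the $\bot$-nodes from thin air.
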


\begin{proof}
Let $\mathcal{S}=\anglebra{\Sigma, \mathcal{N}, \mathcal{R}, I}$ and let $\arobase:o\rightarrow o$ be a fresh terminal symbol of arity $1$. Define $\mathcal{S}_{\arobase}=\anglebra{\Sigma\cup\{\arobase\}, \mathcal{N}, \mathcal{R}_\arobase, I}$ where $\mathcal{R}_\arobase$ is the set of production rules $\{Fx_1\cdots x_n\rightarrow \arobase\, m\mid Fx_1\cdots x_n\rightarrow m \text{ belongs to }\mathcal{R}\}$, \ie we append a symbol $\arobase$ whenever performing a rewriting step. Denote by $t^\arobase$ the tree generated by $\mathcal{S}_\arobase$.

It is fairly simple to notice the following: 
\begin{itemize}
\item $t^{\arobase}$ does not contain any node labelled $\bot$;
\item $t$ is obtained from $t^{\arobase}$ by
\begin{enumerate}
\item replacing any infinite piece of branch (possibly starting from another node than the root) made only of nodes labelled by $\arobase$ by a single node labelled by $\bot$;
\item contracting any finite path made of nodes labelled by $\arobase$.
\end{enumerate}
\end{itemize}

Now consider an MSO formula $\phi$ stating that a node is labelled by $\arobase$, is the source of an infinite sequence of nodes labelled by $\arobase$ and is the child of a node not labelled by $\arobase$ (\ie the node is the first one in an infinite piece of  branch labelled by $\arobase$).
Thanks to Theorem~\ref{theo:reflection-MSO}, we can build a new recursion scheme $\mathcal{S}_{\arobase,\phi}$ that generates $t^{\arobase}_\phi$.

Now one obtains $\mathcal{S}_\bot$ by doing the following modification from $\mathcal{S}_{\arobase,\phi}$:
\begin{enumerate}
\item in any production rule, replace any occurrence of a ground subterm of the form $\underline{\arobase}\, s$ by the ground term $\boxtimes$;
\item in any production rule, replace any occurrence of a ground subterm of the form ${\arobase}\, s$ by the ground term $s$.
\end{enumerate}
Hence, the tree $t_\bot$ produced by $\mathcal{S}_\bot$ is obtained from $t^{\arobase}$ by (1)~replacing any infinite branch made of nodes labelled by $\arobase$ by a single node labelled by $\boxtimes$ and (2)~by contracting any finite path made of nodes labelled by $\arobase$. Therefore $\mathcal{S}_\bot$ is as expected.
\end{proof}

\subsection{An à la Caucal Result for General Schemes}\label{section:unfolding}

A natural extension of the MSO reflection is to use MSO to define new edges in the structure and not simply to mark certain nodes.  This
corresponds to the well-known mechanism of MSO interpretations
\cite{Courcelle94}. Furthermore to obtain back a tree from this new graph, we choose a vertex as a root and we unfold the graph from it. As both MSO interpretation and
unfolding are graph transformations that preserve the decidability of
MSO, combining these two transformations provides a very powerful mechanism for constructing infinite trees with a decidable MSO model-checking problem.

\begin{remark}
If we only use MSO interpretations followed by unfolding to produce trees
(and graphs) starting from the class of finite trees, we obtain the Caucal hierarchy~\cite{Caucal02}. The trees in this hierarchy are exactly the trees generated by \emph{safe} recursion schemes.
\end{remark}

\subsubsection{Main Result}

We first present a definition of MSO interpretations which is tailored
to our setting. An \defin{MSO interpretation} $\mathcal{I}$ over $\Sigma$-labelled ranked trees is
given by a domain formula $\phi_\delta(x)$, a formula $\phi_a(x)$ for
each $a \in \Sigma$ and a formula $\phi_d(x,y)$ for each
direction $d \in \{1,\dots,m\}$ where $m=\max\{\arity{a} \mid a\in\Sigma\}$.

When applied to a $\Sigma$-labelled ranked tree $t$, an MSO interpretation $\mathcal{I}$ produces a graph, denoted $\mathcal{I}(t)$,
whose vertices are the vertices of $t$ satisfying $\phi_\delta(x)$. A
vertex $u$ of $\mathcal{I}(t)$ is labelled by $a$ iff $u$
satisfies $\phi_{a}(x)$ in $t$.  Similarly there exists an edge
labelled by $d \in \{1,\ldots,m\}$ from a vertex $u$ to a vertex
$v$ iff $(t,u,v)\models \phi_d(x,y)$.


We say that the interpretation $\mathcal{I}$ is \emph{well-formed} if for all $\Sigma$-labelled
{trees} $t$, every vertex $u$ of $\mathcal{I}(t)$ is labelled by exactly
one $a \in \Sigma$ and has exactly one out-going edge for each
direction in $\{1,\ldots\arity{a}\}$. 
Here, we restrict our attention to well-formed interpretations, which ensures that, when selecting a root in that graph, the generated tree (by unfolding from the root as explained in Section~\ref{section:unfolding}) is a $\Sigma$-labelled ranked tree. {Given an  MSO interpretation $\mathcal{I}$, one can decide if it is  well-formed. Indeed, the fact that the graph obtained after applying $\mathcal{I}$ is well-formed can be expressed by a first-order formula $\psi$ hence, it follows that there exists an MSO formula $\psi^*$ such that for all tree $t$, $I(t) \models \psi$ if and only if $t \models \psi^*$ (see for instance \cite{Courcelle94}). Using Theorem~\ref{th:Rabin69}, we can construct a parity tree automaton $\mathcal{A}_{\neg \psi^*}$ accepting the trees that do not satisfy $\psi^*$ and, as emptiness is decidable for parity tree automata, the decidability follows.
}

\begin{example}
We revisit examples~\ref{ex:GMC} and \ref{example:reflectionMSO}. Consider the scheme of Example \ref{ex:GMC} and the formula $\phi$ of Example~\ref{example:reflectionMSO} (recall that $\phi$ holds in a node $u$ iff $u$ is the right son of an $f$-labelled node and there is a path from $u$ to an $a$-labelled node which contains an odd number of occurrences of $g$-labelled nodes). MSO reflection consisted in “marking” the nodes where $\phi$ holds.

Consider the MSO interpretation $\mathcal{I}$ which removes
all nodes below a node where $\phi$ holds.
All node labels are preserved. 
Finally all edges are preserved and a loop labelled by {$g$} is 
added to every node where $\phi$ holds.
It is easily seen that $\mathcal{I}$ is a well-formed interpretation.
By applying $\mathcal{I}$ to {the} tree $t$ of example~\ref{ex:GMC} and
then unfolding it from its root, we obtain the tree on the right which is
generated by the scheme on the left.

\begin{center}
\begin{minipage}{5.5cm}
$$\left\{\begin{array}{rll}
 Z & \rightarrow & \marked{F} \, g\, (g\,a)\\
 G & \rightarrow & g\, G \\
 \marked{F}\, \zeta \, x & \rightarrow & f \, (F\, g \, (\zeta \, x))\, G\\
 F\, \zeta \, x & \rightarrow & f \, (\marked{F}\, g \, (\zeta \, x))\, x\\
\end{array}\right.  $$ 
\end{minipage}
\hspace{2cm}
\begin{minipage}{4.5cm}
\begin{tikzpicture}[level distance=8mm,level/.style={sibling distance=18mm/#1}]
\footnotesize
\node {$f$}
   child {node {$f$}
         child {node {$f$}  
             child {node {$\vdots$}}
             child {node {$g$} child { node {$g$} child {node {$g$} child {node {$\vdots$}}}}}}
         child {node {$g$} child { node {${g}$} child {node {$a$}}}}}  
   child {node {$g$} child { node {$g$} child {node {$g$} child {node {$\vdots$}}}}};
\end{tikzpicture} 
\end{minipage}\end{center}

\end{example}

More generally, we have the following result (whose proof is given in the next section).

\begin{theorem}\label{theorem:ALaCaucal}
  Let $t$ be a $\Sigma$-labelled ranked tree generated by an order-$n$
  recursion scheme and let $\mathcal{I}$ be a well-formed
  MSO interpretation. For any vertex $r$ in $\mathcal{I}(t)$, the tree generated by $\mathcal{I}(t)$ from the root $r$ can
  be generated by an order-$(n+1)$ recursion scheme.
\end{theorem}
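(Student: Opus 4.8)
The plan is to reduce everything to collapsible pushdown automata and then exhibit an order-$(n+1)$ CPDA generating the desired tree, from which an order-$(n+1)$ scheme follows by the second clause of the equi-expressivity theorem (Theorem~\ref{theo:equi}). Write $s$ for the tree obtained by unfolding $\mathcal{I}(t)$ from $r$. Using Theorem~\ref{theo:equi} I fix an order-$n$ CPDA $\mathcal{A}$ generating $t$, so that $t$ is the $\silent$-contracted unfolding of the configuration graph $\transgraph{\mathcal{A}}$, its nodes being in correspondence with the root-to-node paths, each carrying a settled configuration (an $n$-stack). The crux is to understand the edge formulas $\phi_d(x,y)$ of $\mathcal{I}$. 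In a tree the unique path between two nodes $x$ and $y$ first climbs from $x$ to their nearest common ancestor $z$ and then descends to $y$; I claim that, after a suitable MSO marking, this path can be followed by a deterministic finite-memory walk. Concretely, using the composition (Feferman--Vaught/Shelah) method for MSO over trees, exactly as in the proof of Theorem~\ref{theo:reflection-MSO}, whether $\phi_d(x,y)$ holds is determined by the bounded-rank MSO types of the subtrees hanging off the climb-and-descend path. All the predicates needed to evaluate $\phi_\delta$ and each $\phi_a$ at a node, to recognise the ancestor $z$ at which to stop climbing, and to choose at each step of the descent the unique child leading to $y$ (uniqueness being guaranteed by well-formedness) are MSO-definable. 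After finitely many applications of Theorem~\ref{theo:reflection-MSO} (and Theorem~\ref{theo:reflection-mu-calculus}) I may therefore replace $\mathcal{A}$ by an order-$n$ CPDA $\bar{\mathcal{A}}$ generating a finitely marked copy $\bar t$ of $t$ for which there is a deterministic finite \emph{navigation automaton} $N$ that, launched at a node $x$ with a target direction $d$, reads these marks and drives a walk through $\bar t$ --- using the moves ``go to parent'' and ``go to the $i$-th child'' --- terminating exactly at the $d$-successor of $x$ in $\mathcal{I}(t)$.

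The engine of the construction is the observation that an order-$(n+1)$ CPDA can realise precisely such a walk through the configuration tree of an order-$n$ CPDA. An order-$(n+1)$ CPDA $\mathcal{B}$ maintains on its top $(n+1)$-stack the sequence $\mksk{c_0\, c_1 \cdots c_k}$ of all $n$-stacks visited along the current root-to-node path of $\bar t$, with $c_k$ the current node. Descending to a child amounts to applying the corresponding operation $op$ of $\bar{\mathcal{A}}$: first $\pushn{n+1}$ duplicates the top $n$-stack $c_k$, then $op$ (an order-$\le n$ operation) is applied to the copy, yielding $\mksk{c_0 \cdots c_k\, op(c_k)}$; since $\pushn{n+1}$ preserves the link structure, this faithfully records the child configuration. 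Climbing to the parent is simply $\popn{n+1}$, which discards $c_k$ and re-exposes $c_{k-1}$ with its original links intact --- and here lies the whole point of spending one extra order: rather than trying to invert stack operations (impossible in general, e.g.\ for $\popn{k}$ or $\collapse$), the ancestors are literally stored and recovered. Thus $\mathcal{B}$ can move up and down $\bar t$ at will, the only bookkeeping being to absorb the $\silent$-transitions of $\bar{\mathcal{A}}$ so that the stored configurations are the settled ones corresponding to genuine nodes of $\bar t$.

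I then assemble $\mathcal{B}$ so that the unfolding of its $\silent$-contracted configuration graph is $s$. Its finite control carries the state of the navigation automaton $N$ together with the marks read from $\topn{1}$ of the current configuration. At a configuration representing a node $x$ of $\mathcal{I}(t)$, the emitted label is the unique $a$ with $(t,x)\models\phi_a$, read off the marks. For each direction $d$ for which $x$ has a $d$-successor in $\mathcal{I}(t)$, the automaton $\mathcal{B}$ fires a $d$-labelled transition and then performs, through a sequence of $\silent$-moves, the climb-and-descend walk dictated by $N$ --- climbs being $\popn{n+1}$ and descents being $\pushn{n+1}$ followed by the relevant $\bar{\mathcal{A}}$-operation --- ending in the configuration representing the $d$-successor $y$. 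The computation is initialised by hard-coding the (finite) root-to-$r$ path of $t$, so that $\mathcal{B}$ begins at the configuration representing $r$. As $\bar{\mathcal{A}}$ and the navigation walks are deterministic, $\mathcal{B}$ is a well-defined order-$(n+1)$ CPDA and the tree it generates is exactly $s$. Applying the second clause of Theorem~\ref{theo:equi} to $\mathcal{B}$ yields an order-$(n+1)$ recursion scheme generating $s$, as required.

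The main obstacle is the navigation lemma of the first paragraph: turning the MSO edge relations $\phi_d(x,y)$ into a single deterministic finite-memory climb-and-descend walk. This is where well-formedness of $\mathcal{I}$ is essential, since it forces the descent to be deterministic and hence the walk to be functional, and where the composition method for MSO over trees, combined with MSO reflection, does the real work of exposing the required type information as markable predicates --- in the same spirit as the regular language $L$ built in the proof of Theorem~\ref{theo:reflection-MSO}. The remaining points, namely that $\pushn{n+1}$ and $\popn{n+1}$ manipulate the links correctly and that $\silent$-contraction is handled consistently, are routine but must be checked with some care in the presence of the collapse operation.
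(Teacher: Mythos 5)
Your proposal is correct and takes essentially the same route as the paper's own proof: annotate $t$ with MSO-definable subtree-type information via reflection, replace the edge formulas $\phi_\ell(x,y)$ by a deterministic climb-and-descend tree-walking automaton on the annotated tree, and simulate that walk by an order-$(n+1)$ CPDA that stores the entire sequence of order-$n$ stacks along the root-to-node path (descending via $\pushn{n+1}$ followed by the simulated operations, climbing via $\popn{n+1}$), concluding with the equi-expressivity theorem. The only difference is one of detail: where you invoke the composition method as a black box for what you call the navigation lemma, the paper constructs the required annotations (the sets $R_\ell$, $S_\ell$, $T_\ell$ of tree-automaton states) and the two-phase walking automaton explicitly, and it also records control states and directions inside the stored stacks, a bookkeeping point your sketch leaves implicit but correctly flags as routine.
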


\begin{remark} A natural question is whether every tree
generated by order-$(n+1)$ recursion scheme can be obtained by
unfolding a well-formed MSO interpretation of a tree generated by an
order-$n$ recursion scheme. This is for instance true when considering
the subfamily of safe recursion schemes \cite{KNU02,Caucal02}. 
The answer is negative. {Indeed, a positive answer for possibly unsafe schemes would imply that safe schemes
of any given order are as expressive (for generating trees) as unsafe
ones of the same level, contradicting a result of Parys stating that unsafe schemes are strictly more expressive than safe schemes for generating trees~\cite{Parys12}.}
\end{remark}

\subsubsection{Proof of Theorem~\ref{theorem:ALaCaucal}}

   Let $t$ be a $\Sigma$-labelled tree given by some order-$n$
  recursion scheme $\rscheme$ and let $\mathcal{I}$ be a well-formed
  MSO interpretation given by formulas $\phi_\delta(x)$,
  $\phi_a(x)$ for each $a \in \Sigma$ and 
  $\phi_\ell(x,y)$ for each direction $\ell \in \{1,\dots m\}$ where $m=\max\{\arity{a}\mid a\in\Sigma\}$. Let $r$ be a vertex in $\mathcal{I}(t)$ and let $t'$ be the tree generated by $\mathcal{I}(t)$ from the root $r$. 
  We want to show that $t'$ is generated by  some order-$(n+1)$ scheme. By Theorem~\ref{theo:equi}, it is enough to show that $\mathcal{I}(t)$ restricted to the vertices reachable from $r$ is isomorphic to the $\silent$-closure\footnote{Formally, this means that any sequence of transitions $v_0\era{a}v_1\era{\silent} v_2\era{\silent}\cdots \era{\silent}v_{k-1}\era{\silent}v_k$ such that $v_k$ is not the source of an $\silent$-transition  is replaced by the transition $v_0\era{a}v_k$.} of the transition graph of some order-$(n+1)$ CPDA restricted to its reachable configurations.

The proof starts by annotating $t$ with MSO definable sets, leading to a tree $\bar{t}$ that, thanks to Theorem~\ref{theo:reflection-MSO}, can be generated by a recursion scheme. We then show that tree-walking automata (a class of finite memory device that walk around an input tree) can be used to accept pairs of nodes that are connected by a new edge after applying $\mathcal{I}$ to $t$. Finally, we build the announced $(n+1)$-CPDA $\mathcal{A}$ by mimicking an $n$-CPDA generating $\bar{t}$ and simulating the previous tree-walking automata.
  
  \paragraph{Notations}
 For all $\Sigma$-labelled trees $t$, all nodes $u$ and $v$ of $t$, we let
  $\treemarkednode{t}{u,v}$ be the tree obtained from $t$ by marking
  the pair $(u,v)$. Formally, $\treemarkednode{t}{u,v}$ is the $(\Sigma
  \times 2^{\{1,2\}})$-labelled tree such that
  $\dom{\treemarkednode{t}{u,v}}=\dom{t}$ and for $w \in
  \dom{\treemarkednode{t}{u,v}}$,
  $\treemarkednode{t}{u,v}(w)=(t(w),X)$ where $1 \in X$ iff $w=u$ and
  $2 \in X$ iff $w=v$.

   Similarly
  we define $\treemarkednode{t}{u,\bullet}$ (\resp $\treemarkednode{t}{\bullet,v}$, 
  \resp $\treemarkednode{t}{\bullet,\bullet}$), for a fresh symbol $\bullet$, as the tree obtained by marking $u$ by $1$ (\resp $v$ by $2$, \resp marking no node). \Ie $\bullet$ means here that no node is marked with the corresponding index ($1$ and/or $2$).
  
  Using Theorem~\ref{th:Rabin69}, one can construct for any formula
  $\phi(x_1,x_2)$ a parity tree automaton $\mathcal{B}_\phi$ that accepts
  $\treemarkednode{t}{u,v}$ iff $(t,u,v) \models \phi(x_1,x_2)$. For all $\ell \in \{1,\dots,m\}$, we write $\mathcal{B}_\ell$ for the parity
 tree automaton corresponding to the formula $\phi_\ell(x,y)$ of $\mathcal{I}$ and we let $Q_\ell$ be its finite set of control states and $\Delta_\ell$ be its transition relation.

\paragraph{Annotation of $t$ by MSO definable sets}

 The first step of the construction is to annotate
 $t$ with information concerning essentially the behaviour of the
 automata $\mathcal{B}_\ell$ on the subtrees of $t$. The resulting annotated version of $t$ is denoted $\bar{t}$. More precisely,
 the annotated tree $\bar{t}$ has, for each node $u$ such that $(t,u)\models\phi_\delta(x)$, the following finite information:
 \begin{itemize}
 \item the unique $a \in \Sigma$ such that $(t,u) \models \phi_a(x)$.
  Unicity is by definition of a well-formed interpretation.
 \item  $d_{\uparrow} \in \{1,\dots,m\} \cup \{\racine\}$ which is the 
direction from the father of the curent node to the current node, \emph{\ie} $u$ is of the form $u' d_{\uparrow}$ 
 for some $u' \in \dom{t}$, and $\racine$ if the current
node is the root.
 \item for each $\ell \in \{1,\dots,m\}$, we have:
 \begin{itemize}
 \item $i_\ell \in \{\uparrow,\downarrow,\circlearrowleft,\bot\}$ such that
   \begin{itemize}
     \item $i_\ell = \bot$ iff there is no node $v$ such that $(t,u,v) \models \phi_\ell(x,y)$,
     \item $i_\ell = \downarrow$ iff there is a unique node $v$ such that $(t,u,v) \models \phi_\ell(x,y)$ and $v$ is below $u$,
     \item $i_\ell = \uparrow$ iff there is a unique node $v$ such that $(t,u,v) \models \phi_\ell(x,y)$ and $v$ is not below $u$,
     \item $i_\ell = \circlearrowleft$ iff $(t,u,u) \models \phi_\ell(x,y)$.
    \end{itemize}

  \item the set $R_\ell$ of states  $q \in Q_\ell$  such that there exists
an accepting run of $\mathcal{B}_\ell$ starting from $q$ on the subtree of $t$ rooted at $u$.
\item the set $S_\ell$ of pairs $(d,q) \in \{1,\dots,m\} \times
  Q_\ell$ such that there exists an accepting run of $\mathcal{B}_\ell$
  starting from $q$ on the subtree of $\treemarkednode{t}{\bullet,\bullet}$ rooted at $ud$,

\item the set $T_\ell$ of pairs $(d,q) \in \{1,\dots,m\} \times
  Q_\ell$ such that there exists a node $v$ below $ud$ such that   $\mathcal{B}_\ell$ has an accepting run starting from $ud$ 
  on the subtree of $\treemarkednode{t}{\bullet,v}$ rooted at $ud$.
 \end{itemize}
\end{itemize}
 Let $\Sigma'$ be the resulting  labelling alphabet of $\bar{t}$.
As the information annotated on $\bar{t}$ is MSO definable in $t$,
 we know from Theorem~\ref{theo:reflection-MSO}
that $\bar{t}$ is generated by some order-$n$ recursion scheme.

\paragraph{Replacing MSO formulas on $t$ by tree-walking automata working on $\bar{t}$.}

Fix a direction $\ell \in \{1,\dots,m\}$.  Thanks to the extra
information available on $\bar{t}$, it is possible to decide if a pair
of nodes $(u,v)$ satisfies the formula $\phi_\ell(x,y)$ on $t$ using
a deterministic tree-walking automaton running on $\bar{t}$.
Intuitively a tree-walking automaton is a finite memory device that walks around an input tree, choosing what move to make according to its current state and the node label.

Formally, a \defin{deterministic tree-walking automaton} (introduced in the early seventies by Aho and Ullman~\cite{AhoU71}; see also~\cite{EngelfrietHB99}) working on
$\Sigma$-labelled trees is a tuple $\mathcal{W} = (Q,q_0,F,\delta)$ where $Q$ is
a finite set of states, $q_0 \in Q$ is the initial state, $F$ is
a set of final states and $\delta$ is a transition function.  The
transition function associates to a pair $(p,a) \in Q\times
\Sigma$, corresponding respectively to the current state and node label,
 a pair $(q,x) \in Q \times
(\{\uparrow,-\,\}\cup\{1,\dots,m\})$ where $q$ is the new
state and $x$ is a movement to perform.  Intuitively $-$
corresponds to “stay in the current node”, $\uparrow$ to “go to
the parent node” and $d \in \{1,\dots,m\}$ corresponds to “go
to the $d$-child”.
A pair of nodes $(u,v)$ is \emph{accepted} by $\mathcal{W}$ if it can reach $v$ in a final state starting from $u$ in the initial state.

 We claim that
there exists a deterministic tree-walking automaton $\mathcal{W}_\ell$ such that, for
any pair $(u,v)$ of nodes of $t$, we have:
\begin{center}
 $\mathcal{W}_\ell$ accepts $(u,v)$ in $\bar{t}$ iff $(t,u,v) \models
\phi_\ell(x,y)$.
\end{center}

The automaton $\mathcal{W}_\ell$ works in two phases: during the first phase the automaton only goes 
up in the tree and during the second phase it only goes down 
in the tree. Both phases can potentially be empty. In fact,
to accept a pair $(u,v)$ the automaton will first go up to the greatest
common ancestor of $u$ and $v$ and down to $v$.
 
Assume that $\mathcal{W}_\ell$ started at a node $u$ and 
denote by  $v$ the unique node (if it exists) such 
that $(t,u,v) \models \phi_\ell(x,y)$.

\smallskip\noindent\textbf{Initialisation.} The automaton is in its initial
state $q_0$ at node $u$ that is labelled by the tuple  $(a,d_\uparrow,(i_k,R_k,S_k,T_k)_{k \in \{1,\dots,m\}})$.
The automaton checks in which of the following four cases it is:

\begin{itemize}
\item \underline{The node $v$ does not exists (\ie $i_\ell=\bot$).}
No transition is defined.
\item \underline{The node $v$ is equal to $u$ (\ie $i_\ell=\circlearrowleft$).}  
The automaton goes to the accepting state.
\item \underline{The node $v$ is not below $u$ (\ie $i_\ell=\uparrow$).}
The automaton begins the first phase while memorising
the set $X$ of states $q \in Q_\ell$ such that $\mathcal{B}_\ell$ admits an accepting run starting from $q$ on the subtree of $\treemarkednode{t}{u,\bullet}$ rooted at $u$. This set can easily be computed from $\Delta_\ell$
and $S_\ell$.
\item \underline{The node $v$ is below $u$ (\ie $i_\ell=\downarrow$).} 
The automaton  begins the second phase. It computes the unique direction $d$ and the set $Y$ of
states $q \in Q_\ell$ such that:
\begin{itemize}
\item  $\mathcal{B}_\ell$ admits an
accepting run on the tree $\treemarkednode{t}{u,\bullet}$ deprived of the
nodes below $ud$ and assigning state $q$ to $ud$,
\item  $(d,q) \in T_\ell$.
\end{itemize}

\end{itemize}

\smallskip\noindent\textbf{First phase.}
The automaton is at some node $w$ and stores the set $X$ of states $q
\in Q_\ell$ such that $\mathcal{B}_\ell$ admits an accepting run starting from $q$ on
the subtree of $\treemarkednode{t}{u,\bullet}$ rooted at $w$. The label of the node
$w$ is $(a,d_\uparrow,(i_k,R_k,S_k,T_k)_{k \in
  \{1,\dots,m\}})$.  The automaton goes up in the tree (while
remembering $d_\uparrow$ and $X$) to a node $w'$, {the father of $w$,} whose label is
$(a',d_\uparrow',(i_k',R_k',S_k',T_k')_{k \in
  \{1,\dots,m\}})$. 

There are now three possible cases:  
\begin{itemize}
\item \underline{The node $v$ is the current node.} This is the case iff there exists a
  state $q \in R_\ell'$ and a transition in $\Delta_\ell$ starting in
  state $q$ with $(a,\{2\})$ as label and associating state
  $q_i$ to the $i$-child, such that:
\begin{itemize} 
\item  $q_{d_\uparrow}$ belongs to $X$ ,
\item  for all $i
  \neq d_\uparrow$, $(i,q_i) \in S_\ell$.
\end{itemize}
 In this case, the automaton goes to 
the accepting state. 
\item \underline{The node $v$ is below the $j$-child of $w'$ for some $j \in \{1,\dots,\arity{a'}\}$.}
This is the case iff there exists a
  state $q \in R_\ell'$, a transition in $\Delta_\ell$ starting in
  state $q$ with $(a',\emptyset)$ as label and associating state
  $q_i$ to the $i$-child, and such that there exists $j \neq d_\uparrow \in \{1,\dots,\arity{a'}\}$ with
\begin{itemize}
\item  $q_{d_\uparrow}$ belongs to $X$,
\item for all $i
  \neq j,d_\uparrow$ one has $(i,q_i) \in S_\ell'$,
\item $({j},q_{j}) \in T_\ell'$.
\end{itemize}
 In this case, the automaton begins the second
phase while memorising the set $Y$ of all states $q_{j}$ 
matching this definition together with the direction \footnote{Due to
the restriction imposed on $\phi_\ell(x,y)$ by the fact that $\mathcal{I}$
is a well-formed MSO interpretation, there cannot be two different directions.
Otherwise, we would have $v_1 \neq v_2$ such that $(t,u,v_1) \models \phi(x,y)$
and $(t,u,v_2) \models \phi(x,y)$.} $j$. 
\item \underline{The node $v$ is not below {$w'$}.} This is the case when the two
  previous cases do not apply. The automaton updates $X$
  using $d_\uparrow$, $\Delta_\ell$ and the former value of $X$ and goes on in the first phase.
\end{itemize}

\smallskip\noindent\textbf{The second phase}
The automaton is at some node $w$ and stores a direction $d$ and the
set $Y$ of states $q \in Q_\ell$ such that:
\begin{itemize}
\item  $\mathcal{B}_\ell$ admits an
accepting run on the tree $\treemarkednode{t}{u,\bullet}$ deprived of the
nodes below $wd$ and assigning state $q$ to $wd$
\item  there exists a node $v$ below  $wd$ such that $\mathcal{B}_\ell$
admits an accepting run starting in state $q$ on the subtree of $\treemarkednode{t}{\bullet,v}$
rooted at $wd$.
\end{itemize}

  The automaton goes down in direction $d$ (while remembering $Y$)
  to a node $w'=wd$ whose label is
  $(a',d,(i_k',R_k',S_k',T_k')_{k \in
    \{1,\dots,m\}})$. 

There are now two cases:
\begin{itemize}
\item\underline{The node $v$ is below the $d'$-child of $w'$ for some $d' \in \{1,\dots,\arity{a'}\}$.}
This is the  case iff there exists a state  $q \in Y$ and a transition in $\Delta_\ell$ starting in
  state $q$ with $(a,\emptyset)$ as label and associating states
  $q_i$ to the $i$-child such that
\begin{itemize}
\item for all $i
  \neq d'$, $(i,q_i) \in S_\ell'$,
\item $({d'},q_{d'}) \in T_\ell'$.
\end{itemize}
 In this case, the automaton updates the set $Y$ with all states $q_{d'}$ matching this condition, stores the direction $d'$ and goes on in the second phase.
\item\underline{The node $v$ is the current node.} This is precisely when the previous
case does not hold. The automaton moves to an accepting state.
\end{itemize}

\paragraph{Construction of the order-${(n+1)}$ CPDA $\mathcal{A}$ generating $t'$.}

  By Theorem~\ref{theo:equi}, there exists an order-$n$ CPDA
  $\mathcal{C}=\anglebra{\{1,\dots,m\} \cup\{\silent\}, \Gamma,
    Q,\delta, q_\iota,F}$, and a mapping $\rho: Q \rightarrow
  \Sigma'$ such that $\bar{t}$ is the tree generated by $\mathcal{C}$ and
  $\rho$. 

Hence, for every node $u=d_1 \cdots d_k \in \dom{t}$, there exists  a  unique 
sequence of configurations $(q_0,s_0),\ldots,(q_k,s_k)$ of $\mathcal{C}$ such that:
\begin{itemize}
\item there exists a path in $\transgraph{\mathcal{C}}$ labelled by $\silent^*$
from the initial configuration  to $(q_0,s_0)$,
\item for all $i \in \{0,\dots,k\}$, $(q_i,s_i)$ is not the source of an $\silent$-labelled arc in $\transgraph{\mathcal{C}}$,
\item for all $i \in \{0,\dots,m-1\}$, there exists a path labelled by a word in $d_{i+1} \silent^*$
from $(q_i,s_i)$ to $(q_{i+1},s_{i+1})$ in $\transgraph{\mathcal{C}}$.
\end{itemize}

Such a sequence can be encoded as an order-$(n+1)$ stack $s_u$ using symbols from $\Gamma\cup A$ (recall that the $s_i$ are order-$n$ stacks) in the following way:
\[
s_u = [{\tilde{s_0}}, \tilde{s_1}, \ldots, \tilde{s_{k-1}}, \pushlk{1}{q_{k}}(s_{k}) ]
\]
\noindent
where for all $i \in \{0,\dots,k-1\}$, $\tilde{s_i}= \pushlk{1}{d_{i+1}}(\pushlk{1}{q_i}(s_i))$.

The automaton $\mathcal{A}$ works on stacks corresponding to some
$s_u$ for some $u \in \dom{t}$. Its set of control states contains a distinguished state $q_\star$ and all states of the tree-walking automata
$(\mathcal{W}_\ell)_{\ell \in \{1,\dots,m\}}$ which we assumed to be disjoint.
The configurations of $\mathcal{A}$ that are source of 
non-$\silent$-labelled arcs will be of the form $(q_\star,s_u)$ for
some $u \in \dom{t}$. The intended behaviour of $\mathcal{A}$
is that, for some $\ell \in \{1,\dots,m\}$, if $(t,u,v) \models \phi_{\ell}(x,y)$
then $\mathcal{A}$ can go from the configuration $(q_\star,s_u)$ to 
the configuration $(q_\star,s_v)$ by a path labelled by $\ell \silent^*$.

First 
$\mathcal{A}$  moves by an $\ell$-labelled transition
to  the configuration
$(q^\ell_0,s_u)$ where $q^\ell_0$ is the initial state of the
tree-walking automaton $\mathcal{W}_\ell$. {Recall that the information about the location of the
vertex $v$ is given by the annotations in $\rho(\topn{1}(s_u))$.}

In a configuration of the form $(p,s_u)$ with $p$ a state of $\mathcal{W}_\ell$,
$\mathcal{A}$ simulates the behaviour of $\mathcal{W}_\ell$ on $\bar{t}$ at node
$u$ in state $p$ by a sequence of $\silent$-transitions. As
$\bar{t}(u)=\rho(\topn{1}(s_u))$, $\mathcal{A}$ can compute the
transition taken by the automaton $\mathcal{W}_\ell$ on $\bar{t}$ at node $u$ in
state $p$. The behaviour of $\mathcal{A}$ will be such that if
$\mathcal{W}_\ell$ goes from $(p,u)$ to $(q,u')$ in one step then $\mathcal{A}$
will go through a sequence of $\silent$-transitions from $(p,s_u)$ to $(q,{s_{u'}})$.

We distinguish several cases depending on the movement performed by $\mathcal{W}_\ell$.
\begin{itemize}
\item $\mathcal{W}_\ell$ stays in the current node in state $q$. Then $\mathcal{A}$
changes its state to $q$ by an $\silent$-transition.
\item $\mathcal{W}_\ell$ goes to its parent node in state $q$ (\ie $u=u'd$ and 
$\mathcal{W}_\ell$ ends up in $u'$ in state $q$). Then $\mathcal{A}$
performs  $\popn{n+1}$ followed by a $\popn{1}$ and moves to state $q$.
The stack content of $\mathcal{A}$ is now $\popn{1}(\popn{n+1}(s_u))=s_{u'}$ hence $\mathcal{A}$ is in configuration $(q,s_{u'})$.
\item $\mathcal{W}_\ell$ goes to its $d$-child in state $q$ (\ie $u'=ud$ and 
$\mathcal{W}_\ell$ ends up in $ud$ in state $q$). 
Assume that $s_u$ is equal to:
\[ [{\tilde{s_0}},\tilde{s_1}, \ldots, \tilde{s_{k-1}}, \pushlk{1}{q_{k}}(s_{k})] \]
and that $s_{ud}$ is of the from:
\[ [{\tilde{s_0}},\tilde{s_1}, \ldots, \pushlk{1}{d}(\pushlk{1}{q_{k}}(s_{k})),\pushlk{1}{q_{k+1}}(s_{k+1})]   \]
As $\mathcal{C}$ generates $\bar{t}$, there exists a path $\pi$ in $\transgraph{\mathcal{C}}$ from $(q_{k},s_k)$
to $(q_{k+1},s_{k+1})$ labelled by $d \silent^*$.

Then $\mathcal{A}$
starts by performing a $\pushlk{1}{d}$ followed by $\pushn{n+1}$, $\popn{1}$ and $\popn{1}$.
At this point the stack is:
\[ [{\tilde{s_0}},\tilde{s_1}, \ldots, \pushlk{1}{d}(\pushlk{1}{q_{k}}(s_{k})),s_{k}]. \]
Then $\mathcal{A}$ simulates the order-$n$ operations of $\mathcal{C}$ along the path $\pi$ using $\silent$-transitions.
When no $\silent$-transition of $\mathcal{C}$ can be applied, $\mathcal{A}$ performs a last $\silent$-transition where it performs a $\pushlk{1}{q_{k+1}}$ and goes to state  $q_{k+1}$ hence, reaching configuration $(q,s_{ud})$.
\end{itemize}

Eventually $\mathcal{A}$ will reach a configuration of the form $(q_f^\ell,s_v)$ where
$q_f^\ell$ is the accepting state of $\mathcal{W}_\ell$. It then goes to the state $q_\star$.

From its initial configuration, $\mathcal{A}$ deterministically 
 builds the stack $s_{u_0}$
(which correspond to the vertex $u_0$ from which $\mathcal{I}(t)$ is
unfolded) by using sequence of $\silent$-transitions  and goes to the state $q_\star$.

By construction, we have that the $\silent$-closure of $\mathcal{A}$
restricted to the vertices reachable from its initial configuration
is isomorphic to $\mathcal{I}(t)$ restricted to the vertices reachable
from $u_0$. The isomorphism simply maps a configuration $(q_\star,s_u)$ of $\mathcal{A}$ to $u \in \dom{t}$.

To generate $t'$ it remains to retrieve the node label, \ie define a function from the set of control states of $\mathcal{A}$ into $\Sigma$. As defined so far $\mathcal{A}$ does not have the necessary information for that: indeed, relevant configurations all have $q_\star$ as their control state. But it is not hard to replace $q_\star$ by a variant in $\{q_\star^a\mid a\in \Sigma\}$: instead of going to $(q_\star,s_u)$, $\mathcal{A}$ goes to $(q_\star^a,s_u)$ where $a$ is the first component labelling $u$ in $\bar{t}$ which can easily be recover by $\mathcal{A}$ (it suffices to do a $\popone$ operation and the information is then in the topmost symbol). Then, if $\rho$ is defined by $\rho(q^a_\star)=a$, the tree $t'$ is generated by $\mathcal{A}$ and $\rho$, which concludes the proof.

\section{Selection}\label{section:selection}

We now focus on a more general problem than global model-checking, often known as the \emph{synthesis} problem. For simplicity we start by a case study motivated by the famous question of the definability of choice functions on the infinite binary tree \cite{GurevichS83a}.

\subsection{A Case Study: Choice Functions}

Consider an infinite binary tree $t$ in which some nodes are coloured (\ie labelled) in red; we also assume that the domain of $t$ is included in $\{1,2\}^*$ and we identify direction $1$ with the left and direction $2$ with the right. Assume that $t$ satisfies the following extra property (we say that $t$ is \emph{well-formed}): every subtree contains at least one red node. Our goal is for all nodes $x$ to \emph{choose/select} a red node $y$ in the subtree $t[x]$ rooted at $x$ (see Figure~\ref{fig:choice} for an illustration).

\begin{figure}
\begin{center}
\begin{tikzpicture}
\fill[bottom color=LimeGreen!100!black!20,top color=LimeGreen!100!black!70] (0,0) -- (-3,-6) -- (3,-6) -- cycle ;
\draw[LimeGreen,thick] (0,0) -- (-3,-6);
\draw[LimeGreen,thick] (0,0) -- (3,-6);

\node[draw=black,circle,fill=black,scale=0.5] (n) at (-0,0) {};
\node[draw=black,circle,fill=black,scale=0.5] (n0) at (-0.3,-0.6) {};
\node[draw=black,circle,fill=black,scale=0.5] (n1) at (0.3,-0.6) {};
\node[draw=black,circle,fill=black,scale=0.5] (n00) at (-0.6,-1.2) {};
\node[draw=black,circle,fill=black,scale=0.5] (n01) at (-0.1,-1.2) {};
\node[draw=black,circle,fill=black,scale=0.5] (n10) at (0.1,-1.2) {};
\node[draw=black,circle,fill=black,scale=0.5] (n11) at (0.6,-1.2) {};
\draw[thick] (n) -- (n0);\draw[thick] (n) -- (n1);
\draw[thick] (n0) -- (n00);\draw[thick] (n0) -- (n01);
\draw[thick] (n1) -- (n10);\draw[thick] (n1) -- (n11);
\node[] (n100) at (0.3,-1.8) {};
\node[] (n101) at (-0.1,-1.8) {};
\draw[thick,dotted] (n10) -- (n100);\draw[thick,dotted] (n10) -- (n101);

\node[draw=black,circle,scale=0.5,fill=Crimson] (f1) at (-0.65,-5.3) {};
\node[draw=black,circle,scale=0.5,fill=Crimson] (f2) at (0.1,-1.2) {};
\node[draw=black,circle,scale=0.5,fill=Crimson] (f3) at (1,-3.3) {};
\node[draw=black,circle,scale=0.5,fill=Crimson] (f4) at (0.7,-5.3) {};
\node[draw=black,circle,scale=0.5,fill=Crimson] (f5) at (-0.7,-2) {};
\node[draw=black,circle,scale=0.5,fill=Crimson] (f6) at (-1.95,-5.8) {};

{
\fill[
bottom color=Orange!100!black!20,
top color=Orange!100!black!70] (-1,-3.1) -- (-2.45,-6) -- (0.45,-6) -- cycle ;
\draw[Orange,thick] (-1,-3.1) -- (-2.45,-6);
\draw[Orange,thick] (-1,-3.1) -- (0.45,-6);
\node[draw=black,circle,scale=0.5] (n) at (-1,-3.5) {};
\node[draw=black,circle,scale=0.5] (n0) at (-1.4,-4.1) {};
\node[draw=black,circle,scale=0.5] (n1) at (-0.6,-4.1) {};
\node[draw=black,circle,scale=0.5] (n00) at (-1.65,-4.7) {};
\node[draw=black,circle,scale=0.5] (n01) at (-1.15,-4.7) {};
\node[draw=black,circle,scale=0.5] (n10) at (-0.85,-4.7) {};
\node[draw=black,circle,scale=0.5] (n11) at (-0.35,-4.7) {};
{\draw[thick,->]  (n) -- (n0);}
\draw[thick,->]  (n) -- (n1);
{\draw[thick,->]  (n0) -- (n00);}
\draw[thick,->]  (n0) -- (n01);
\draw[thick,->]  (n1) -- (n10);
{\draw[thick,->]  (n1) -- (n11);}
\draw[thick,->]  (n10) -- (f1);
\node[draw=black,circle,scale=0.5,fill=Crimson] (f1) at (-0.65,-5.3) {};
\node[draw=black,circle,scale=0.5,fill=Crimson] (f6) at (-1.95,-5.8) {};
{
	\node (every) at (0,-7) {\textcolor{Crimson}{Specification: }for all nodes $x$ \emph{choose} a \textcolor{Crimson}{red} node $y$ in $t[x]$};
	\draw[->,color=black,thick,dotted] (-0.1,-6.8) to[out=135,in=-90]  (n);
	\draw[->,color=black,thick,dotted] (3.1,-6.8) to[out=135,in=-15]  (f1);
}
}

\end{tikzpicture}
\caption{Choice function}\label{fig:choice}
\end{center}
\end{figure}
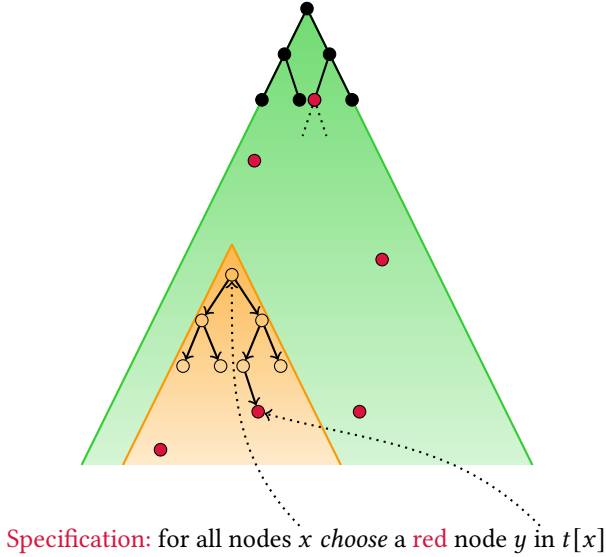

A \defin{choice function} is a function that with any node $x$ associates such a $y$. An \defin{MSO choice function} for $t$  is a formula describing a choice function, \ie a formula $\phi(x,y)$ with two first-order free variables such that $$\forall x\in t\ \ \exists !\,y\text{ s.t. } y \text{ is red, }x< y\text{ and } \phi(x,y)$$

{Guverich and Shelah proved that there does not exist an MSO formula $\varphi(x,X)$ such that for all non-empty set of nodes $U$ of the infinite complete binary tree $t_2$, $(t_2,u,U) \models \varphi(x,X)$ for exactly one node $u$ in $U$. In other terms, there is no MSO formula selecting (on the infinite complete binary tree) exactly one element for each non-empty subset. This result was later re-proven in a more elementary way by Carayol and Löding. Moreover they also exhibited in \cite[Theorem~6]{CarayolL07} a tree generated by a (safe) recursion scheme on which finding an MSO choice function (in the sense of the present article) fails.}

\begin{theorem}{\cite{GurevichS83a,CarayolL07}\cite[Proposition~4.3]{CarayolHDR}}\label{theo:choice}
There is a well-formed tree generated by an order-$3$ (safe) recursion scheme for which no MSO choice function exists.
\end{theorem}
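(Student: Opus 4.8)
The plan is to argue by contradiction, following the elementary approach of Carayol and L\"oding. Suppose that some MSO formula $\phi(x,y)$ is a choice function for a well-formed tree $t$, and let $k$ be its quantifier rank. Before anything I would stress \emph{why the statement is non-trivial}, since this dictates the whole strategy: because $t$ is an \emph{ordered} structure, equipped with two distinct child relations $S_1$ and $S_2$, it has no non-trivial automorphism, so one cannot refute $\phi$ by a naive ``swap the two subtrees'' symmetry --- and indeed on many trees the rule ``pick the left-most red descendant'' \emph{is} MSO-definable. The real task is therefore to produce a single tree $t$, generated by an order-$3$ safe scheme, whose red nodes are so homogeneous that \emph{no} rank-$k$ formula can single one of them out uniformly, for every $k$ at once.

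The main tool is the composition (Feferman--Vaught--Shelah) method together with Rabin's theorem. By Theorem~\ref{th:Rabin69} I would turn $\phi$ into a tree automaton reading $t$ with the pair $(x,y)$ marked; the point is the standard fact that, at fixed quantifier rank $k$, a subtree carrying a bounded number of marks realises only finitely many MSO $k$-types, and the $k$-type of $(t;x,y)$ is determined by those of the subtrees hanging along the paths to $x$ and to $y$. The decisive consequence I extract is a rigidity principle: if $x$ is a node and $y_1\neq y_2$ are red descendants of $x$ for which the pointed structures $(t;x,y_1)$ and $(t;x,y_2)$ have the same $k$-type, then $(t,x,y_1)\models\phi$ iff $(t,x,y_2)\models\phi$, so $\phi$ cannot hold of exactly one of $y_1,y_2$.

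The contradiction then comes from combining this principle with pigeonhole on a carefully chosen tree. Taking $x$ to be the root (which has infinitely many red descendants, by well-formedness and infiniteness), the map sending a red node $y$ to the $k$-type of $(t;\mathrm{root},y)$ has finite range. The tree of~\cite{CarayolL07,CarayolHDR} is built, by an order-$3$ safe scheme, from a self-similar pattern of red nodes whose gaps grow faster than any finite-memory device can follow, and it is tuned so that the pattern repeats at every scale: every red node shares its $k$-type (relative to the root) with a second, incomparably-placed red node, simultaneously for all $k$. Hence whatever red node $y^\star$ the formula $\phi$ would choose, there is a distinct red node $y'$ with $(t;\mathrm{root},y^\star)\equiv_k(t;\mathrm{root},y')$; by the rigidity principle $\phi$ holds of $y'$ as well, contradicting the clause $\exists!\,y$ in the definition of a choice function.

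The hard part is exactly this last design requirement, and it is where the order must be handled with care: since $S_1$ and $S_2$ let a formula tell left from right, one cannot invoke symmetry and must instead establish genuine rank-$k$ \emph{type-homogeneity} of the red set for every $k$. I would verify the required twinning through a back-and-forth Ehrenfeucht--Fra\"{\i}ss\'e game comparing $(t;\mathrm{root},y^\star)$ with $(t;\mathrm{root},y')$, letting Duplicator answer by matching the nested self-similar copies below the two red nodes; the self-similarity of the scheme's tree is precisely what makes Duplicator win at each fixed rank. The remaining bookkeeping --- that the generating scheme has order $3$, is safe, and produces a well-formed tree --- is routine and I would import it from~\cite{CarayolL07,CarayolHDR}.
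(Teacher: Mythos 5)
First, a point of reference: the paper itself does not prove Theorem~\ref{theo:choice}; it imports it from \cite{GurevichS83a,CarayolL07} and \cite[Proposition~4.3]{CarayolHDR}. So your attempt stands on its own merits, and it contains a fatal gap: you stage the entire contradiction at the root, and no such argument can succeed, because at the root (or at any \emph{fixed} node) a selecting formula always exists. Indeed, fix any red node $y_0$ below the root and let $w_0=d_1\cdots d_m$ be the direction sequence leading to it; the first-order formula
$\phi_{w_0}(x,y)\;=\;\exists z_0\cdots\exists z_m\,\bigl(z_0=x \wedge S_{d_1}(z_0,z_1)\wedge\cdots\wedge S_{d_m}(z_{m-1},z_m)\wedge z_m=y\bigr)$
holds of exactly one node below the root, namely $y_0$. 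Equivalently, the homogeneity property you ask of the Carayol--Löding tree --- that every red node has a distinct red $\equiv_k$-twin relative to the root, for each $k$ (let alone ``simultaneously for all $k$'') --- is unsatisfiable in \emph{every} tree: a red node $y_0$ sits at some finite depth $d_0$, and as soon as $k>d_0+1$ the rank-$k$ type of $(t;\mathrm{root},y_0)$ determines the path $w_0$ and hence determines $y_0$, so $y_0$ has no $k$-twin. Your pigeonhole step only yields that \emph{some} two red nodes share a $k$-type, which does not help: the formula may select a red node that is alone (among red nodes) in its $\equiv_k$-class, for instance via the Hintikka formula of that class.

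What makes the theorem true is the uniformity in $x$: a single $\phi(x,y)$ must work at \emph{every} node simultaneously. Any correct proof must therefore, for each candidate formula --- equivalently, by Theorem~\ref{th:Rabin69}, for each automaton with $N$ states --- exhibit a witness node $x$ \emph{depending on $N$} at which existence-and-uniqueness fails, and these witnesses necessarily recede deeper into the tree as $N$ grows. This is how the cited proof goes: roughly, the order-$3$ tree stacks infinitely many gadgets along a spine, the $n$-th gadget being a full binary tree whose red nodes are those with at least $n$ occurrences of a fixed direction; a pumping argument on accepting runs shows that an automaton with fewer than $n$ states accepting one marked red node of this gadget must also accept a different one, so uniqueness fails at the root $x_n$ of the $n$-th gadget once $n$ is large relative to $N$. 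Your rigidity principle and the EF-game machinery are sound as far as they go, but they must be deployed at these formula-dependent nodes $x_n$, not at the root; as written, the tree with the property your argument needs does not exist.
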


Instead of using an MSO formula to describe a choice function, one can do the following (see Figure~\ref{fig:choice1} for an illustration).
Consider a partition $X_1\uplus X_2$ of the set of nodes of $t$, and think of the nodes in $X_1$ (\emph{resp.} $X_2$) as those where one should first go down to the left (\emph{resp.} right) in order to find a red node. A partition $(X_1,X_2)$ describes a choice function iff the following holds. 
For any node $u\in t$, define the sequence $u_0,u_1,u_2,\ldots$ by letting $u_0=u$ and $u_{i+1} = u_i d_i$ where $d_i=1$ if $u_{i}\in X_1$ and $d_i=2$ if $u_{i}\in X_2$: \ie $u_0,u_1,u_2,\ldots$ is the sequence of nodes visited starting from $u$ and following the directions indicated by $X_1\uplus X_2$. Then for some $k$, $u_k$ is red.

\begin{figure}
\begin{center}
\begin{tikzpicture}
\fill[
bottom color=LimeGreen!100!black!20,
top color=LimeGreen!100!black!70] (0,0) -- (-3,-6) -- (3,-6) -- cycle ;
\draw[LimeGreen,thick] (0,0) -- (-3,-6);
\draw[LimeGreen,thick] (0,0) -- (3,-6);

\node[draw=black,circle,scale=0.5,fill=Crimson] (f1) at (-0.65,-5.3) {};
\node[draw=black,circle,scale=0.5,fill=Crimson] (f2) at (0.1,-1.2) {};
\node[draw=black,circle,scale=0.5,fill=Crimson] (f3) at (1,-3.3) {};
\node[draw=black,circle,scale=0.5,fill=Crimson] (f4) at (0.7,-5.3) {};
\node[draw=black,circle,scale=0.5,fill=Crimson] (f5) at (-0.7,-2) {};
\node[draw=black,circle,scale=0.5,fill=Crimson] (f6) at (-1.95,-5.8) {};

{
\fill[
bottom color=Orange!100!black!20,
top color=Orange!100!black!70] (-1,-3.1) -- (-2.45,-6) -- (0.45,-6) -- cycle ;
\draw[Orange,thick] (-1,-3.1) -- (-2.45,-6);
\draw[Orange,thick] (-1,-3.1) -- (0.45,-6);
\node[draw=black,circle,scale=0.5] (n) at (-1,-3.5) {};
\node[draw=black,circle,scale=0.5] (n0) at (-1.4,-4.1) {};
\node[draw=black,circle,scale=0.5] (n1) at (-0.6,-4.1) {};
\node[draw=black,circle,scale=0.5] (n00) at (-1.65,-4.7) {};
\node[draw=black,circle,scale=0.5] (n01) at (-1.15,-4.7) {};
\node[draw=black,circle,scale=0.5] (n10) at (-0.85,-4.7) {};
\node[draw=black,circle,scale=0.5] (n11) at (-0.35,-4.7) {};
\draw[thick,->]  (n) -- (n1);
\draw[thick,->]  (n0) -- (n01);
\draw[thick,->]  (n1) -- (n10);
\draw[thick,->]  (n10) -- (f1);
\node[draw=black,circle,scale=0.5,fill=Crimson] (f1) at (-0.65,-5.3) {};
\node[draw=black,circle,scale=0.5,fill=Crimson] (f6) at (-1.95,-5.8) {};
{\node[draw=black,circle,scale=0.5] (n000) at (-1.8,-5.3) {};\draw[thick,->]  (n00) -- (n000); \draw[thick,->]  (n000) -- (f6);}
{\node (n111) at (-0.05,-5.45) {};\draw[dotted,thick,->]  (n11) -- (n111);}
{\node (n010) at (-1.3,-5.45) {};\draw[dotted,thick,->]  (n01) -- (n010);}
{\node (n1010) at (-0.85,-5.95) {};\draw[dotted,thick,->]  (f1) -- (n1010);}

\node (X0) at (-4,-4) {$X_2$};\draw[blue,->,thick]  (X0) -- (n);\draw[blue,->,thick]  (X0) -- (n10);\draw[blue,->,thick]  (X0) -- (n0);
}

\end{tikzpicture}
\caption{Partition defining a choice function}\label{fig:choice1}
\end{center}
\end{figure}
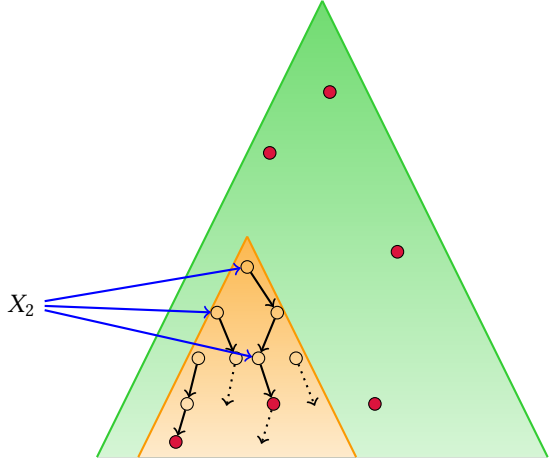

\begin{remark}
Note that a partition $X_1\uplus X_2$ always exists on a well-formed tree. Indeed, for every node $u$, consider the minimal depth of a red node in the left subtree and the minimal depth of a red node in the right subtree: if the smallest depth is in the left subtree one lets $u\in X_1$ otherwise one lets $u\in X_2$.
\end{remark}

It directly follows from Theorem~\ref{theo:choice} that a partition defining a choice function cannot be captured by an MSO formula.

\begin{corollary}{}\label{cor:choice}
There is a well-formed tree generated by an order-$3$ (safe) recursion scheme  such that, for all MSO formula $\phi(x)$, the sets $X_1=\{u\mid\phi(x)\text{ holds in }u\}$ and $X_2=\{u\mid\phi(x)\text{ does not hold in }u\}$ do not define a choice function on $t$.
\end{corollary}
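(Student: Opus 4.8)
The plan is to obtain Corollary~\ref{cor:choice} as an immediate consequence of Theorem~\ref{theo:choice} by a reduction: I will show that whenever a single MSO formula $\phi(x)$ induces (via its partition $X_1\uplus X_2$) a choice function in the path-following sense, one can write a \emph{two}-variable MSO formula that is an MSO choice function in the sense of Theorem~\ref{theo:choice}. Since the tree supplied by Theorem~\ref{theo:choice} admits no MSO choice function at all, no such $\phi(x)$ can exist for it, which is exactly what the corollary asserts. Concretely, I fix the well-formed tree $t$ generated by an order-$3$ safe recursion scheme provided by Theorem~\ref{theo:choice}, and argue by contradiction: suppose there is an MSO formula $\phi(x)$ such that $X_1=\{u\mid (t,u)\models\phi(x)\}$ and $X_2=\dom{t}\setminus X_1$ define a choice function on $t$.

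The key step is to express in MSO the map sending a node $x$ to ``the first red node reached from $x$ by following the $(X_1,X_2)$-directions''. I would write a formula $\psi(x,y)$ that quantifies over a set $P$ (intended to be the finite path running from $x$ down to the selected node $y$) and asserts: $x,y\in P$; $P$ is totally ordered by the strict prefix order $\sqsubset$; for every $w\in P$ distinct from $y$, the node $w$ is not red and its unique $P$-successor is its left child (\ie via $S_1$) if $(t,w)\models\phi(x)$ and its right child (via $S_2$) otherwise; and finally $y$ is red and has no proper $P$-successor. Every clause here is MSO-definable using the child relations $S_i$, the prefix order $\sqsubset$, the label predicate for red nodes, and the formula $\phi$ with its free variable instantiated at the quantified node; hence $\psi(x,y)$ is itself an MSO formula. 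Because the partition is assumed to be a choice function, the path from any $x$ does meet a red node, so the corresponding $P$ is finite and quantifying over it is legitimate.

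It then remains to verify that $\psi(x,y)$ is an MSO choice function for $t$. Existence of a witness $y$ for each $x$ is exactly the hypothesis that $(X_1,X_2)$ is a choice function (the sequence $u_0,u_1,\dots$ hits a red node, and taking the first such node gives $y$ below $x$). Uniqueness is immediate since the path is \emph{deterministic}: from each node the next direction is fixed by membership in $X_1$ or $X_2$, so there is precisely one first red node along it. This exhibits an MSO choice function for $t$, contradicting Theorem~\ref{theo:choice}, and so no MSO formula $\phi(x)$ can induce a partition-based choice function on $t$, proving the corollary.

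I expect the only delicate point to be a bookkeeping one rather than a conceptual obstacle: reconciling the strict requirement $x<y$ in the definition of an MSO choice function with the path-following convention in the degenerate case where $x$ itself is red (so that the sequence already succeeds at $u_0=x$). This is handled by a minor case split inside $\psi(x,y)$ and does not affect the reduction; the substantive content is entirely the MSO encoding of the deterministic $\phi$-driven descent, which is routine but must be set up so that $P$ captures exactly the first red node.
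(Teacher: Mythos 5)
Your proposal is correct and takes essentially the same route as the paper: the paper derives Corollary~\ref{cor:choice} directly from Theorem~\ref{theo:choice} by exactly the reduction you spell out, namely turning an MSO formula $\phi(x)$ whose induced partition defines a choice function into a two-variable MSO choice function $\psi(x,y)$ that quantifies over the partition-driven descent path (the same path encoding the paper itself uses later in Example~\ref{ex:choicevsselection}). The degenerate case where $x$ is itself red is indeed only bookkeeping, handled by taking $y$ to be the first red node strictly after $x$ on the descent, which exists because the choice-function property also applies to the directed child of $x$.
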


In the same spirit as for global model-checking we propose an exogeneous and an endogeneous approach to the previous problem.
\begin{itemize}
\item \emph{Exogenous} approach: Given a $\Sigma$-labelled well-formed tree $t:\dom{t}\rightarrow \Sigma$, output a description by means of a word acceptor device of a subset $X_1\subseteq \dom{t}$ of nodes such that $(X_1,\dom{t}\setminus X_1)$ defines a choice function for $t$.
\item \emph{Endogenous} approach: Given a $\Sigma$-labelled well-formed tree $t:\dom{t}\rightarrow \Sigma$, output a finite description of a $
(\Sigma\times\{1,2\})$-labelled tree $t_{ch}:\dom{t}\rightarrow\Sigma\times\{1,2\}$ such that $t_{\mathrm{ch}}$ and ${t}$ have the same domain, and such that $X_1=\{u\mid u\in \dom{t} \text{ and }t_{\mathrm{ch}}(u)=(a,1) \text{ for some } a\in\Sigma\}$ and $X_2=\{u\mid u\in \dom{t} \text{ and }t_{ch}(u)=(a,2) \text{ for some } a\in\Sigma\}$ define a choice function for $t$. 
\end{itemize}

Contrasting with the impossibility result stated in Corollary~\ref{cor:choice} we have the following result that follows from a more general result (see Theorem~\ref{theo:selection} below).

\begin{corollary}\label{cor:choice-cpda}
Let $t$ be a well-formed tree generated by an order-$n$ recursion scheme $\rscheme$.
\begin{enumerate}
\item There is an algorithm that builds from $\rscheme$ an order-$n$ CPDA $\mathcal{A}$ such that $(L(\mathcal{A}),\dom{t}\setminus L(\mathcal{A}))$ defines a choice function for $t$.
\item There is an algorithm that builds from $\rscheme$ an order-$n$ recursion scheme $\rscheme_{\mathrm{ch}}$ that generates a tree $t_{\mathrm{ch}}$ defining a choice function for $t$.
\end{enumerate}
\end{corollary}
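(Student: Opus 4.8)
The plan is to deduce the corollary from the MSO selection property (Theorem~\ref{theo:selection}). The key observation is that ``$(X_1,X_2)$ defines a choice function for $t$'' is an MSO property of the single set $X_1$ (taking $X_2=\dom{t}\setminus X_1$ as its complement), so that the mere existence of a good partition can be phrased as the truth on $t$ of a formula of the shape $\exists X\,\psi(X)$. Feeding this into the selection theorem then hands us, essentially for free, a scheme that marks a concrete witnessing set $X_1$, from which both the endogenous and the exogenous statements follow.

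First I would write $\psi(X)$ explicitly. Writing $\mathrm{red}(x)$ for the (MSO-definable) predicate asserting that $x$ carries a red label, and $S_1,S_2$ for the left/right child relations, I introduce the pointer relation
\[
  \mathrm{next}_{X}(x,y) \;:=\; \bigl(x\in X \wedge S_1(x,y)\bigr) \vee \bigl(x\notin X \wedge S_2(x,y)\bigr),
\]
which sends each internal node to its left or right child according to whether it lies in $X=X_1$. The partition defines a choice function exactly when, starting from any node and iterating $\mathrm{next}_{X}$, one eventually lands on a red node. Since reachability along a binary relation is MSO-expressible via its backward closure, this is captured by
\[
  \psi(X) \;:=\; \forall x\,\forall Z\,\Bigl[\bigl(\forall w\,(\mathrm{red}(w)\Rightarrow w\in Z)\ \wedge\ \forall y\,\forall z\,(\mathrm{next}_{X}(y,z)\wedge z\in Z\Rightarrow y\in Z)\bigr)\Rightarrow x\in Z\Bigr].
\]
Here the inner quantification over $Z$ picks out exactly the least backward-closed set containing all red nodes, i.e.\ the set of nodes that reach a red node along their pointer path. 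There is no difficulty at leaves: well-formedness forces every leaf to be red, so the reachability clause holds trivially there and $\mathrm{next}_{X}$ is never required to step off the tree.

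Next I would invoke the remark preceding Corollary~\ref{cor:choice}: on a well-formed tree a witnessing partition always exists (at each node point toward a subtree containing a red node of minimal depth), so $t\models\exists X\,\psi(X)$. Applying Theorem~\ref{theo:selection} to $\rscheme$ and $\exists X\,\psi(X)$ then yields an order-$n$ scheme generating $t$ with a marked set $U$ such that $(t,U)\models\psi(X)$; setting $X_1:=U$ and $X_2:=\dom{t}\setminus U$ gives a choice function. For the endogenous statement~(2) it only remains to reinterpret the marking as a second label component: relabel each terminal $a$ to $(a,2)$ and each marked terminal $\marked{a}$ to $(a,1)$, a purely syntactic rewriting of the scheme's right-hand sides, producing the desired $\rscheme_{\mathrm{ch}}$ and $t_{\mathrm{ch}}$ in which a node carries second component $1$ iff it lies in $X_1$.

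For the exogenous statement~(1), I would convert $\rscheme_{\mathrm{ch}}$ to an order-$n$ CPDA with an output labelling via Theorem~\ref{theo:equi}, and declare as final exactly the states producing a label with second component $1$; used as a word acceptor this CPDA recognises $L(\mathcal{A})=X_1$, by the same ``(2) implies (1)'' argument already used in the proof of Theorem~\ref{theo:reflection-mu-calculus}, so that $(L(\mathcal{A}),\dom{t}\setminus L(\mathcal{A}))$ defines a choice function. I expect the only genuinely delicate point to be the faithful MSO encoding $\psi(X)$ of the choice-function condition — in particular verifying that the backward-reachability clause really expresses ``every node reaches a red node along its pointer path'' and interacts correctly with leaves; once $\psi$ is in hand, everything downstream is a direct instantiation of the selection theorem together with the routine scheme/CPDA bookkeeping already developed for reflection.
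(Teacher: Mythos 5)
Your proposal is correct and follows essentially the same route as the paper: the paper also derives this corollary directly from Theorem~\ref{theo:selection}, using an MSO formula (given in Example~\ref{ex:choicevsselection}) expressing that the partition defines a choice function, and then observes that item (2) implies item (1) by the same argument as for reflection. The only differences are cosmetic — you encode reachability of a red node via backward closure with a single set variable $X$ (taking $X_2$ as the complement), while the paper uses two variables $X_1,X_2$ and finite path witnesses — and the paper itself remarks later that the single-variable formulation suffices.
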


\begin{remark}
As for the reflection property we note that, in the previous statement, item (2) implies item (1).
\end{remark}

 {The corollaries~\ref{cor:choice} and~\ref{cor:choice-cpda} might seem, at first sight, contradictory. However they concern two orthogonal representations of choice functions: via MSO definability and via CPDA respectively. Applying Corollary~\ref{cor:choice-cpda} to the tree generated by a recursion scheme of Corollary~\ref{cor:choice} leads to a new recursion scheme of the same order but enriched with additional information.}

\subsection{Effective Selection Property}

We now introduce the effective selection property and first present an exogenous approach.

\begin{definition}{(MSO selection problem: exogeneous approach)}
Let $\phi(X_1,\cdots,X_\ell)$ be an MSO formula with $\ell$ second-order free variables, and let $t:\dom{t}\rightarrow \Sigma$ be a $\Sigma$-labelled ranked tree. The \defin{MSO selection problem} is to decide whether the formula $\exists X_1\ldots \exists X_\ell\ \phi(X_1,\cdots,X_\ell)$ holds in $t$, and in this case to output a description, by means of word acceptor devices, of $\ell$ sets $U_1,\cdots,U_l\subseteq \dom{t}$ such that $(t,U_1,\dots,U_\ell)\models\phi(X_1,\dots,X_\ell)$.
\end{definition}

We now give the endogenous approach. The idea is to describe subsets of nodes $U_1,\cdots U_\ell$, such that $(t,U_1,\dots,U_\ell)\models\phi(X_1,\dots,X_\ell)$, by marking every node with a tuple of $\ell$ Booleans (a node $u$ belongs to $U_i$ iff the $i$-th Boolean is $1$ in the tuple labelling $u$). See Figure~\ref{fig:selection} for an illustration of the following definition when $\ell=2$.

\begin{definition}{(MSO selection problem: endogeneous approach)}
Let $\phi(X_1,\cdots,X_\ell)$ be an MSO formula with $\ell$ second-order free variables, and let $t:\dom{t}\rightarrow\Sigma$ be a $\Sigma$-labelled ranked tree. Call $\Xi$ the ranked alphabet $\Sigma\times \{0,1\}^\ell$ defined by letting $\arity{(a,b_1,\ldots,b_\ell)}=\arity{a}$. The \defin{MSO selection problem} is to decide whether the formula $\exists X_1\ldots \exists X_\ell\ \phi(X_1,\cdots,X_\ell)$ holds in $t$, and if so to output a $\Xi$-labelled ranked tree $t_\phi:\dom{t_\phi}\rightarrow \Xi$ such that the following holds:
{
\begin{enumerate}
\item The trees $t$ and $t_\phi$ have the same domain and for every node $u$, one has $t(u)=\pi(t_\phi(u))$ where $\pi$ is the projection from ${\Xi}$ to $\Sigma$ defined by $\pi((a,b_1,\dots,b_\ell))=a$.
\item Defining, for every $1\leq i\leq \ell$, $$U_i=\{u\in \dom{t}\mid t_\phi(u) \text{ is of the form }(a,b_1,\dots,b_{i-1},1,b_{i+1},\dots,b_\ell)\},$$ one has $(t,U_1,\dots,U_\ell)\models\phi(X_1,\dots,X_\ell)$.
\end{enumerate}
Intuitively, the second point states that this marking exhibits a valuation of the $X_i$ for which $\phi$ holds in $t$. We refer to $t_\phi$ as a \defin{selector} for $\phi$ in $t$ .
}
\end{definition}

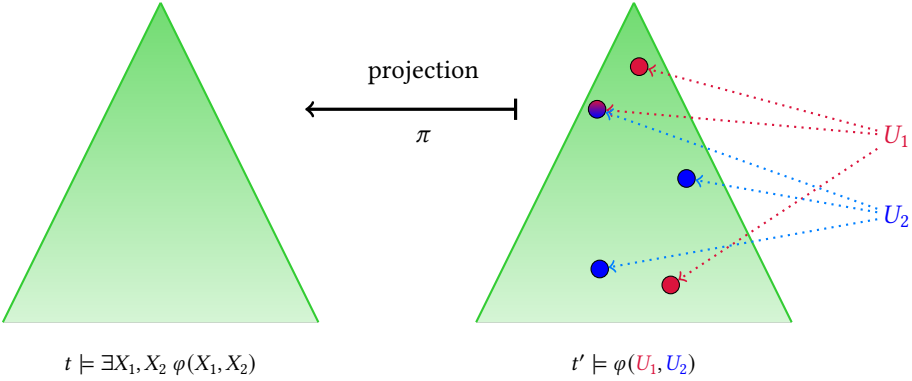
\begin{figure}
\begin{center}
\begin{tikzpicture}[scale=0.7,transform shape]
\fill[
bottom color=LimeGreen!100!black!20,
top color=LimeGreen!100!black!70] (-9,0) -- (-12,-6) -- (-6,-6) -- cycle ;
\draw[LimeGreen,thick] (-9,0) -- (-12,-6);
\draw[LimeGreen,thick] (-9,0) -- (-6,-6);
\node at (-9,-6.8) {\Large $t\models \exists X_1,X_2 \ \phi(X_1,X_2)$};

{
\fill[
bottom color=LimeGreen!100!black!20,
top color=LimeGreen!100!black!70] (0,0) -- (-3,-6) -- (3,-6) -- cycle ;
\draw[LimeGreen,thick] (0,0) -- (-3,-6);
\draw[LimeGreen,thick] (0,0) -- (3,-6);

\node[draw=black,circle,scale=1,fill=blue] (f1) at (-0.65,-5) {};
\node[draw=black,circle,scale=1,fill=Crimson] (f2) at (0.1,-1.2) {};
\node[draw=black,circle,scale=1,fill=blue] (f3) at (1,-3.3) {};
\node[draw=black,circle,scale=1,fill=Crimson] (f4) at (0.7,-5.3) {};
\node[draw=black,circle,scale=1,bottom color=blue,top color=Crimson] (f5) at (-0.7,-2) {};
\draw[<-,color=black,very thick] (-6.25,-2) to  (-2.25,-2);
\draw[color=black,very thick] (-2.25,-2.2) to  (-2.25,-1.8);
\node at (-4,-2.5) {\LARGE \textcolor{black}{$\pi$}};
\node at (-4,-1.3) {\LARGE \textcolor{black}{projection}};
\node at (0,-6.8) {\Large $t'\models  \phi(\textcolor{Crimson}{U_1},\textcolor{blue}{U_2})$};
\node(U1) at (5,-2.5) {\LARGE $\textcolor{Crimson}{U_1}$};
\node(U2) at (5,-4) {\LARGE $\textcolor{blue}{U_2}$};
\draw[->,color=Crimson,thick,dotted] (U1) to  (f2);
\draw[->,color=Crimson,thick,dotted] (U1) to  (f4);
\draw[->,color=Crimson,thick,dotted] (U1) to  (f5);
\draw[->,color=Blue,thick,dotted] (U2) to  (f5);
\draw[->,color=Blue,thick,dotted] (U2) to  (f1);
\draw[->,color=Blue,thick,dotted] (U2) to  (f3);
}

\end{tikzpicture}
\caption{The MSO selection problem}\label{fig:selection}
\end{center}
\end{figure}

\begin{example}{(Local model-checking)}\label{example:selectionvsgmc}
Obviously the MSO selection problem captures the MSO model-checking problem. Indeed, it suffices to take $\ell=0$ (\ie there is no second-order free variable).
\end{example}

\begin{example}{(Global model-checking)}\label{ex:GMCSelection}
The MSO selection problem captures the MSO global model-checking problem. Indeed, consider a tree $t$ and an MSO formula $\phi(x)$ with a first-order free variable and recall that the global model-checking asks for a description (endogenous or exogenous) of the set $\truth{t}{\phi}=\{u\in\dom{t}\mid (t,u)\models\phi(x)\}$. 
Now let $\psi(X)=  x\in X \Leftrightarrow \phi(x) $ Then $\exists X\ \psi(X)$ is always true and there is a unique $U$ such that $(t,U)\models \psi(X)$, and this $U$ is equal to $\truth{t}{\phi}$. Hence, an answer to the MSO selection problem for $\psi(X)$ on $t$ leads to a solution to the global model-checking problem for $\phi$ on $t$.
\end{example}

\begin{example}{(Choice function)}\label{ex:choicevsselection}
We now explain how to use the MSO selection problem to obtain a partition defining a choice function on a well-formed tree $t$.
Consider an MSO formula $\phi(X_1,X_2)$ that expresses the following
\begin{itemize}
\item $X_1$ and $X_2$ form a partition of the nodes of $t$.
\item For all node $x$, there exists a red node $z$ below $x$ and a (finite) subset $U$ of nodes that forms a path from $x$ to $z$, and moreover for all $y\in U$ that is different from $z$ the left-successor (\resp right-successor) of $y$ belongs to $U$ iff $y\in X_1$ (\resp $y\in X_2$).
\end{itemize}

One easily verifies that a solution $(U_1,U_2)$ of the selection problem for $\phi(X_1,X_2)$ on $t$ provides a partition defining a choice function on $t$.
\end{example}

We now define the effective MSO selection property that characterises the classes of generators of trees for which an endogenous approach of the selection problem can be performed.

\begin{definition}{(Effective MSO selection property)}
Let $\mathcal{R}$ be a class of generators of trees. We say that $\mathcal{R}$ has the \defin{effective MSO selection property}  if there is an algorithm that transforms any pair $(R,\phi(X_1,\ldots, X_\ell))$ with $R\in\mathcal{R}$ into some $R_\phi\in\mathcal{R}$ (if exists) such that the tree generated by $R_\phi$ is a selector for $\phi$ in the tree generated by $R$. 
\end{definition}

Quite surprisingly (think of the negative result in Corollary~\ref{cor:choice}), recursion schemes (equivalently CPDA) have the effective MSO selection property. The first proof~\cite{CS12} of this result (which is the one we give here) relies on the connection with the computation of a winning strategy in a CPDA parity game together with the fact that such a strategy can be defined by a CPDA that is \emph{synchronised} with the one defining the arena. Alternative proofs were later given by Haddad in~\cite{Haddad13,HaddadPHD} and by Grellois and Melliès in~\cite{GrelloisM15}. 
Both proofs are very different from the one we give here. Indeed, our proof uses the equi-expressivity theorem to restate the problem as a question on CPDA, and a drawback of this approach is that once the answer is given on the CPDA side one needs to go back to the scheme side, which is not complicated but yields a scheme (that in a sense has been normalised) that is very different from the original one. The advantage of the approaches in \cite{HaddadPHD} (built on top of the intersection types approach by Kobayashi and Ong \cite{KO09}) and in \cite{GrelloisM15} (based on purely denotational arguments and connections with linear logic) is to work directly on the recursion scheme and to succeed to provide as a selector a scheme obtained from the original one by adding duplicated and annotated versions of the terminals.

\begin{theorem}\label{theo:selection}
Higher-order recursion schemes as well as collapsible pushdown automata have the effective MSO selection property.
\end{theorem}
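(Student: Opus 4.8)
The plan is to reduce the selection problem to the computation of a winning strategy in a collapsible pushdown parity game, and then to read the selecting sets directly off that strategy. First I would invoke the equi-expressivity theorem (Theorem~\ref{theo:equi}) to replace the order-$n$ scheme $\rscheme$ by an order-$n$ CPDA $\mathcal{A}$ together with a labelling $\rho$ generating the same tree $t$. On the logical side, writing $\Xi=\Sigma\times\{0,1\}^\ell$, I would use Rabin's theorem (Theorem~\ref{th:Rabin69}) to build a parity tree automaton $\mathcal{B}$ over $\Xi$ that accepts a $\Xi$-labelled tree $t\otimes\nu$ (encoding $t$ together with a valuation $\nu:\dom{t}\to\{0,1\}^\ell$ of the free variables) exactly when $(t,U_1,\dots,U_\ell)\models\phi(X_1,\dots,X_\ell)$, where $U_i=\{u\mid \nu(u)_i=1\}$. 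Projecting away the $\{0,1\}^\ell$ component yields a \emph{nondeterministic} parity tree automaton $\mathcal{B}'$ over $\Sigma$ which, when reading a node labelled $a$, may apply any transition of $\mathcal{B}$ on any label $(a,b)$; by closure of tree automata under projection, $L(\mathcal{B}')=\{t\mid t\models\exists X_1\cdots\exists X_\ell\,\phi\}$.

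Next I would pass to the acceptance game. Consider the parity game $\game_{\mathcal{B}',t}$ of Theorem~\ref{th:GurevichH82}: at a position $(u,q)$ \Eloise chooses a bit-vector $b\in\{0,1\}^\ell$ together with a $\mathcal{B}$-transition on $(t(u),b)$, then \Abelard chooses a direction $i$ and play proceeds from $(u\cdot i,q_i)$. Because $t$ is generated by the CPDA $\mathcal{A}$ and $\mathcal{B}'$ is finite-state, this game is (the $\silent$-contraction of) the synchronised product of $\mathcal{A}$ with $\mathcal{B}'$, hence an order-$n$ CPDA parity game. By Theorem~\ref{th:GurevichH82}, \Eloise wins from the initial vertex iff $t\in L(\mathcal{B}')$, i.e.\ iff $\exists X_1\cdots\exists X_\ell\,\phi$ holds in $t$; by Theorem~\ref{theorem:winning} this is decidable, which settles the decision part of the selection problem. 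If \Eloise loses, there is no selector and the algorithm reports failure.

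If \Eloise wins, I would apply Theorem~\ref{theorem:strategies} to obtain an order-$n$ CPDA $\mathcal{A}'$, synchronised with the arena CPDA, that realises a winning strategy. The key observation is that such a strategy canonically defines the desired valuation: a node $u$ of $t$ is reached by a unique path, so letting \Abelard spell out the directions of $u$ and \Eloise follow her strategy determines a unique play, along which \Eloise commits to a unique bit-vector $\nu(u)$ at $u$; this makes $\nu$ well defined, and since the strategy is winning the induced run of $\mathcal{B}$ on $t\otimes\nu$ is accepting, so $(t,U_1,\dots,U_\ell)\models\phi$. It then remains to generate the $\Xi$-labelled tree $t_\phi=t\otimes\nu$: I would take a product CPDA of $\mathcal{A}$ and $\mathcal{A}'$ whose stacks stay in lockstep by synchronisation, whose domain is $\dom{t}$, and whose labelling of a node $u$ outputs $(t(u),\nu(u))$, with $t(u)$ supplied by $\mathcal{A}$ and $\rho$ and $\nu(u)$ supplied by \Eloise's move under $\mathcal{A}'$. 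A final appeal to Theorem~\ref{theo:equi} converts this CPDA back into an order-$n$ recursion scheme, proving the statement for schemes as well as for CPDA.

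The heavy lifting is done by Theorem~\ref{theorem:strategies}, namely that a winning strategy in a CPDA parity game can itself be presented by a CPDA synchronised with the arena; I would take this as given. Granting it, the main obstacle is the bookkeeping of the last step: organising the synchronised product so that \Eloise's two-part move (the bit-vector and the $\mathcal{B}$-transition) is cleanly separated from \Abelard's direction move, so that the $\silent$-contraction has exactly the branching of $t$, and so that the node labelling reads off both components consistently. As noted in the discussion preceding the theorem, this return trip to the scheme side is routine but produces a scheme bearing no structural resemblance to $\rscheme$.
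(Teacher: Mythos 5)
Your proposal is correct and follows essentially the same route as the paper's proof: reduce to CPDA via the equi-expressivity theorem, turn $\exists X_1\cdots\exists X_\ell\,\phi$ into an acceptance game for a tree automaton over $\Sigma\times\{0,1\}^\ell$ in which \Eloise also guesses the marking, observe this is an order-$n$ CPDA parity game, invoke Theorem~\ref{theorem:strategies} to get a synchronised CPDA strategy, and read the selector off the synchronised product before converting back to a scheme. Your phrasing of the marking-guess as a projected nondeterministic automaton $\mathcal{B}'$ and your explicit argument for the well-definedness of $\nu$ are only cosmetic differences from the paper's presentation.
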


\begin{proof}
Thanks to Theorem~\ref{theo:equi}, it suffices to prove the property for collapsible pushdown automata. 

Let $\phi(X_1,\cdots,X_\ell)$ be an MSO formula with $\ell$ second-order free variables, and let $\mathcal{A}$ be a collapsible pushdown automaton generating a tree $t$. 

Thanks to the well-known equivalence between MSO logic and tree automata (see \eg~\cite{Thomas97}), there is a tree automaton $\mathcal{B}_\phi$ working on $\Sigma\times \{0,1\}^\ell$ trees such that a tree $t_\phi$ is accepted by $\mathcal{B}_\phi$ iff $t_\phi$ is a selector for $\phi$ in $t$.

Recall (see Section~\ref{section:treeAutomata}) that acceptance of a tree by a tree automaton can be seen as existence of a winning strategy in a parity game that is (informally) played as follows. The two players, \Eloise and \Abelard move down the tree a pebble to which is attached a state of the automaton; the play starts at the root (with initial state attached to the pebble); at each round \Eloise provides a valid transition (w.r.t the current state and the current node label) of the automaton and \Abelard moves the pebble to some child and updates the state attached to the pebble according to the transition chosen by \Eloise. In case the pebble reaches a leaf, the play loops forever and \Eloise wins iff the state and the node label are consistent with the acceptance condition on leaves; otherwise the play is infinite and \Eloise wins iff the smallest infinitely often visited priority is even.

For the tree $t_\phi$, the underlying arena of the previous game is essentially a synchronised product of $t_\phi$ with the finite graph corresponding to $\mathcal{B}_\phi$. Now consider a variant of this game where instead of checking whether a given tree $t$ is accepted by $\mathcal{B}_\phi$ the players want to check, for a given tree $t$, whether there exists some $t_\phi$ such that $t_\phi$ is accepted by $\mathcal{B}_\phi$ and $t_\phi$ is a marking of $t$. The game is essentially the same, except that now \Eloise is also giving the marking of the current vertex (\emph{i.e.} $\pi^{-1}$). In this case, the game is a collapsible pushdown game (the arena being obtained as the synchronised product of $t$ defined by a CPDA and the $\mathcal{B}_\phi$ component together with one component where \Eloise is guessing the marking) and one directly checks that \Eloise wins from the root if and only if there is an annotation $t_\phi$ of $t$ that is accepted by $\mathcal{B}_\phi$, \ie $t_\phi$ is a selector for $\phi$ in $t$. Call $\game$ this game and call $\mathcal{A}'$ the underlying CPDA.

Now, apply Theorem~\ref{theorem:strategies} to $\game$. Then either \Eloise has no winning strategy from the initial configuration  and we are done (there is no selector). Otherwise one can effectively construct an $n$-CPDA $\mathcal{B}$ that is \emph{synchronised} with $\mathcal{A}'$ and realises a well-defined \emph{winning} strategy for \Eloise in $\game$ from the initial configuration. As $\mathcal{A}'$ and $\mathcal{B}$ are synchronised, we can consider their synchronised product; call it $\mathcal{A}''$. Hence, in $\mathcal{A}''$ the configurations contain extra informations (coming from $\mathcal{B}$); in particular, for any configuration, if the control state from the $\mathcal{A}'$-component is controlled by \Eloise, then the control state from the $\mathcal{B}$-component indicates the next move \Eloise should play: in particular, it provides a transition of the tree automaton, together with information regarding the marking. Transform $\mathcal{A}''$ by removing every transition that is not consistent with the strategy described by $\mathcal{B}$: then the tree generated by this new CPDA is isomorphic to some $t_\phi$ (that is a marking of $t$) together with an accepting run of $\mathcal{B}_\phi$ on it. Now, if we forget the component from $\mathcal{B}_\phi$ we obtain an $n$-CPDA $\mathcal{A}_\phi$ that generates a selector $t_\phi$ for $\phi$ in $t$.
\end{proof}

\begin{remark}\label{rk:selection_safe}
A similar statement for \emph{safe} schemes as well as for higher-order pushdown automata (\ie collapsible pushdown automata that never perform a collapse) can be deduced from \cite{FrataniPHD,CarayolPHD,CS08}. However, the machinery for general schemes is much more involved.
\end{remark}

\begin{remark}[Selection \emph{vs} Reflection]
In Example~\ref{ex:GMCSelection} we explained how one can reduce the MSO reflection to the MSO selection. In particular Theorem~\ref{theo:selection}  directly implies MSO reflection for recursion  schemes (Theorem~\ref{theo:reflection-MSO}).

One may wonder whether a converse reduction exists, \ie whether one can transform any instance of the selection problem into (possibly several) instance(s) of the reflection problem. 

Think of the simplest case where one deals with a single second-order  free variable. If a reduction from selection to reflexion would exist it would mean that, given any formula $\phi(X)$ and any recursion scheme $\rscheme$ such that $\exists X\, \phi(X)$ holds in the tree $t$ generated by $\rscheme$, then there exists another (possibly more complicated) formula $\psi(x)$ such that if one lets $U=\truth{t}{\psi}$ one has $(t,U)\models \phi(X)$. Note here that  we do not even ask for effectivity in constructing $\psi$ from $\phi$ and that $\psi$ may depend on both $\phi$ and $\rscheme$.

Now remember that selection gives a framework to express the existence of a choice function (see Example~\ref{ex:choicevsselection}; note by the way that one only needs a single free variable here as $X_2$ can safely be defined as the complement of $X_1$). Therefore, if one could reduce selection to reflection, it would mean that there would exist an MSO formula $\psi(x)$ that defines a choice function, which contradicts Corollary~\ref{cor:choice}.

Hence, one can consider that the selection property is strictly more general than the reflection property.
\end{remark}

\bibliographystyle{ACM-Reference-Format}
\bibliography{abbrevs,main}


\newcommand{\noopsort}[1]{} \newcommand{\singleletter}[1]{#1}
  \newcommand{\etal}{et al.}
\begin{thebibliography}{47}


\ifx \showCODEN    \undefined \def \showCODEN     #1{\unskip}     \fi
\ifx \showDOI      \undefined \def \showDOI       #1{#1}\fi
\ifx \showISBNx    \undefined \def \showISBNx     #1{\unskip}     \fi
\ifx \showISBNxiii \undefined \def \showISBNxiii  #1{\unskip}     \fi
\ifx \showISSN     \undefined \def \showISSN      #1{\unskip}     \fi
\ifx \showLCCN     \undefined \def \showLCCN      #1{\unskip}     \fi
\ifx \shownote     \undefined \def \shownote      #1{#1}          \fi
\ifx \showarticletitle \undefined \def \showarticletitle #1{#1}   \fi
\ifx \showURL      \undefined \def \showURL       {\relax}        \fi
\providecommand\bibfield[2]{#2}
\providecommand\bibinfo[2]{#2}
\providecommand\natexlab[1]{#1}
\providecommand\showeprint[2][]{arXiv:#2}

\bibitem[\protect\citeauthoryear{Aehlig}{Aehlig}{2006}]%
        {Aehlig06}
\bibfield{author}{\bibinfo{person}{Klaus Aehlig}.}
  \bibinfo{year}{2006}\natexlab{}.
\newblock \showarticletitle{A Finite Semantics of Simply-Typed Lambda Terms for
  Infinite Runs of Automata}. In \bibinfo{booktitle}{\emph{"Proceedings of
  Computer Science Logic, 15th Annual Conference of the EACSL (CSL~2006)"}}
  \emph{(\bibinfo{series}{Lecture Notes in Computer Science})},
  Vol.~\bibinfo{volume}{4207}. \bibinfo{publisher}{Springer-Verlag},
  \bibinfo{pages}{104--118}.
\newblock


\bibitem[\protect\citeauthoryear{Aehlig, de~Miranda, and Ong}{Aehlig
  et~al\mbox{.}}{2005}]%
        {AdMO05a}
\bibfield{author}{\bibinfo{person}{Klaus Aehlig}, \bibinfo{person}{Jolie de
  Miranda}, {and} \bibinfo{person}{C.-H.~Luke Ong}.}
  \bibinfo{year}{2005}\natexlab{}.
\newblock \showarticletitle{Safety is not a Restriction at Level 2 for String
  Languages}. In \bibinfo{booktitle}{\emph{Proceedings of the 8th International
  Conference on Foundations of Software Science and Computational Structures
  (FoSSaCS 2005)}} \emph{(\bibinfo{series}{Lecture Notes in Computer
  Science})}, Vol.~\bibinfo{volume}{3411}.
  \bibinfo{publisher}{Springer-Verlag}, \bibinfo{pages}{490--501}.
\newblock


\bibitem[\protect\citeauthoryear{Aho and Ullman}{Aho and Ullman}{1971}]%
        {AhoU71}
\bibfield{author}{\bibinfo{person}{Alfred~V. Aho} {and}
  \bibinfo{person}{Jeffrey~D. Ullman}.} \bibinfo{year}{1971}\natexlab{}.
\newblock \showarticletitle{Translations on a Context-Free Grammar}.
\newblock \bibinfo{journal}{\emph{Information and Computation}}
  \bibinfo{volume}{19}, \bibinfo{number}{5} (\bibinfo{year}{1971}),
  \bibinfo{pages}{439--475}.
\newblock


\bibitem[\protect\citeauthoryear{Arnold and {Niwi{\'n}ski}}{Arnold and
  {Niwi{\'n}ski}}{2001}]%
        {AN01}
\bibfield{author}{\bibinfo{person}{Andr\'e Arnold} {and}
  \bibinfo{person}{Damian {Niwi{\'n}ski}}.} \bibinfo{year}{2001}\natexlab{}.
\newblock \bibinfo{booktitle}{\emph{Rudiments of mu-Calculus}}.
  \bibinfo{series}{Studies in Logic and the Foundations of Mathematics},
  Vol.~\bibinfo{volume}{146}.
\newblock \bibinfo{publisher}{Elsevier}.
\newblock


\bibitem[\protect\citeauthoryear{Broadbent, Carayol, Hague, Ong, and
  Serre}{Broadbent et~al\mbox{.}}{2020}]%
        {BCHMOS20}
\bibfield{author}{\bibinfo{person}{Christopher~H. Broadbent},
  \bibinfo{person}{Arnaud Carayol}, \bibinfo{person}{Matthew Hague},
  \bibinfo{person}{Andrzej S. Murawski C.-H.~Luke Ong}, {and}
  \bibinfo{person}{Olivier Serre}.} \bibinfo{year}{2020}\natexlab{}.
\newblock \bibinfo{title}{Collapsible Pushdown Games}.  (\bibinfo{year}{2020}).
\newblock
\urldef\tempurl%
\url{https://www.irif.fr/~serre//PublisMisc/BCHMOS20.pdf}
\showURL{%
\tempurl}


\bibitem[\protect\citeauthoryear{Broadbent, Carayol, Ong, and Serre}{Broadbent
  et~al\mbox{.}}{2010}]%
        {BCOS10}
\bibfield{author}{\bibinfo{person}{Christopher~H. Broadbent},
  \bibinfo{person}{Arnaud Carayol}, \bibinfo{person}{C.-H.~Luke Ong}, {and}
  \bibinfo{person}{Olivier Serre}.} \bibinfo{year}{2010}\natexlab{}.
\newblock \showarticletitle{Recursion Schemes and Logical Reflexion}. In
  \bibinfo{booktitle}{\emph{Proceedings of the 25th Annual IEEE Symposium on
  Logic in Computer Science (LiCS~2010)}}. \bibinfo{publisher}{IEEE Computer
  Society}, \bibinfo{pages}{120--129}.
\newblock


\bibitem[\protect\citeauthoryear{Broadbent and Ong}{Broadbent and Ong}{2009}]%
        {BO09}
\bibfield{author}{\bibinfo{person}{Christopher~H. Broadbent} {and}
  \bibinfo{person}{C.-H.~Luke Ong}.} \bibinfo{year}{2009}\natexlab{}.
\newblock \showarticletitle{On Global Model Checking Trees Generated by
  Higher-Order Recursion Schemes}. In \bibinfo{booktitle}{\emph{Proceedings of
  the 12th International Conference on Foundations of Software Science and
  Computational Structures (FoSSaCS 2009)}} \emph{(\bibinfo{series}{Lecture
  Notes in Computer Science})}, Vol.~\bibinfo{volume}{5504}.
  \bibinfo{publisher}{Springer-Verlag}, \bibinfo{pages}{107--121}.
\newblock


\bibitem[\protect\citeauthoryear{Carayol}{Carayol}{2006}]%
        {CarayolPHD}
\bibfield{author}{\bibinfo{person}{Arnaud Carayol}.}
  \bibinfo{year}{2006}\natexlab{}.
\newblock \emph{\bibinfo{title}{Automates infinis, logiques et langages}}.
\newblock \bibinfo{thesistype}{Ph.D. Dissertation}.
  \bibinfo{school}{Universit\'e de {R}ennes 1}.
\newblock


\bibitem[\protect\citeauthoryear{Carayol}{Carayol}{2019}]%
        {CarayolHDR}
\bibfield{author}{\bibinfo{person}{Arnaud Carayol}.}
  \bibinfo{year}{2019}\natexlab{}.
\newblock \emph{\bibinfo{title}{Automata, Logics and Games for Infinite
  Trees}}.
\newblock habilitation. \bibinfo{school}{Universit\'e Paris Est}.
\newblock


\bibitem[\protect\citeauthoryear{Carayol and L{\"o}ding}{Carayol and
  L{\"o}ding}{2007}]%
        {CarayolL07}
\bibfield{author}{\bibinfo{person}{Arnaud Carayol} {and}
  \bibinfo{person}{Christof L{\"o}ding}.} \bibinfo{year}{2007}\natexlab{}.
\newblock \showarticletitle{MSO on the Infinite Binary Tree: Choice and Order}.
  In \bibinfo{booktitle}{\emph{"Proceedings of Computer Science Logic, 21st
  Annual Conference of the EACSL (CSL~2007)"}} \emph{(\bibinfo{series}{Lecture
  Notes in Computer Science})}, Vol.~\bibinfo{volume}{4646}.
  \bibinfo{publisher}{Springer-Verlag}, \bibinfo{pages}{161--176}.
\newblock


\bibitem[\protect\citeauthoryear{Carayol, Meyer, Hague, Ong, and Serre}{Carayol
  et~al\mbox{.}}{2008}]%
        {CHMOS08}
\bibfield{author}{\bibinfo{person}{Arnaud Carayol}, \bibinfo{person}{Antoine
  Meyer}, \bibinfo{person}{Matthew Hague}, \bibinfo{person}{C.-H.~Luke Ong},
  {and} \bibinfo{person}{Olivier Serre}.} \bibinfo{year}{2008}\natexlab{}.
\newblock \showarticletitle{Winning Regions of Higher-Order Pushdown Games}. In
  \bibinfo{booktitle}{\emph{Proceedings of the 23rd Annual IEEE Symposium on
  Logic in Computer Science (LiCS~2008)}}. \bibinfo{publisher}{IEEE Computer
  Society}, \bibinfo{pages}{193--204}.
\newblock


\bibitem[\protect\citeauthoryear{Carayol and Serre}{Carayol and Serre}{2012}]%
        {CS12}
\bibfield{author}{\bibinfo{person}{Arnaud Carayol} {and}
  \bibinfo{person}{Olivier Serre}.} \bibinfo{year}{2012}\natexlab{}.
\newblock \showarticletitle{Collapsible Pushdown Automata and Labeled Recursion
  Schemes: Equivalence, Safety and Effective Selection}. In
  \bibinfo{booktitle}{\emph{Proceedings of the 27th Annual IEEE Symposium on
  Logic in Computer Science (LiCS~2012)}}. \bibinfo{publisher}{IEEE Computer
  Society}, \bibinfo{pages}{165--174}.
\newblock


\bibitem[\protect\citeauthoryear{Carayol and Slaats}{Carayol and
  Slaats}{2008}]%
        {CS08}
\bibfield{author}{\bibinfo{person}{Arnaud Carayol} {and}
  \bibinfo{person}{Michaela Slaats}.} \bibinfo{year}{2008}\natexlab{}.
\newblock \showarticletitle{Positional Strategies for Higher-Order Pushdown
  Parity Games}. In \bibinfo{booktitle}{\emph{Proceedings of the 33rd
  Symposium, Mathematical Foundations of Computer Science (MFCS~2008)}}
  \emph{(\bibinfo{series}{Lecture Notes in Computer Science})},
  Vol.~\bibinfo{volume}{5162}. \bibinfo{publisher}{Springer-Verlag},
  \bibinfo{pages}{217--228}.
\newblock


\bibitem[\protect\citeauthoryear{Carayol and W{\"o}hrle}{Carayol and
  W{\"o}hrle}{2003}]%
        {Carayol03}
\bibfield{author}{\bibinfo{person}{Arnaud Carayol} {and}
  \bibinfo{person}{Stefan W{\"o}hrle}.} \bibinfo{year}{2003}\natexlab{}.
\newblock \showarticletitle{The {C}aucal Hierarchy of Infinite Graphs in Terms
  of Logic and Higher-Order Pushdown Automata}. In
  \bibinfo{booktitle}{\emph{Proceedings of the 23rd International Conference on
  Foundations of Software Technology and Theoretical Computer Science
  (FST\&TCS~2003)}} \emph{(\bibinfo{series}{Lecture Notes in Computer
  Science})}, Vol.~\bibinfo{volume}{2914}.
  \bibinfo{publisher}{Springer-Verlag}, \bibinfo{pages}{112--123}.
\newblock


\bibitem[\protect\citeauthoryear{Caucal}{Caucal}{2002}]%
        {Caucal02}
\bibfield{author}{\bibinfo{person}{Didier Caucal}.}
  \bibinfo{year}{2002}\natexlab{}.
\newblock \showarticletitle{On Infinite Terms Having a Decidable Monadic
  Theory}. In \bibinfo{booktitle}{\emph{Proceedings of the 27th Symposium,
  Mathematical Foundations of Computer Science (MFCS~2002)}}
  \emph{(\bibinfo{series}{Lecture Notes in Computer Science})},
  Vol.~\bibinfo{volume}{2420}. \bibinfo{publisher}{Springer-Verlag},
  \bibinfo{pages}{165--176}.
\newblock


\bibitem[\protect\citeauthoryear{Church and Rosser}{Church and Rosser}{1936}]%
        {ChurchR1936}
\bibfield{author}{\bibinfo{person}{Alonzo Church} {and}
  \bibinfo{person}{J.~Barkley Rosser}.} \bibinfo{year}{1936}\natexlab{}.
\newblock \showarticletitle{Some Properties of Conversion}.
\newblock \bibinfo{journal}{\emph{Trans. Amer. Math. Soc.}}
  \bibinfo{volume}{39}, \bibinfo{number}{3} (\bibinfo{date}{mar}
  \bibinfo{year}{1936}), \bibinfo{pages}{472--472}.
\newblock


\bibitem[\protect\citeauthoryear{Courcelle}{Courcelle}{1994}]%
        {Courcelle94}
\bibfield{author}{\bibinfo{person}{Bruno Courcelle}.}
  \bibinfo{year}{1994}\natexlab{}.
\newblock \showarticletitle{Monadic Second-Order Definable Graph Transductions:
  A Survey.}
\newblock \bibinfo{journal}{\emph{Theoretical Computer Science}}
  \bibinfo{volume}{126}, \bibinfo{number}{1} (\bibinfo{year}{1994}),
  \bibinfo{pages}{53--75}.
\newblock


\bibitem[\protect\citeauthoryear{Courcelle}{Courcelle}{1995}]%
        {Courcelle95}
\bibfield{author}{\bibinfo{person}{Bruno Courcelle}.}
  \bibinfo{year}{1995}\natexlab{}.
\newblock \showarticletitle{The Monadic Second-Order Logic of Graphs {IX}:
  Machines and their Behaviours}.
\newblock \bibinfo{journal}{\emph{Theoretical Computer Science}}
  \bibinfo{volume}{151} (\bibinfo{year}{1995}), \bibinfo{pages}{125--162}.
\newblock


\bibitem[\protect\citeauthoryear{Damm and Goerdt}{Damm and Goerdt}{1986}]%
        {DG86}
\bibfield{author}{\bibinfo{person}{Werner Damm} {and} \bibinfo{person}{Andreas
  Goerdt}.} \bibinfo{year}{1986}\natexlab{}.
\newblock \showarticletitle{An Automata-Theoretical Characterization of the
  OI-Hierarchy}.
\newblock \bibinfo{journal}{\emph{Information and Computation}}
  \bibinfo{volume}{71} (\bibinfo{year}{1986}), \bibinfo{pages}{1--32}.
\newblock


\bibitem[\protect\citeauthoryear{Ebbinghaus, Flum, and Thomas}{Ebbinghaus
  et~al\mbox{.}}{1996}]%
        {EFT84}
\bibfield{author}{\bibinfo{person}{Heinz-Dieter Ebbinghaus},
  \bibinfo{person}{J{\"o}rg Flum}, {and} \bibinfo{person}{Wolfgang Thomas}.}
  \bibinfo{year}{1996}\natexlab{}.
\newblock \bibinfo{booktitle}{\emph{Mathematical Logic}
  (\bibinfo{edition}{second edition} ed.)}.
\newblock \bibinfo{publisher}{Springer-Verlag}.
\newblock


\bibitem[\protect\citeauthoryear{Engelfriet, Hoogeboom, and van
  Best}{Engelfriet et~al\mbox{.}}{1999}]%
        {EngelfrietHB99}
\bibfield{author}{\bibinfo{person}{Joost Engelfriet},
  \bibinfo{person}{Hendrik~Jan Hoogeboom}, {and} \bibinfo{person}{Jan{-}Pascal
  van Best}.} \bibinfo{year}{1999}\natexlab{}.
\newblock \showarticletitle{Trips on Trees}.
\newblock \bibinfo{journal}{\emph{Acta Cybernetica}} \bibinfo{volume}{14},
  \bibinfo{number}{1} (\bibinfo{year}{1999}), \bibinfo{pages}{51--64}.
\newblock


\bibitem[\protect\citeauthoryear{Fratani}{Fratani}{2006}]%
        {FrataniPHD}
\bibfield{author}{\bibinfo{person}{S\'everine Fratani}.}
  \bibinfo{year}{2006}\natexlab{}.
\newblock \emph{\bibinfo{title}{Automates \`a piles de piles \ldots de piles}}.
\newblock \bibinfo{thesistype}{Ph.D. Dissertation}.
  \bibinfo{school}{Universit\'e de {B}ordeaux}.
\newblock


\bibitem[\protect\citeauthoryear{Grellois and Melli{\`{e}}s}{Grellois and
  Melli{\`{e}}s}{2015}]%
        {GrelloisM15}
\bibfield{author}{\bibinfo{person}{Charles Grellois} {and}
  \bibinfo{person}{Paul{-}Andr{\'{e}} Melli{\`{e}}s}.}
  \bibinfo{year}{2015}\natexlab{}.
\newblock \showarticletitle{Finitary Semantics of Linear Logic and Higher-Order
  Model-Checking} \emph{(\bibinfo{series}{Lecture Notes in Computer Science})},
  Vol.~\bibinfo{volume}{9234}. \bibinfo{publisher}{Springer-Verlag},
  \bibinfo{pages}{256--268}.
\newblock


\bibitem[\protect\citeauthoryear{Gurevich and Harrington}{Gurevich and
  Harrington}{1982}]%
        {GurevichH82}
\bibfield{author}{\bibinfo{person}{Yuri Gurevich} {and} \bibinfo{person}{Leo
  Harrington}.} \bibinfo{year}{1982}\natexlab{}.
\newblock \showarticletitle{Trees, Automata, and Games}. In
  \bibinfo{booktitle}{\emph{Proceedings of the Fourteenth Annual ACM Symposium
  on the Theory of Computing (STOC'82)}}. \bibinfo{publisher}{ACM},
  \bibinfo{pages}{60--65}.
\newblock


\bibitem[\protect\citeauthoryear{Gurevich and Shelah}{Gurevich and
  Shelah}{1983}]%
        {GurevichS83a}
\bibfield{author}{\bibinfo{person}{Yuri Gurevich} {and}
  \bibinfo{person}{Saharon Shelah}.} \bibinfo{year}{1983}\natexlab{}.
\newblock \showarticletitle{Rabin's Uniformization Problem}.
\newblock \bibinfo{journal}{\emph{Journal of Symbolic Logic}}
  \bibinfo{volume}{48}, \bibinfo{number}{4} (\bibinfo{year}{1983}),
  \bibinfo{pages}{1105--1119}.
\newblock


\bibitem[\protect\citeauthoryear{Haddad}{Haddad}{2012}]%
        {Haddad12}
\bibfield{author}{\bibinfo{person}{Axel Haddad}.}
  \bibinfo{year}{2012}\natexlab{}.
\newblock \showarticletitle{IO vs OI in Higher-Order Recursion Schemes}. In
  \bibinfo{booktitle}{\emph{Proceedings 8th Workshop on Fixed Points in
  Computer Science}} \emph{(\bibinfo{series}{Electronic Proceedings in
  Theoretical Computer Science})}, Vol.~\bibinfo{volume}{77}.
  \bibinfo{pages}{23--30}.
\newblock


\bibitem[\protect\citeauthoryear{Haddad}{Haddad}{2013a}]%
        {Haddad13}
\bibfield{author}{\bibinfo{person}{Axel Haddad}.}
  \bibinfo{year}{2013}\natexlab{a}.
\newblock \showarticletitle{Model Checking and Functional Program
  Transformations}. In \bibinfo{booktitle}{\emph{Proceedings of the 33rd
  International Conference on Foundations of Software Technology and
  Theoretical Computer Science (FST\&TCS~2013)}}
  \emph{(\bibinfo{series}{LIPIcs})}, Vol.~\bibinfo{volume}{24}.
  \bibinfo{publisher}{Schloss Dagstuhl - Leibniz-Zentrum f{\"{u}}r Informatik},
  \bibinfo{pages}{115--126}.
\newblock


\bibitem[\protect\citeauthoryear{Haddad}{Haddad}{2013b}]%
        {HaddadPHD}
\bibfield{author}{\bibinfo{person}{Axel Haddad}.}
  \bibinfo{year}{2013}\natexlab{b}.
\newblock \emph{\bibinfo{title}{Shape-Preserving Transformations of
  Higher-Order Recursion Schemes}}.
\newblock \bibinfo{thesistype}{Ph.D. Dissertation}.
  \bibinfo{school}{Universit\'e Paris Diderot - Paris 7}.
\newblock


\bibitem[\protect\citeauthoryear{Hague, Murawski, Ong, and Serre}{Hague
  et~al\mbox{.}}{2008}]%
        {HMOS08}
\bibfield{author}{\bibinfo{person}{Matthew Hague}, \bibinfo{person}{Andrzej~S.
  Murawski}, \bibinfo{person}{C.-H.~Luke Ong}, {and} \bibinfo{person}{Olivier
  Serre}.} \bibinfo{year}{2008}\natexlab{}.
\newblock \showarticletitle{Collapsible Pushdown Automata and Recursion
  Schemes}. In \bibinfo{booktitle}{\emph{Proceedings of the 23rd Annual IEEE
  Symposium on Logic in Computer Science (LiCS~2008)}}.
  \bibinfo{publisher}{IEEE Computer Society}, \bibinfo{pages}{452--461}.
\newblock


\bibitem[\protect\citeauthoryear{Hague, Murawski, Ong, and Serre}{Hague
  et~al\mbox{.}}{2017}]%
        {HMOS17}
\bibfield{author}{\bibinfo{person}{Matthew Hague}, \bibinfo{person}{Andrzej~S.
  Murawski}, \bibinfo{person}{C.{-}H.~Luke Ong}, {and} \bibinfo{person}{Olivier
  Serre}.} \bibinfo{year}{2017}\natexlab{}.
\newblock \showarticletitle{Collapsible Pushdown Automata and Recursion
  Schemes}.
\newblock \bibinfo{journal}{\emph{ACM Transactions on Computational Logic}}
  \bibinfo{volume}{18}, \bibinfo{number}{3} (\bibinfo{year}{2017}),
  \bibinfo{pages}{25:1--25:42}.
\newblock


\bibitem[\protect\citeauthoryear{Hyland and Ong}{Hyland and Ong}{2000}]%
        {HO00}
\bibfield{author}{\bibinfo{person}{J.~Martin~E. Hyland} {and}
  \bibinfo{person}{C.-H.~Luke Ong}.} \bibinfo{year}{2000}\natexlab{}.
\newblock \showarticletitle{{On Full Abstraction for PCF: I. Models,
  Observables and the Full Abstraction Problem, II. Dialogue Games and Innocent
  Strategies, III. A fully Abstract and Universal Game Model}}.
\newblock \bibinfo{journal}{\emph{Information and Computation}}
  \bibinfo{volume}{163} (\bibinfo{year}{2000}), \bibinfo{pages}{285--408}.
\newblock


\bibitem[\protect\citeauthoryear{Janin and Walukiewicz}{Janin and
  Walukiewicz}{1996}]%
        {JW96}
\bibfield{author}{\bibinfo{person}{David Janin} {and} \bibinfo{person}{Igor
  Walukiewicz}.} \bibinfo{year}{1996}\natexlab{}.
\newblock \showarticletitle{On the Expressive Completeness of the Propositional
  mu-Calculus with Respect to Monadic Second Order Logic}. In
  \bibinfo{booktitle}{\emph{Proceedings of the 7th International Conference on
  Concurrency Theory (CONCUR~1996)}} \emph{(\bibinfo{series}{Lecture Notes in
  Computer Science})}, Vol.~\bibinfo{volume}{1119}.
  \bibinfo{publisher}{Springer-Verlag}, \bibinfo{pages}{263--277}.
\newblock


\bibitem[\protect\citeauthoryear{Knapik, Niwi{\'n}ski, and Urzyczyn}{Knapik
  et~al\mbox{.}}{2001}]%
        {KNU01}
\bibfield{author}{\bibinfo{person}{Teodor Knapik}, \bibinfo{person}{Damian
  Niwi{\'n}ski}, {and} \bibinfo{person}{Pawe{\l} Urzyczyn}.}
  \bibinfo{year}{2001}\natexlab{}.
\newblock \showarticletitle{Deciding Monadic Theories of Hyperalgebraic Trees}.
  In \bibinfo{booktitle}{\emph{Proceedings of the 5th conference on Typed
  Lambda Calculi and Applications (TLCA~2001)}} \emph{(\bibinfo{series}{Lecture
  Notes in Computer Science})}, Vol.~\bibinfo{volume}{2044}.
  \bibinfo{publisher}{Springer-Verlag}, \bibinfo{pages}{253--267}.
\newblock


\bibitem[\protect\citeauthoryear{Knapik, Niwi{\'n}ski, and Urzyczyn}{Knapik
  et~al\mbox{.}}{2002}]%
        {KNU02}
\bibfield{author}{\bibinfo{person}{Teodor Knapik}, \bibinfo{person}{Damian
  Niwi{\'n}ski}, {and} \bibinfo{person}{Pawe{\l} Urzyczyn}.}
  \bibinfo{year}{2002}\natexlab{}.
\newblock \showarticletitle{Higher-Order Pushdown Trees Are Easy}. In
  \bibinfo{booktitle}{\emph{Proceedings of the 5th International Conference on
  Foundations of Software Science and Computation Structures (FoSSaCS 2002)}}
  \emph{(\bibinfo{series}{Lecture Notes in Computer Science})},
  Vol.~\bibinfo{volume}{2303}. \bibinfo{publisher}{Springer-Verlag},
  \bibinfo{pages}{205--222}.
\newblock


\bibitem[\protect\citeauthoryear{Knapik, Niwi{\'n}ski, Urzyczyn, and
  Walukiewicz}{Knapik et~al\mbox{.}}{2005}]%
        {KNUW05}
\bibfield{author}{\bibinfo{person}{Teodor Knapik}, \bibinfo{person}{Damian
  Niwi{\'n}ski}, \bibinfo{person}{Pawel Urzyczyn}, {and} \bibinfo{person}{Igor
  Walukiewicz}.} \bibinfo{year}{2005}\natexlab{}.
\newblock \showarticletitle{Unsafe Grammars and Panic Automata}. In
  \bibinfo{booktitle}{\emph{Proceedings of the 32nd International Colloquium on
  Automata, Languages, and Programming (ICALP~2005)}}
  \emph{(\bibinfo{series}{Lecture Notes in Computer Science})},
  Vol.~\bibinfo{volume}{3580}. \bibinfo{publisher}{Springer-Verlag},
  \bibinfo{pages}{1450--1461}.
\newblock


\bibitem[\protect\citeauthoryear{Kobayashi}{Kobayashi}{2009}]%
        {Kobayashi09}
\bibfield{author}{\bibinfo{person}{Naoki Kobayashi}.}
  \bibinfo{year}{2009}\natexlab{}.
\newblock \showarticletitle{Types and Higher-Order Recursion Schemes for
  Verification of Higher-Order Programs}. In
  \bibinfo{booktitle}{\emph{Proceedings of the 36th ACM SIGPLAN-SIGACT
  Symposium on Principles of Programming Languages (POPL~2009)}}.
  \bibinfo{publisher}{ACM}, \bibinfo{pages}{416--428}.
\newblock


\bibitem[\protect\citeauthoryear{Kobayashi and Ong}{Kobayashi and Ong}{2009}]%
        {KO09}
\bibfield{author}{\bibinfo{person}{Naoki Kobayashi} {and}
  \bibinfo{person}{C.-H.~Luke Ong}.} \bibinfo{year}{2009}\natexlab{}.
\newblock \showarticletitle{A Type System Equivalent to the Modal Mu-Calculus
  Model Checking of Higher-Order Recursion Schemes}. In
  \bibinfo{booktitle}{\emph{Proceedings of the 24th Annual IEEE Symposium on
  Logic in Computer Science (LiCS~2009)}}. \bibinfo{publisher}{IEEE Computer
  Society}, \bibinfo{pages}{179--188}.
\newblock


\bibitem[\protect\citeauthoryear{Ong}{Ong}{2006}]%
        {Ong06a}
\bibfield{author}{\bibinfo{person}{C.-H.~Luke Ong}.}
  \bibinfo{year}{2006}\natexlab{}.
\newblock \showarticletitle{On Model-Checking Trees Generated by Higher-Order
  Recursion Schemes}. In \bibinfo{booktitle}{\emph{Proceedings of the 21st
  Annual IEEE Symposium on Logic in Computer Science (LiCS~2006)}}.
  \bibinfo{publisher}{IEEE Computer Society}, \bibinfo{pages}{81--90}.
\newblock


\bibitem[\protect\citeauthoryear{Parys}{Parys}{2012}]%
        {Parys12}
\bibfield{author}{\bibinfo{person}{Pawe{\l} Parys}.}
  \bibinfo{year}{2012}\natexlab{}.
\newblock \showarticletitle{On the Significance of the Collapse Operation}. In
  \bibinfo{booktitle}{\emph{Proceedings of the 27th Annual IEEE Symposium on
  Logic in Computer Science (LiCS~2012)}}. \bibinfo{publisher}{IEEE Computer
  Society}, \bibinfo{pages}{521--530}.
\newblock


\bibitem[\protect\citeauthoryear{Parys}{Parys}{2018}]%
        {Parys18}
\bibfield{author}{\bibinfo{person}{Pawel Parys}.}
  \bibinfo{year}{2018}\natexlab{}.
\newblock \showarticletitle{Recursion Schemes and the {WMSO+U} Logic}. In
  \bibinfo{booktitle}{\emph{Proceedings of the 35th Symposium on Theoretical
  Aspects of Computer Science (STACS~2018)}} \emph{(\bibinfo{series}{LIPIcs})},
  Vol.~\bibinfo{volume}{96}. \bibinfo{publisher}{Schloss Dagstuhl -
  Leibniz-Zentrum f{\"{u}}r Informatik}, \bibinfo{pages}{53:1--53:16}.
\newblock


\bibitem[\protect\citeauthoryear{Piterman and Vardi}{Piterman and
  Vardi}{2004}]%
        {PV04}
\bibfield{author}{\bibinfo{person}{Nir Piterman} {and}
  \bibinfo{person}{Moshe~Y. Vardi}.} \bibinfo{year}{2004}\natexlab{}.
\newblock \showarticletitle{Global Model-Checking of Infinite-State Systems}.
  In \bibinfo{booktitle}{\emph{Proceedings of the 16th International Conference
  on Computer Aided Verification (CAV~2004)}} \emph{(\bibinfo{series}{Lecture
  Notes in Computer Science})}, Vol.~\bibinfo{volume}{3114}.
  \bibinfo{publisher}{Springer-Verlag}, \bibinfo{pages}{387--400}.
\newblock


\bibitem[\protect\citeauthoryear{Rabin}{Rabin}{1969}]%
        {Rabin69}
\bibfield{author}{\bibinfo{person}{Michael~O. Rabin}.}
  \bibinfo{year}{1969}\natexlab{}.
\newblock \showarticletitle{Decidability of Second-Order Theories and Automata
  on Infinite Trees}.
\newblock \bibinfo{journal}{\emph{Trans. Amer. Math. Soc.}}
  \bibinfo{volume}{141} (\bibinfo{year}{1969}), \bibinfo{pages}{1--35}.
\newblock


\bibitem[\protect\citeauthoryear{Salvati and Walukiewicz}{Salvati and
  Walukiewicz}{2013}]%
        {SW13}
\bibfield{author}{\bibinfo{person}{Sylvain Salvati} {and} \bibinfo{person}{Igor
  Walukiewicz}.} \bibinfo{year}{2013}\natexlab{}.
\newblock \showarticletitle{Using Models to Model-Check Recursive Schemes}. In
  \bibinfo{booktitle}{\emph{Proceedings of the 11th International Conference on
  Typed Lambda Calculi and Applications (TLCA'13)}}
  \emph{(\bibinfo{series}{Lecture Notes in Computer Science})},
  Vol.~\bibinfo{volume}{7941}. \bibinfo{publisher}{Springer-Verlag},
  \bibinfo{pages}{189--204}.
\newblock


\bibitem[\protect\citeauthoryear{Streett and Emerson}{Streett and
  Emerson}{1989}]%
        {StreettE89}
\bibfield{author}{\bibinfo{person}{Robert~S. Streett} {and}
  \bibinfo{person}{E.~Allen Emerson}.} \bibinfo{year}{1989}\natexlab{}.
\newblock \showarticletitle{An Automata Theoretic Decision Procedure for the
  Propositional Mu-Calculus}.
\newblock \bibinfo{journal}{\emph{Inf. Comput.}} \bibinfo{volume}{81},
  \bibinfo{number}{3} (\bibinfo{year}{1989}), \bibinfo{pages}{249--264}.
\newblock
\urldef\tempurl%
\url{https://doi.org/10.1016/0890-5401(89)90031-X}
\showDOI{\tempurl}


\bibitem[\protect\citeauthoryear{Thomas}{Thomas}{1997}]%
        {Thomas97}
\bibfield{author}{\bibinfo{person}{Wolfgang Thomas}.}
  \bibinfo{year}{1997}\natexlab{}.
\newblock \showarticletitle{Languages, Automata, and Logic}.
\newblock In \bibinfo{booktitle}{\emph{Handbook of Formal Language Theory}},
  \bibfield{editor}{\bibinfo{person}{G.~Rozenberg} {and}
  \bibinfo{person}{A.~Salomaa}} (Eds.). Vol.~\bibinfo{volume}{III}.
  \bibinfo{publisher}{Springer-Verlag}, \bibinfo{pages}{389--455}.
\newblock


\bibitem[\protect\citeauthoryear{Walukiewicz}{Walukiewicz}{2004}]%
        {Walukiewicz04}
\bibfield{author}{\bibinfo{person}{Igor Walukiewicz}.}
  \bibinfo{year}{2004}\natexlab{}.
\newblock \showarticletitle{A Landscape with Games in the Background}. In
  \bibinfo{booktitle}{\emph{Proceedings of the 19th Annual IEEE Symposium on
  Logic in Computer Science (LiCS~2004)}}. \bibinfo{publisher}{Computer Society
  Press}, \bibinfo{pages}{356--366}.
\newblock


\bibitem[\protect\citeauthoryear{Wilke}{Wilke}{2001}]%
        {Wilke2001}
\bibfield{author}{\bibinfo{person}{Thomas Wilke}.}
  \bibinfo{year}{2001}\natexlab{}.
\newblock \showarticletitle{Alternating Tree Automata, Parity Games and Modal
  $\mu$-Calculus}.
\newblock \bibinfo{journal}{\emph{Bulletin of the Belgian Mathematical
  Society}} \bibinfo{volume}{8}, \bibinfo{number}{2} (\bibinfo{year}{2001}),
  \bibinfo{pages}{359--391}.
\newblock


\end{thebibliography}

\end{document}